\theoremstyle{definition}
\newtheorem{theorem}{Theorem} 
\newtheorem{lemma}{Lemma} 
\newtheorem{proposition}{Proposition} 
\newtheorem{remark}{Remark} 
\newtheorem{corollary}{Corollary}
\newcommand{\acro}[1]{\textsc{#1}\xspace }
\newcommand{\MHSE}{\acro{\smaller MHSE}}
\newcommand{\KDE}{\acro{\smaller KDE}}
\newcommand{\Hscore}{$\mathcal{H}$-score}
\newcommand{\Hposterior}{$\mathcal{H}$-posterior}
\newcommand{\HBayes}{$\mathcal{H}$-Bayes}
\newcommand{\NLP}{\acro{\smaller NLP}}
\newcommand{\LP}{\acro{\smaller LP}}
\newcommand{\DPMM}{\acro{\smaller DPMM}}
\newcommand{\TGFB}{\acro{\smaller TGF-$\beta$}}
\newcommand{\DLD}{\acro{\smaller DLD}}
\newcommand{\SMIC}{\acro{\smaller SMIC}} 
\newcommand{\hyvarinen}{Hyv\"arinen }
\definecolor{TukeyGreen}{RGB}{49, 163, 84}
\newcommand{\R}{\textit{R}}
\newcommand{\stan}{\textit{stan}}
\newcommand{\dirichletprocess}{\textit{dirichletprocess} }
\newcommand{\jack}[1]{{{\color{red}Jack: [#1]}}}
\newcommand{\new}[1]{{{\color{black}#1}}}
\def\*#1{\bm{#1}} 
\DeclareMathOperator*{\argmax}{arg\,max} 
\DeclareMathOperator*{\argmin}{arg\,min}
\def\@fnsymbol#1{\ensuremath{\ifcase#1\or *\or \dagger\or \ddagger\or
   \mathsection\or \mathparagraph\or \|\or **\or \dagger\dagger
   \or \ddagger\ddagger \else\@ctrerr\fi}}
\newcommand{\ssymbol}[1]{^{\@fnsymbol{#1}}}
\let\OLDthebibliography\thebibliography
\renewcommand\thebibliography[1]{
  \OLDthebibliography{#1}
  \setlength{\parskip}{0pt}
  \setlength{\itemsep}{3pt plus 0.3ex}
}
\def\*#1{\bm{#1}} 
\title{\bf  {General Bayesian Loss Function Selection and the use of Improper Models}}
\author[1,2]{Jack Jewson}
\author[1,2]{David Rossell}
\affil[1]{Department of Business and Economics, Universitat Pompeu Fabra, Barcelona, Spain}
\affil[2]{Data Science Center, Barcelona Graduate School of Economics, Spain}
\affil[ ]{\textit {\textcolor{blue}{jack.jewson@upf.edu, david.rossell@upf.edu}}}
\date{March 2022}
\begin{document}




\def\spacingset#1{\renewcommand{\baselinestretch}%
{#1}\small\normalsize} \spacingset{1}

\setcounter{Maxaffil}{0}
\renewcommand\Affilfont{\itshape\small}

\spacingset{1.42} 

\maketitle
\begin{abstract}
Statisticians often face the choice between using probability models or a paradigm defined by minimising a loss function.  Both approaches are useful and, if the loss can be re-cast into a proper probability model, there are many tools to decide which model or loss is more appropriate for the observed data, in the sense of explaining the data's nature. However, when the loss leads to an improper model, there are no principled ways to guide this choice. We address this task by combining the \hyvarinen score, which naturally targets infinitesimal relative probabilities, and general Bayesian updating, which provides a unifying framework for inference on losses and models. Specifically we propose the \Hscore,  a general Bayesian selection criterion  and prove that it consistently selects the (possibly improper) model closest to the data-generating truth in Fisher's divergence.  We also prove that an associated \Hposterior{} consistently learns optimal hyper-parameters featuring in loss functions, including a challenging tempering parameter in generalised Bayesian inference.  As salient examples, we consider robust regression and non-parametric density estimation where popular loss functions define improper models for the data and hence cannot be dealt with using standard model selection tools. These examples illustrate advantages in robustness-efficiency trade-offs and enable Bayesian inference for kernel density estimation, opening a new avenue for Bayesian non-parametrics. 

\end{abstract}

\noindent %
{\it Keywords:}  Loss functions; Improper models;  General Bayes;  \hyvarinen score; Robust regression; Kernel density estimation.

\spacingset{1.45} 

\section{Introduction}




A common task in Statistics is selecting which amongst a set of models is most appropriate for an observed data set $y$. 
Tools to address this problem include a variety of penalised likelihood, shrinkage prior, and Bayesian model selection methods. 
Under suitable conditions, these approaches consistently select the model closest to the data-generating truth in Kullback-Leibler divergence (for example, see \cite{rossell:2021} and references therein for a recent discussion). 
However, many data analysis methods are not defined in terms of probability models but as minimising a given loss function, 
for example to gain robustness or flexibility. 
It is then no longer clear how to use the data to guide the choice of the most appropriate loss function, or its associated hyper-parameters. 
A key observation is that while the likelihood $f_k(y;\theta_k)$ of a model $k$ with parameters $\theta_k$ always defines a loss function $\ell_k(y;\theta_k) = -\log f_k(y;\theta_k)$ \citep{good1952rational}, the converse is not true. The exponential of an arbitrary loss $f_k(y;\theta_l) = \exp\{-\ell_k(y;\theta_k)\}$ may not integrate to a finite constant and therefore, defines an improper model on $y$. For example, this occurs in robust regression with Tukey's loss (Figure \ref{Fig:TukeysLoss}) and in kernel density estimation. 
\new{There is also a growing literature following \cite{basu1998robust} that applies the Tsallis score to define a robust loss function to fit any given probability model. Such robustification depends on a hyper-parameter that governs robustness-efficiency trade-offs and often leads to an improper model similar to Figure \ref{Fig:TukeysLoss} (see Figure \ref{Fig:TsallisTukeysLoss}, and \cite{yonekura2021adaptation} for a recent pre-print building on our improper model interpretation to address the hyper-parameter selection).}
In these scenarios traditional model selection tools are not applicable to choose the more appropriate loss. Neither are methods to evaluate predictive performance such as cross-validation, since they require specifying a loss or criterion to evaluate performance in the first place, and do not attain consistent model selection even in simple settings \citep{shao1997asymptotic}.
Methods to tackle intractable but finite normalisation constants, such as approximate Bayesian computation \citep{beaumont2002approximate,robert2016approximate}, also do not apply since they require simulating from a proper model.

We propose methodology
to evaluate how well each given loss $\ell_k$ captures the unknown data-generating distribution $g(y)$.
\new{The main idea is viewing $\ell_k$} as defining a (possibly improper) model $\exp \{ -\ell_k(y,\theta_k) \}$, and then measuring how well it approximates $g(y)$ via Fisher's divergence. 
As we shall see, Fisher's divergence and its related \hyvarinen score \citep{hyvarinen2005estimation} do not depend on normalising constants, \color{black} and in fact they allow for such constant to be infinite, \color{black} hence giving a strategy to compare improper models. 
Note that one could conceivably define the likelihood in ways other than $\exp \{ -\ell_k(y,\theta_k) \}$. However, defining losses as negative log-likelihoods provides the only smooth, local, proper scoring rule \citep{bernardo1979expected}, and is also the only transformation that leads to consistent parameter estimation for a certain general class of likelihoods \citep{bissiri:2012}. Further, it seems reasonable that the loss should be additive over independent pieces of information, and that the likelihood of an improper model should factorize under such independence, and for both properties to hold one must take the exponent of the negative loss.  
Our framework consistently selects the best model  in Fisher's divergence, and in particular the (proper) data-generating model $g(y)$ if it is under consideration. We also show how, after a model is chosen, one can learn important hyper-parameters such as the likelihood tempering in generalised Bayes and PAC-Bayes, robustness-efficiency trade-offs in regression and the level of smoothing in kernel density estimation. 
\new{For clarity and space we focus on continuous real-valued $y=(y_1,\ldots,y_n)$ with full support on $\mathbb{R}^n$. Our ideas can be extended to other settings such as discrete or positive $y$, e.g. following \cite{hyvarinen2007some}. However, these are slightly less interesting in our context. Improper models cannot occur for discrete $y$ with finite support, and one may log-transform a positive outcome and subsequently apply our methodology.}

The use of probability models versus algorithms is one of the most fundamental, long-standing debates in Statistics.  
In an influential piece, \cite{breiman2001statistical} argued that models are not realistic enough to represent reality in a useful manner, nor flexible enough to predict accurately complex real-world phenomena. 
Despite advances in flexible and non-parametric models, this view remains
in the current era where predictive machine learning proliferates, and shows ample potential to tackle large and/or complex data. 
%
Their  limitations  notwithstanding,  probability  models remain  a  fundamental  tool  for  research. Paraphrasing \cite{efron2020prediction}:  ``Abandoning mathematical models comes close to abandoning the historic scientific goal of understanding nature.''  
We agree with this view that there are many situations where models facilitate describing the phenomenon under study. We seek to bridge these two views by noting that loss functions define improper models, which also lead to natural interpretations in terms of relative probabilities, and proposing a strategy to learn which loss gives a better description of the process underlying the data.

Our strategy is to view $f_k$ as expressing relative (as opposed to absolute) probabilities, for example $f_k(y_0, \theta_k)/f_k(y_1, \theta_k)$ describes how much more likely is it to observe $y$ near $y_0$ than near $y_1$. A convenient manner to describe such ratios is by comparing the gradient of $\log f_k(y, \theta_k)$ to the gradient of the log data-generating density $\log g(y)$. This can be achieved by minimising Fisher's divergence 
\begin{equation}
    D_F(g||f_k) := \frac{1}{2}\int ||\nabla_y \log g(y) - \nabla_y \log f_k(y;\theta_k)||_2^2 g(y) dy,\label{Equ:FishersDivergence}
\end{equation} 
where $\nabla_y$ is the gradient operator. 
Under certain minimal tail conditions, minimising Fisher's divergence is equivalent to minimising the \hyvarinen score \citep{hyvarinen2005estimation}. The latter has been used for models with intractable, but finite normalising constants \citep{hyvarinen2005estimation} and more recently to 
\new{define posterior distributions based on scoring rules \citep{giummole2019objective} and to}
conduct Bayesian model selection using improper priors \citep{dawid2015bayesian, shao2019bayesian}.  We consider for the first time its use to select between possibly improper models, and learn their associated hyper-parameters.

\new{A possible alternative Fisher's divergence proposed by  \cite{lyu2009interpretation} is to use linear operators to define a generalized Fisher divergence. The operators do not require a finite normalization constant, i.e. they can in principle be applied to improper models.
Although interesting, the specific proposals in \cite{lyu2009interpretation} are a conditional mean operator for latent variable models and a marginalization operator that requires proper conditionals, neither of which seems directly applicable to our setting.
In fact, it is important to distinguish our framework from settings where, by combining proper conditional models, one defines an improper joint model.
For example, intrinsic auto-regressive models in spatial Statistics have proper conditionals and an improper joint. Such models can be fit using a pseudo-likelihood \citep{besag1975statistical} or the marginalization operator of \cite{lyu2009interpretation}, for instance.
In our framework, neither the conditionals nor the joint of $f_k(y;\theta_k)$ need be proper, e.g. Tukey's loss example in Figure \ref{Fig:TukeysLoss}.


It is also important to distinguish our framework with approaches designed for models with intractable, but finite, normalization constants (i.e. proper models).
Popular strategies include contrastive divergence \citep{hinton2002training},
minimum velocity learning (unpublished work by Movellan in
2007, see \cite{wang2020wasserstein}) and
contrastive noise estimation \citep{gutmann2010noise}.
These methods define certain Monte Carlo dynamics to transition from observed samples from the data-generating $g(y)$ into samples from $f_k(y; \theta_k)$.
Informally, if $g(y)$ is close to $f_k(y; \theta_k)$ then such dynamics have a small gradient, defining a divergence between these two distributions. These methods do not require evaluating the normalization constant. However, the notion of sampling from $f_k(y; \theta_k)$ requires it to be a proper probability model, and hence these divergences do not apply to improper models. 
Another interesting example for intractable normalization constant are the kernel Stein discrepancy posteriors of  \cite{matsubara2021robust}.
However, Stein discrepancies are based on defining expectations, and hence also require $f_k$ to be proper. A further issue of kernel discrepancies is that they  do not lead to coherent updating of beliefs, i.e. the posterior $\pi(\theta_k \mid y_1,y_2)$ obtained after observing $(y_1,y_2)$ does not match the posterior based on observing $y_2$ and using $\pi(\theta_k \mid y_1)$ as the prior.
} 

The paper proceeds as follows. Section \ref{Sec:Foundations} reviews recent developments in Bayesian updating with loss functions, discusses our motivating examples and some failures of standard methodology. 
Section \ref{Sec:Methodology} explains how we interpret the inference provided by an improper model in terms of relative probabilities, and their relation to Fisher's divergence and the \Hscore{}.
It also outlines our methodology: the definition of an \Hposterior{}, a $\sqrt{n}$-consistency result to learn parameters and hyper-parameters, and the definition of the integrated \Hscore{} and \HBayes{} factors as a criterion to choose among possibly improper models.
Section \ref{Sec:HscoreModelConsistency} gives consistency rates for \HBayes{} factors, including important non-standard cases where optimal hyper-parameters lie at the boundary, as can happen when considering nested models.
Section \ref{Sec:RobustRegression} applies our procedure to robust regression. Section \ref{Sec:KDE} produces a Bayesian implementation of kernel density estimation, which cannot be tackled by standard Bayesian methods, since kernel densities define an improper model for the observed data. All proofs and some additional technical results are in the supplementary material. 
Code to reproduce the examples of Sections \ref{Sec:RobustRegression} and \ref{Sec:KDE} can be found at 
\url{https://github.com/jejewson/HyvarinenImproperModels}.

\section{Problem formulation}{\label{Sec:Foundations}}



We define the problem and notation and then provide the necessary foundations by reviewing general Bayesian updating, providing two motivating examples, identifying the shortcomings of standard approaches, and finally introducing Fisher's divergence and the \hyvarinen score.



Denote by $y = (y_1,…,y_n)$ an observed continuous outcome, where $y_i \in \mathbb{R}$ are independent draws from an unknown data-generating distribution with density $g(\cdot)$. One is given a set of $M$ probability models and $L$ loss functions which, in general, may or may not include $g$. As usual each model $k = 1, \ldots, M$ is associated to a density $f_k(y; \theta_k, \kappa_k)$, where $\theta_k$ are parameters of interest and $\kappa_k$ are hyperparameters, 
$(\theta_k, \kappa_k) \in \Theta_k \times \Phi_k \subseteq \mathbb{R}^{d_k}$.  
Any such density defines a loss $\ell_k(y; \theta_k, \kappa_k) = -\log f_k(y; \theta_k, \kappa_k)$. Similarly, denote by $\ell_k(y; \theta_k, \kappa_k)$ for $k = M+1, \ldots, M+L$ the given loss functions. For $k>M$, we refer to $f_k(y; \theta_k, \kappa_k) = \exp\{ - \ell_k(y; \theta_k, \kappa_k)\}$ as the (possibly improper) density associated to $\ell_k$. In general such $f_k$ need not integrate to a finite number with respect to $y$, i.e. $f_k$ may define an improper model on $y$. 
Our goal is to choose which among $f_1,\ldots, f_{L+M}$ provides the best representation of $g$, in a sense made precise below.

\subsection{General Bayesian updating}

In the frequentist paradigm it is natural to infer parameters by minimising loss functions, a classical example being $M$-estimation \citep{huber1981robust}.
Loss functions are also used in the PAC-Bayes paradigm, where one considers the posterior distribution on the parameters
\begin{equation}
    \pi_k(\theta_k|y, \kappa_k) \propto \pi_k(\theta_k \mid \kappa_k)\exp\{-\ell_k(y;\theta_k,\kappa_k)\}
    \label{Equ:GeneralBayesRule}
\end{equation}
where $\pi_k(\theta_k \mid \kappa_k)$ 
is a given prior distribution and $\propto$ denotes ``proportional to''.
See \cite{guedj2019primer} for a review on PAC-Bayes, and \cite{grunwald2012safe} for the safe-Bayes paradigm, which can be seen as a particular case where $\ell_k$ is a tempered negative log-likelihood. 
At this stage we consider inference for $\theta_k$ for a given hyper-parameter $\kappa_k$, we discuss learning $\kappa_k$ later. 
\new{See also \citep{giummole2019objective} for a framework where the loss function is defined by scoring rules such as the Tsallis score and the \hyvarinen score. The latter gives rise to the \Hposterior{} discussed in Section \ref{ssec:hscore}, a critical component of our construction.}
As a key result supporting the interpretation of \eqref{Equ:GeneralBayesRule} as conditional probabilities akin to Bayes rule, \cite{bissiri2016general} showed that \eqref{Equ:GeneralBayesRule} leads to a coherent updating of beliefs, and referred to the framework as general Bayesian updating.

These results allow Bayesian inference on parameters $\theta_k$ based on loss functions.
The properties of the general Bayesian posterior have been well-studied, for example under suitable regularity conditions \cite{chernozhukov2003mcmc} and \cite{lyddon2019general} showed that it is asymptotically normal.
However, the emphasis of prior work is on inference for $\theta_k$. To our knowledge viewing $\exp \{ - \ell_k(y; \theta_k, \kappa_k) \}$ as an improper density has not been considered, which is critical for interpretation and posterior predictive inference, nor has the problem of choosing which loss best represents the data.



\subsection{Motivating applications}{\label{Sec:MotivatingApplication}}

We introduce two problems which, despite being classical, cannot be tackled with standard inference.
We first consider robust regression where one contemplates a parametric model and a robust loss, and wishes to assess which represents the data best. To our knowledge there are no solutions for this problem.
We next consider learning the bandwidth in kernel density estimation, where the goal is predictive inference on future data. While there are many frequentist solutions, Bayesian methods are hampered by the associated loss defining an improper model for the observed data.

\subsubsection{Robust regression with Tukey's loss}{\label{Sec:TukeysLossIntro}}



Consider the linear regression of $y_i$ on an 
$p$-dimensional vector $x_i$,
\begin{equation}
    y_i = x_i^T \beta + \epsilon_i, \textrm{ with } E(\epsilon_i) = 0,\textrm{} V(\epsilon_i) = \sigma^2 \textrm{ for } i = 1,\ldots, n. \nonumber
\end{equation}
%
%
Consider first a Gaussian model, denoted $k=1$, so that
$f_1(y_i ; \theta_1) = \mathcal{N}(y_i;x_i^T \beta, \sigma^2)$, where $\theta_1= \{\beta,\sigma^2\}\in \mathbb{R}^p \times \mathbb{R}_{+}$, and there are no hyper-parameters ($\kappa_1 = \emptyset$).
The negative log-likelihood gives
the least-squares loss
$\ell_1(y;\theta_1)= \sum_{i=1}^n \ell_1(y_i;\theta_1)$, where
\begin{align}
    \ell_1(y_i;\theta_1) &= - \log f_1(y_i ; \beta, \sigma^2)
    = \frac{1}{2}\log\left(2\pi\sigma^2\right) + \frac{(y_i - x_i^T \beta)^2}{2\sigma^2}.
    \label{Equ:ols_loss}
\end{align}
%
Since the least-squares loss is non-robust to outliers, 
one may consider alternatives.
A classical choice is Tukey's loss \citep{beaton1974fitting}, which we denote $k = 2$, given by
\begin{align}
    \ell_2(y_i;\theta_2, \kappa_2) = \begin{cases}
        \frac{1}{2}\log\left(2\pi\sigma^2\right)  + \frac{(y_i -  x_i^T \beta)^2}{2\sigma^2} - \frac{(y_i -  x_i^T \beta)^4}{2\sigma^4\kappa_2^2} + \frac{(y_i -  x_i^T \beta)^6}{6\sigma^6\kappa_2^4},&\textrm{ if }|y_i - x_i^T\beta|\leq \kappa_2\sigma\\
        \frac{1}{2}\log\left(2\pi\sigma^2\right) + \frac{\kappa_2^2}{6}, &\textrm{ otherwise }
\end{cases}\label{Equ:TukeysLoss}
\end{align}
where $\theta_2 = \{\beta, \sigma^2\}\in \mathbb{R}^p \times \mathbb{R}_{+}$ and $\kappa_2 \in \mathbb{R}_{+}$ is a cut-off hyper-parameter. Note that \eqref{Equ:TukeysLoss} is parametrised such that the units of the cut-off parameter $\kappa_2$ are standard deviations away from the mean.

The density $f_2(y;\theta_2, \kappa_2) = \exp \{- \sum_{i=1}^n \ell_2(y_i; \theta_2, \kappa_2) \}$ integrates to infinity, defining an improper model.
Figure \ref{Fig:TukeysLoss} plots Tukey's loss for several $\kappa_2$ and their corresponding densities. 
 \eqref{Equ:TukeysLoss} is similar to  \eqref{Equ:ols_loss} when $|y_i - x_i^T \beta|$ is close to 0, while for large $|y_i - x_i^T\beta|$ it becomes flat, bounding the influence of outliers. 
The Gaussian model is recovered when $\kappa_2 = \infty$. 
As we shall see, such nested comparisons pose methodological challenges that motivated our developments.
As a technical remark, in robust statistics $\sigma^2$ is typically estimated separately from $\beta$, either as part of a two-stage procedure \citep[see e.g.][]{chang2018robust}  
or using S-estimation \citep{rousseeuw1984robust}
. Instead, our framework allows one to jointly estimate $(\beta,\sigma^2)$.

We note that one can add more losses into our framework, for example those in \cite{black1996unification, basu1998robust} or \cite{wang2020tuning}. Also, our framework is not limited to linear regression. One may replace $x_i^T \beta$ by a non-linear function, for example from a deep learning or Gaussian process regression. 



\begin{figure}
\begin{center}
\includegraphics[trim= {0.0cm 0.0cm 0.0cm 0.0cm}, clip,  
width=0.49\columnwidth]{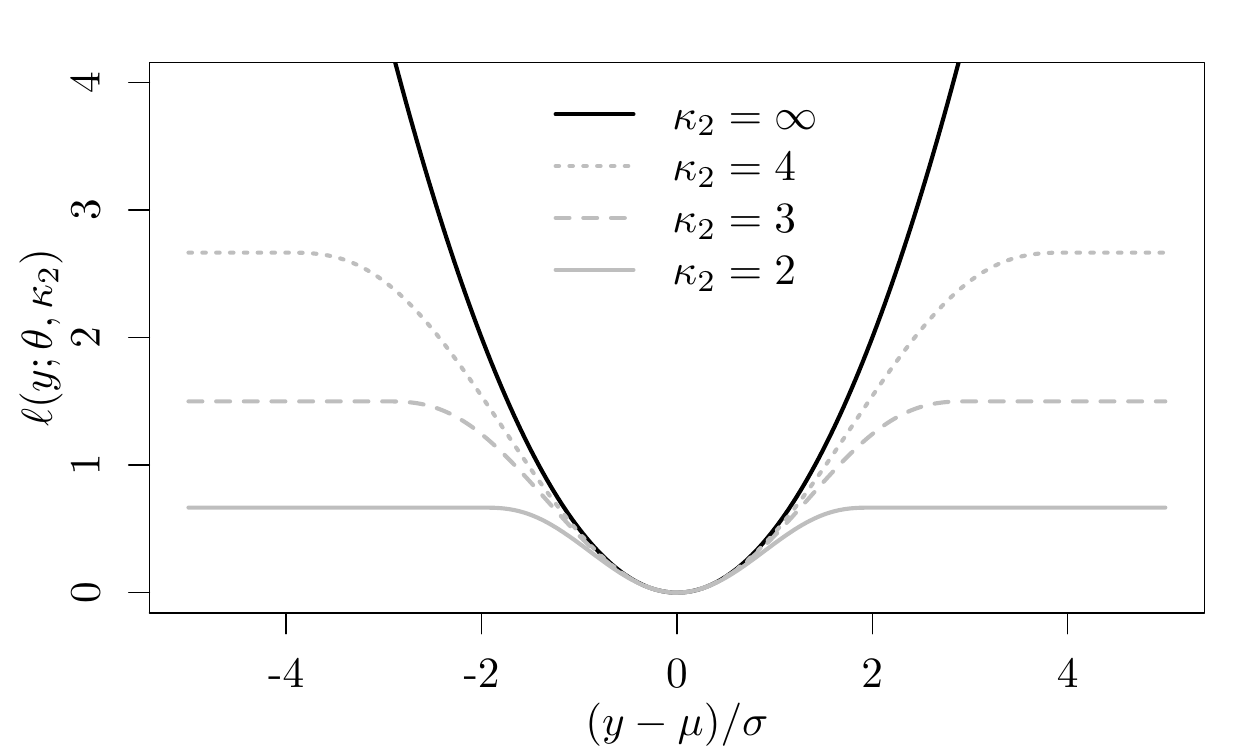}
\includegraphics[trim= {0.0cm 0.0cm 0.0cm 0.0cm}, clip,  
width=0.49\columnwidth]{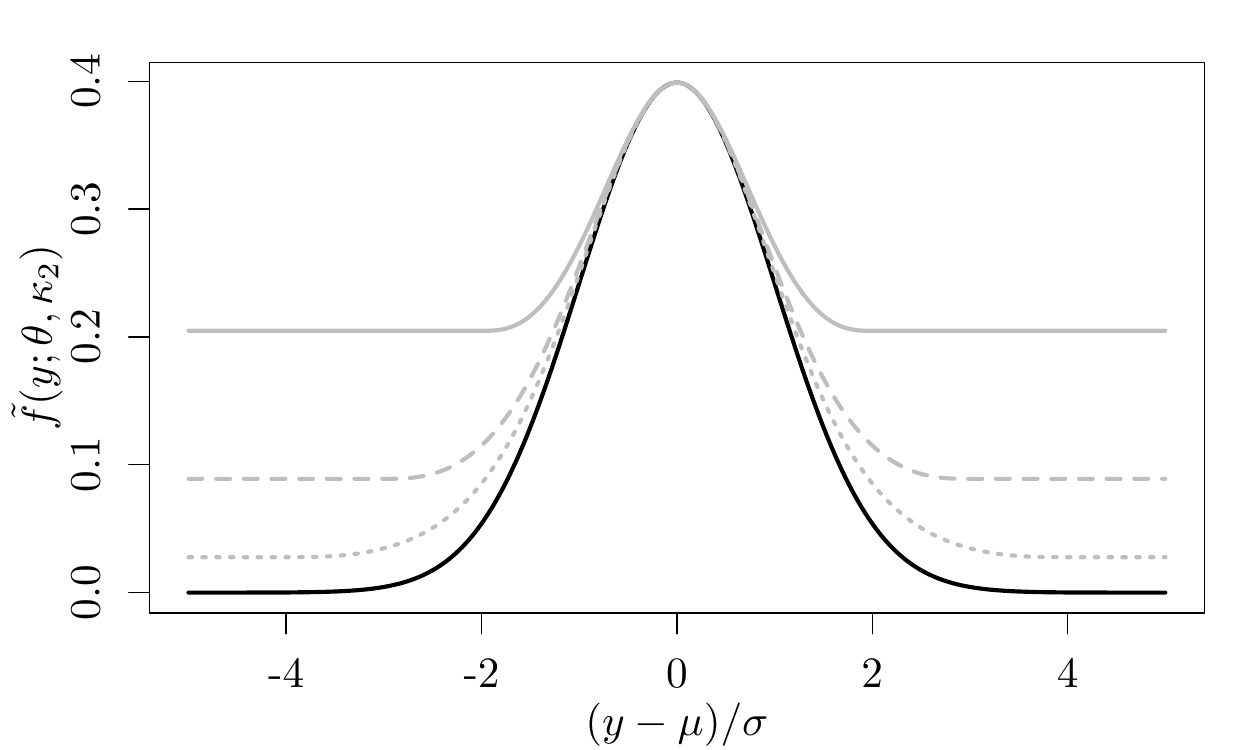}
\caption{Squared-error loss and Tukey's loss (\textbf{left}) and corresponding (improper) densities (\textbf{right}). The improper densities for Tukey's loss are scaled to match the mode of the Gaussian density.}
\label{Fig:TukeysLoss}
\end{center}
\end{figure}

Setting $\kappa_2$ in Tukey's loss is related to the so-called robustness-efficiency trade-off, an issue that has been unsettled for at least 60 years; see \cite{box1953non} and \cite{tukey1960survey}. While $\kappa_2=\infty$ gives the most efficient parameter estimates if the data are Gaussian, they are least robust outliers. 
Decreasing $\kappa_2$ increases robustness, but can significantly reduce estimation efficiency if $\kappa_2$ is set too small. 
%
%
%
We address this issue by learning from the data whether or not to estimate parameters in a robust fashion, i.e. selecting between $\ell_1$ and $\ell_2$, 
and estimating $\kappa_2$ when $\ell_2$ is chosen.
\color{black} We remark that the estimated $\kappa_2$ does not attempt to provide an optimal robustness-efficiency trade-off in the sense of minimising estimation error under an arbitrary data-generating $g(y)$. Rather, it seeks to define a predictive distribution on $y$ that portrays its nature accurately, in terms of infinitesimal relative probabilities (Section \ref{Sec:Methodology}).
This said, our method attains a good robustness-efficiency trade-off in our examples. Also, per our model selection consistency results, if data are truly Gaussian (or nearly so) then our framework collapses to inference under the Gaussian model, leading to efficient estimates.
\color{black}

Despite the importance of the robustness-efficiency trade-off, we are aware of limited work setting $\kappa_2$ in a principled, data-driven manner. 
Rule-of-thumb methods are popular, e.g. setting ``$\kappa_2 = 4.6851$, gives approximately 95\% asymptotic efficiency of L2 minimization on the standard normal distribution of residuals'' \citep{belagiannis2015robust}
, setting $\kappa_2 = 1.547$ to obtain a breakdown of 1/2 \citep{rousseeuw1984robust}
, or a balance of breakdown and efficiency \cite[e.g.][]{riani2014consistency}. 
More formal approaches rely on 
estimating quantiles of the data (e.g. \cite{sinova2016tukey}),
minimising an estimate of parameter mean squared error (see \cite{li2021robust}, \new{who applied the method of \cite{warwick2005choosing} 
to Tukey's loss),} or minimising the maximum change in parameter estimates resulting from perturbing one observation \citep{li2021robust}. 



\subsubsection{Non-parametric Kernel Density Estimation}{\label{Sec:KDEintro}}


Suppose that $y_i \sim g$ independently for $i=1,\ldots,n$ and one wishes to estimate $g$. The kernel density estimate at a given value $x$ is given by
\begin{equation}
    \hat{g}_h(x) = \frac{1}{nh}\sum_{i=1}^nK\left(\frac{x - y_i}{h}\right)
\label{Equ:kde}
\end{equation}
where the kernel $K(\cdot)$ is a symmetric, finite variance probability density, 
and $h$ is the bandwidth parameter. For simplicity we focus on the Gaussian kernel $K(x) = N(x;0,1)$.

The bandwidth $h$ is an important parameter controlling the smoothness and accuracy of $\hat{g}_h$.  
Popular strategies to set the bandwidth are rule-of-thumb and plug-in methods \cite[e.g.][]{silverman1986density}, 
cross-validation \citep{habbema1974stepwise, robert1976choice} and minimising an estimate of integrated square error
\citep{rudemo1982empirical,bowman1984alternative}.


Unfortunately, standard Bayesian inference cannot be used to learn $h$ from data. The reason is that although
\eqref{Equ:kde} defines a proper probability distribution for a future observation $x$, 
a Bayesian framework requires a proper model for the observed data given the parameter.
In our notation, the model likelihood
$f(y; \theta, \kappa) \propto \prod_{i=1}^n \hat{g}_h(y_i)$ with $\theta = \emptyset$ and hyper-parameter $\kappa = \{h\}$
has an infinite normalising constant.
To see this, note that
    \begin{align}
        \hat{g}_h(y_i) =& \frac{1}{nh}\sum_{j=1}^nK\left(\frac{y_i - y_j}{h}\right)
        =\frac{1}{n h\sqrt{2\pi}} + \frac{1}{n h\sqrt{2\pi}}\sum_{j\neq i}^n\exp\left\{-\frac{(y_i - y_j)^2}{2h^2}\right\},\label{Equ:KDEPseudoDensity}
    \end{align}
where the first term integrates to infinity with respect to $y_i$.
Hence, \eqref{Equ:kde} illustrates a situation where one has an `algorithm' for producing a density estimate for future observations that defines an improper probability model for the observed data.

\subsection{The failure of standard technology}{\label{Sec:FailureStandardMethods}}


As we discussed a main challenge is that standard tools are not, in general, applicable to compare improper models. 
Another challenge occurs when one wishes to estimate a hyperparameter $\kappa_k$ of a given improper model $k$, e.g. Tukey's cut-off or the kernel bandwidth. 
For example, the general Bayesian might consider mimicking standard Bayes or marginal likelihood estimation by defining
\begin{align}
    \hat{\kappa}_k := 
    \arg\max_{\kappa_k\in\mathcal{K}} \int \pi_k(\theta_k)\exp\{-\ell_k(y;\theta_k,\kappa_k)\}d\theta_k.\label{Equ:GBIHyperparameterSelection}
\end{align}
Unfortunately, such procedure often produces degenerate estimates. 
For example, from  \eqref{Equ:TukeysLoss}, it is clear that for fixed $\theta_2 = \{\beta, \sigma^2\}$ Tukey's loss is increasing in $\kappa_2$ and therefore 
 \eqref{Equ:GBIHyperparameterSelection}
selects $\hat{\kappa}_2 = 0$ independently of the data. Similarly, in the kernel density estimation example 
 \eqref{Equ:GBIHyperparameterSelection} 
selects $h = 0$ \citep{habbema1974stepwise, robert1976choice}.

\subsection{Fisher's divergence and the \hyvarinen score}{\label{Sec:HyvarinenScoreIntro}}


\cite{hyvarinen2005estimation} proposed a score matching approach for models with intractable normalising constants. 
Score matching minimises
Fisher's divergence to the data-generating $g(\cdot)$ in \eqref{Equ:FishersDivergence}, that is
%
%
\begin{align}
    \theta_k^{\ast} :&= \argmin_{\theta_k \in\Theta_k} D_F(g(\cdot)||f_k(\cdot;\theta_k)) = \argmin_{\theta_k\in\Theta_k} \mathbb{E}_{z\sim g}\left[H(z;f_k(\cdot;\theta_k))\right],\label{Equ:MinFisherDivergenceHscore}
\end{align}
where the right-hand side follows from integration by parts under minimal tail conditions, and 
\begin{align}
    H(z;f_k(\cdot;\theta_k)) := 2 \frac{\partial^2}{\partial z^2} \log f_k(z;\theta_k) + \left( \frac{\partial}{\partial z} \log f_k(z;\theta_k) \right)^2,
    \label{Equ:Hscore}
\end{align}
is the \hyvarinen score (\Hscore).
\new{
As discussed we focus on univariate $z \in \mathbb{R}$, but \eqref{Equ:Hscore} can be extended to multivariate $z=(z_1,\ldots,z_d)$  \citep{hyvarinen2005estimation} via
\begin{align}
  H(z;f_k(\cdot;\theta_k)) := 2 \sum_{j=1}^d \frac{\partial^2 }{\partial z_j^2} \log f_k(z;\theta_k) + \|\nabla_z \log f_k(z;\theta_k)\|^2.
\label{Equ:Hscore_multiv}
\end{align}
See also \cite{lyu2009interpretation} for further options on extending score-matching to multivariate settings .
}

Given $y_i \sim g$ independently across $i=1,\ldots,n$
one can estimate $\theta_k^*$ by minimising
%
\begin{align}
    \hat{\theta}_k := \argmin_{\theta_k\in\Theta_k} \frac{1}{n}\sum_{i=1}^n H(y_i;f_k(\cdot;\theta_k)) 
\end{align}

A critical feature for our purposes is that the \Hscore{} depends only on the first and second derivatives of $\log f_k$ and hence the normalising constant does not play a role, independently of whether it is finite or not.
\color{black} Hence, unlike methods designed for intractable but finite normalisation constants, the \Hscore{} is applicable to improper models.  \color{black}
The \Hscore{} enjoys desirable properties. For example, 
\cite{dawid2016minimum} proved that $\hat{\theta}_k$ is consistent and asymptotically normal,
and \cite{dawid2015bayesian,shao2019bayesian} showed that its prequential application leads to consistent Bayesian model selection under improper priors. 
%
%

Closest to our work, \cite{matsuda2019information} proposed the score matching information criteria (\SMIC) to select between models with intractable normalising constants. This criterion estimates Fisher's divergence by correcting the in-sample \hyvarinen score by an estimate of its asymptotic bias. 
We emphasise two main distinctions with our work. The first is the extension to improper models.
Second, these authors used cross-validations and predictive criteria similar to the AIC, which do not lead to consistent model selection, whereas we focus on structural learning where one seeks guarantees on recovering the loss that best approximates the data-generating $g$.

\section{Inference for improper models}{\label{Sec:Methodology}}

We now present our framework.
Section \ref{Sec:RelativeProbabilities} interprets an improper model in terms of relative probabilities and motivates Fisher's divergence as a criterion to fit such a model. 
Section \ref{ssec:hscore} proposes using the \Hscore{} to define a general Bayesian posterior to learn hyper-parameters and choose among a collection of models, some or all of which may be improper. 
Section \ref{ssec:laplaceapprox} proposes a Laplace approximation to the \HBayes{} factors, which we use both in our theoretical treatment and examples. 
Finally, Section \ref{ssec:TwoStep} argues for using the \Hscore{} within a two-step procedure: first selecting a model and estimating hyper-parameters using the \Hscore{}, then reverting to standard general Bayes to learn the parameters of interest.


\subsection{Inference through relative probabilities}{\label{Sec:RelativeProbabilities}}

Our goal is to select which model $f_k$ describes the data best, in terms of helping interpret the data-generating $g$. The main difficulty is that, since $f_k$ may be improper, it is unclear how to define ``best''. 
Our strategy is to view $f_k$ as expressing relative probabilities, in contrast to the usual absolute probabilities. For example $f_k(y_0, \theta_k)/f_k(y_1, \theta_k)$ describes how much more likely is it to observe $y$ near $y_0$ than near $y_1$. As an illustration, consider Tukey's loss in \eqref{Equ:TukeysLoss}. 
For any pair $(y_0,y_1)$ such that $|y_0 - x^T \beta|, |y_1 - x^T \beta| < \kappa_2\sigma$ are small, Tukey's loss is approximately equal to the squared loss, hence
\begin{align}
  \frac{f_2(y_0; \theta_2, \kappa_2)}{f_2(y_1; \theta_2, \kappa_2)} \approx \frac{\mathcal{N}\left(y_0; x^T \beta, \sigma^2\right)}{\mathcal{N}\left(y_1; x^T \beta, \sigma^2\right)}.
\end{align}
%
%
In contrast, for any pair such that $|y_0 - x^T \beta|, |y_1 - x^T \beta| > \kappa_2\sigma$ we have $\nicefrac{f_2(y_0, \theta_2, \kappa_2)}{f_2(y_1, \theta_2, \kappa_2)} = 1$. 
%
%
That is, Tukey's loss induces relative beliefs that observations near the mode behave like Gaussian variables, while all faraway observations are equally likely. This encodes the notion that one does not know much about the tails beyond their being thick, which is difficult to express using a proper probability distribution. 

We argue that Fisher's divergence is well-suited to evaluate how closely the relative probabilities of any such $f_k$ approximate those from $g$. 
Assuming that the gradients of $g(y)$ and $f_k(y)$ are finite for all $y$,
Fisher's divergence in  \eqref{Equ:FishersDivergence} can be expressed as 
\begin{align}
D_F(g||f_{\theta}) 
&=\int \left|\left|\lim_{\epsilon\rightarrow0} \frac{\log \frac{g(y + \epsilon)}{g(y)} - \log \frac{f_k(y + \epsilon; \theta)}{f_k(y; \theta)}}{\epsilon}\right|\right|_2^2 g(y)dy.\label{Equ:FishersDivergenceRatios}
\end{align} 
Therefore, minimising Fisher's divergence (equivalently, the \hyvarinen score) targets a $f_k$ that approximates the relative probabilities of $g$ in an infinitesimal neighbourhood around $y$, in the quadratic error sense, on the average with respect to $g(y)$. This observation extends the usual motivation for the \hyvarinen score as a replacement of likelihood inference when the normalising constant is intractable to being a justifiable criteria to score improper models. 


\subsection{The \Hscore{}}
\label{ssec:hscore}

We consider a general Bayesian framework where the loss is defined by applying the \Hscore{} to the density $f_k(y;\theta_k ,\kappa_k) = \exp \{-\ell_k(y;\theta_k, \kappa_k) \}$, which gives the general posterior
\begin{equation}
    \pi^{H}(\theta_k,\kappa_k|y) \propto \pi_k(\theta_k, \kappa_k) \exp \left\{ -\sum_{i=1}^n H(y_i; f_k(\cdot;\theta_k, \kappa_k)) \right\}.\label{Equ:GeneralBayesRuleHScore}
\end{equation}
We refer to \eqref{Equ:GeneralBayesRuleHScore} as the \Hposterior,
\new{which is a particular case of the scoring rule posteriors of \cite{giummole2019objective}.}
Note that \eqref{Equ:GeneralBayesRuleHScore} 
is different from the general Bayesian posterior directly associated to  $\ell_k$ in \eqref{Equ:GeneralBayesRule}. 
An important property of \eqref{Equ:GeneralBayesRuleHScore} is that it provides a consistent estimator for parameters $\theta_k$ and hyper-parameters $\kappa_k$. Specifically, Proposition \ref{Lemma:HscoreParamConsistency} shows that, under regularity conditions, $\tilde{\eta}_k=(\tilde{\theta}_k,\tilde{\kappa}_k)$ maximising \eqref{Equ:GeneralBayesRuleHScore} recovers the optimal $\eta_k^*=(\theta_k^*,\kappa_k^*)$ according to  Fisher's divergence. 

Proposition \ref{Lemma:HscoreParamConsistency} requires mild regularity conditions A1-A3, discussed in Section \ref{App:TechnicalConditions}.
Briefly, A1 requires continuous second derivatives of the \hyvarinen score, that it has a unique minimiser, and that its first derivative has finite variance. 
A2 requires that the \hyvarinen score is dominated by an integrable function, which can be easily seen to hold for Tukey's loss, for example.
Finally, A3 requires that the Hessian of the \Hscore{} is positive and finite around $\eta_k^{\ast}$. 
%

Proposition \ref{Lemma:HscoreParamConsistency} extends \cite{hyvarinen2005estimation} (Corollary 3), 
\new{who stated that $\tilde{\eta}_k$ converges to $\eta_k^*$ in probability, to also give a $1/\sqrt{n}$ convergence rate. Another difference is that \cite{hyvarinen2005estimation}}
considered the well-specified case where $g(y) = f_k(y; \theta_k^{\ast}, \kappa_k^{\ast})$, which in particular requires $f_k$ to be a proper model.
Proposition \ref{Lemma:HscoreParamConsistency} also extends \cite{dawid2016minimum} (Theorem 2), who proved asymptotic normality for \Hscore{} based estimators.
Said asymptotic normality does in general not hold when $\eta_k^{\ast}$ lies on the boundary of the parameter space, e.g. for Tukey's loss if the data are truly Gaussian then $\kappa^{\ast}_2 = \infty$.  
 By Proposition \ref{Lemma:HscoreParamConsistency}, even if normality does not hold, one still attains $\sqrt{n}$ consistency to estimate $\kappa_2^*$.
\new{In terms of technical conditions, \cite{dawid2016minimum} do not list them but refer to standard M-estimation theory assumptions, see Theorem 5.23 in \cite{van2000asymptotic}. These are similar to our assumptions and include differentiability and Lipschitz conditions in $\eta_k$, the existence of a second order Taylor expansion around $\eta_k^*$, and a non-singular Hessian at $\eta_k^*$. The main differences are that we require twice differentiability in $y$ and $\eta_k$, and that we consider a compact parameter space to allow for boundary parameter values (e.g. $1/\kappa_2^*=0$ for Tukey's loss, after a re-parameterisation discussed in Section \ref{Sec:HscoreModelConsistency}).
} Further, Proposition \ref{Lemma:HscoreParamConsistency} extends previous results by explicitly considering improper models and the learning of their hyperparameters.
 
 \new{As is standard under model misspecification, the shape of the \Hposterior{} does not match the frequentist distribution of the posterior mode $\tilde{\eta}_k=(\tilde{\theta}_k,\tilde{\kappa}_k)$ \citep{giummole2019objective}, i.e. it does not have valid frequentist coverage. 
 Per Proposition \ref{Lemma:HscoreParamConsistency} and Theorem \ref{Thm:HBayesFactorConsistency} this is not a major issue for selecting a loss $k$ and hyper-parameters $\kappa_k$, which is our main focus.
 However, posterior inference on $\theta_k$ under the selected $(k,\kappa_k)$ should be properly calibrated, as we discuss in Section \ref{ssec:TwoStep}.}



\begin{proposition}
Let $y=(y_1,\ldots,y_n) \sim g$, $\tilde{\eta}_k=(\tilde{\theta}_k,\tilde{\kappa}_k)$ maximise \eqref{Equ:GeneralBayesRuleHScore}, and $\eta_k^*=(\theta_k^*,\kappa_k^*)$ minimise Fisher's divergence from $f_k(\theta_k,\kappa_k)$ to $g$.
Assume Conditions A1-A2 in Section \ref{App:TechnicalConditions}.
Then, as $n\rightarrow\infty$, 
\begin{equation}
  \left|\left|\tilde{\eta}_k - \eta_k^{\ast}\right|\right|_2 = o_p(1),  \nonumber
\end{equation}
where $||\cdot||_2$ is the $L_2$-norm. 
Further, if Condition A3 also holds, then 
\begin{equation}
  \left|\left|\tilde{\eta}_k - \eta_k^{\ast}\right|\right|_2 = O_p(\nicefrac{1}{\sqrt{n}}).  \nonumber
\end{equation}
\label{Lemma:HscoreParamConsistency}
\end{proposition}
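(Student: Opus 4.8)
The plan is to recognise $\tilde\eta_k$ as an M-estimator and then run the standard consistency-and-rate machinery for such estimators. Taking logarithms in \eqref{Equ:GeneralBayesRuleHScore}, $\tilde\eta_k=(\tilde\theta_k,\tilde\kappa_k)$ minimises $\mathbb{M}_n(\eta_k) := \frac1n\sum_{i=1}^n H(y_i;f_k(\cdot;\eta_k)) - \frac1n\log\pi_k(\eta_k)$ over the (compact) parameter space, whereas $\eta_k^*$ minimises the population criterion $M(\eta_k):=\mathbb{E}_{z\sim g}[H(z;f_k(\cdot;\eta_k))]$, which by the integration-by-parts identity underlying \eqref{Equ:MinFisherDivergenceHscore} equals $D_F(g\|f_k(\cdot;\eta_k))$ plus a constant independent of $\eta_k$. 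Condition A1 ensures $M$ is continuous with a unique (hence, on a compact set, well-separated) minimiser at $\eta_k^*$, and that $\nabla_{\eta_k}H(\cdot;f_k(\cdot;\eta_k^*))$ has finite variance; Condition A2 supplies an integrable envelope for the class $\{H(\cdot;f_k(\cdot;\eta_k))\}$.

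For the consistency statement I would first combine A2 with the continuity in A1 to obtain a uniform law of large numbers, $\sup_{\eta_k}\bigl|\frac1n\sum_i H(y_i;f_k(\cdot;\eta_k)) - M(\eta_k)\bigr| = o_p(1)$; since $\pi_k$ is bounded on the compact space, the prior term satisfies $\sup_{\eta_k}|\frac1n\log\pi_k(\eta_k)|=o_p(1)$, so $\sup_{\eta_k}|\mathbb{M}_n(\eta_k)-M(\eta_k)|=o_p(1)$. The classical argmin-consistency theorem (e.g. Theorem 5.7 of \cite{van2000asymptotic}) then gives $\|\tilde\eta_k-\eta_k^*\|_2=o_p(1)$.

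For the rate I would use the rate-of-convergence theorem for M-estimators (in the spirit of Theorem 5.52 of \cite{van2000asymptotic}), which does not require $\eta_k^*$ to be interior and therefore also covers boundary cases such as $1/\kappa_2^*=0$ for Tukey's loss. By A3 the Hessian of $M$ is positive definite and finite around $\eta_k^*$, giving a local quadratic minorisation $M(\eta_k)-M(\eta_k^*)\gtrsim\|\eta_k-\eta_k^*\|_2^2$ in a neighbourhood of $\eta_k^*$, into which $\tilde\eta_k$ eventually falls by the consistency just shown. The finite-variance statement in A1, a first-order Taylor expansion of $H$ in $\eta_k$, and a maximal inequality for the centred process $\eta_k\mapsto(\mathbb{M}_n-M)(\eta_k)-(\mathbb{M}_n-M)(\eta_k^*)$ together yield the local modulus bound $\sup_{\|\eta_k-\eta_k^*\|_2\le\delta}\bigl|(\mathbb{M}_n-M)(\eta_k)-(\mathbb{M}_n-M)(\eta_k^*)\bigr|=O_p(\delta/\sqrt n)$, the prior contributing only a negligible $O_p(\delta/n)$. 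Since $\mathbb{M}_n(\tilde\eta_k)\le\mathbb{M}_n(\eta_k^*)$, chaining these two bounds at $\eta_k=\tilde\eta_k$ gives $\|\tilde\eta_k-\eta_k^*\|_2^2\lesssim\|\tilde\eta_k-\eta_k^*\|_2\,O_p(1/\sqrt n)$, hence $\|\tilde\eta_k-\eta_k^*\|_2=O_p(1/\sqrt n)$.

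The step I expect to be the main obstacle is the empirical-process control: verifying that $\{H(\cdot;f_k(\cdot;\eta_k))\}$ has an integrable envelope and small enough bracketing/covering numbers for both the uniform LLN and the $O_p(\delta/\sqrt n)$ modulus bound. This is exactly where twice differentiability in $y$ and $\eta_k$ and the domination in A2 are used, and it must be checked directly for Tukey's loss (whose $\mathcal{H}$-score is piecewise polynomial in the residual, so has an integrable envelope under mild moment conditions on $g$). A secondary, conceptual point is the boundary: asymptotic normality via a stationarity/score-equation argument is unavailable there, which is precisely why the rate is routed through the quadratic-minorisation argument rather than through a Taylor expansion of the estimating equations.
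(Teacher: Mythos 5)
Your consistency argument coincides with the paper's own (Lemma \ref{Lemma:HscoreParamConsistency_op1}): a uniform law of large numbers for $\{H(\cdot;f_k(\cdot;\eta_k))\}$ obtained from compactness, continuity and the integrable envelope in A2, negligibility of $n^{-1}\log\pi_k(\eta_k)$ since $\pi_k>0$ on a compact set, and then Theorem 5.7 of \cite{van2000asymptotic}. For the $\sqrt{n}$ rate, however, you take a genuinely different route. The paper (Proposition \ref{Lemma:HscoreParamConsistency2}) argues through the estimating equations: A1 assumes the penalised empirical score $\tilde{\Psi}_k$ vanishes exactly at $\tilde{\eta}_k$ and the population score at $\eta_k^*$; a multivariate Chebyshev inequality (finite variance of the gradient, A1) gives $\|\tfrac{1}{n}\tilde{\Psi}_k(\eta_k^*)-\Psi_k^*(\eta_k^*)\|_2=O_p(1/\sqrt{n})$ pointwise at $\eta_k^*$, and a mean-value expansion of the gradient combined with the lower bound on the empirical Hessian near $\eta_k^*$ (A3, holding with probability tending to one) converts this into the parameter rate. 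You instead invoke the quadratic-minorisation plus modulus-of-continuity theorem for M-estimators (Theorem 5.52 of \cite{van2000asymptotic}). Both are standard; yours avoids stationarity conditions altogether, though note the paper does not actually lose the score-equation route at the boundary — A1 explicitly assumes (after reparameterisation, e.g. $\nu_2=1/\kappa_2^2$ for Tukey's loss) that the score vanishes at $\tilde{\eta}_k$ and $\eta_k^*$.

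The one place where your sketch asks for more than A1--A3 deliver is precisely the step you flag: the maximal inequality behind the modulus bound $O_p(\delta/\sqrt{n})$ requires, in the Lipschitz version of that theorem, a local Lipschitz constant for $\eta_k\mapsto H(z;f_k(\cdot;\eta_k))$ that is square-integrable under $g$. A1 only gives finite variance of the gradient pointwise in $\eta_k$, and the Lipschitz condition A4 — which is not assumed in this proposition — only provides an integrable, not square-integrable, dominating function $m_H$. So you would either need to strengthen the envelope assumption or verify it directly for the models considered (plausible for Tukey's score under moment conditions on $g$, but it is an extra verification). The paper's Chebyshev-at-a-point argument sidesteps exactly this empirical-process control, at the price of the stationarity conditions built into A1. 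With that caveat made explicit, your proof strategy is sound and delivers the same conclusion.
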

%
%
%
A consequence of Proposition \ref{Lemma:HscoreParamConsistency} is that one can use \eqref{Equ:GeneralBayesRuleHScore} to learn tempering hyper-parameters. 
Specifically, suppose that one considers a family of losses $w_k \ell_k()$, where $w_k > 0$ is a tempering parameter. 
While $w_k$ does not affect the point estimate of $\theta_k$ given by \eqref{Equ:GeneralBayesRule}, it plays an important role in driving the posterior uncertainty on $\theta_k$.
Within our framework, one may define
\begin{equation}
  f_k(y; \theta_k, \kappa_k')= \exp\{-\ell_k'(y;\theta_k,\kappa_k') \}= \exp\{- w_k \ell_k(y; \theta_k, \kappa_k)\} \nonumber 
\end{equation}
where $\kappa_k' = (\kappa_k,w_k)$ and $\ell_k'() = w_k \ell_k()$. 
By Proposition \ref{Lemma:HscoreParamConsistency}, one can consistently learn the Fisher-divergence optimal $\kappa_k'$, and in particular $w_k$. 
In contrast, in the general Bayes posterior \eqref{Equ:GeneralBayesRule} it is challenging to estimate such $w_k$. 
Current strategies to set $w_k$ are optimising an upper-bound on generalisation error in PAC-Bayes \citep{catoni2007pac}, 
estimating $w_k$ via marginalisation similar to the ``Safe Bayesian'' fractional likelihood approach of \cite{grunwald2012safe}, 
or using information theoretic arguments to calibrate $w_k$ to match certain limiting sampling distributions \citep{holmes2017assigning, lyddon2019general}. 
These strategies essentially view $w_k$ as a tuning parameter. In contrast, in our framework $w_k$ is viewed as a parameter of interest that controls the dispersion of the improper model and affects its interpretation. 
See Section \ref{Sec:KDE} for an illustration in kernel density estimation.



Recall that our main goal is model comparison. To this end, 
in analogy to the marginal likelihood in standard Bayesian model selection,
we define the integrated \Hscore
\begin{equation}
    \mathcal{H}_k(y) = \int\pi_k(\theta_k, \kappa_k) \exp \left\{ -\sum_{i=1}^nH(y_i; f_k(\cdot;\theta_k, \kappa_k)) \right\} d\theta_kd\kappa_k.\label{Equ:HScore}
\end{equation}
%
Also, analogously to Bayes factors and posterior model probabilities, we define the $\mathcal{H}$-Bayes factor as $B^{(\mathcal{H})}_{kl} := \mathcal{H}_k(y) / \mathcal{H}_l(y)$ and
\begin{align}
    \pi(k \mid y)= \frac{\mathcal{H}_k(y) \pi(k) }{\sum_l \mathcal{H}_l(y) \pi(l)}=
    \left( 1 + \sum_{l \neq k} B^{(\mathcal{H})}_{lk} \frac{\pi(l)}{\pi(k)} \right)^{-1},
    \label{eq:postprob_hscore}
\end{align}
where $\pi(k)$ is a given prior probability for each model $k$. In our examples we use uniform $\pi(k)$, since we focus on the comparison of a few models, but in high-dimensional settings it may be desirable to set $\pi(k)$ to favour simpler models. 


We note that an interesting alternative strategy for model comparison, also based on the \Hscore{}, is to extend the prequential framework of \cite{dawid2015bayesian} and \cite{shao2019bayesian} designed for improper priors.
Therein one could adopt a general Bayesian framework, replacing the likelihood by $f_k(x_i;\theta_k, \kappa_k) = \exp \{ -\ell_k(x_i;\theta_k \kappa_k) \}$.
Prequential approaches enjoy desirable properties, such as consistency and leading to joint coherent inference on the model and parameter values.
Unfortunately, prequential inference is computationally hard, particularly when considering several models. First, for each model inference needs to be updated $n$ times to calculate the one-step-ahead predictive distribution. Second, said updates are not permutation invariant, so one should in principle consider the $n!$ orderings of the data.
Thus, while interesting, we leave such line of research for future work.
\new{
We remark however that a salient feature in \cite{dawid2015bayesian} is that one may use improper priors. In contrast, our framework requires one to use proper priors. The reason is that otherwise one may suffer the usual Jeffreys-Lindley paradox in Bayesian model selection \citep{lindley1957statistical}. Specifically, if one considers two nested models and sets an improper prior under the larger one, then our \HBayes{} factor in favor of the smaller model is infinite, i.e. one selects the smaller model regardless of the data.
}

\subsection{Laplace approximation and BIC-type criterion}
\label{ssec:laplaceapprox}


\new{There are many available strategies to compute or estimate integrals such as $\mathcal{H}_k(y)$ in \eqref{Equ:HScore} (see \cite{llorente2020marginal} for a review). 
For its speed and analytic tractability, we consider the Laplace approximation}
\begin{align}
        \tilde{\mathcal{H}}_k(y) :&= (2\pi)^{\frac{d_k}{2}}
        \pi_k \left(\tilde{\eta}_k\right)\exp\left\{ - \sum_{i=1}^nH\left(y_i; f_k\left(\cdot;\tilde{\eta}_k\right)\right)\right\}
        |A_k\left(\tilde{\eta}_j\right)|^{-\frac{1}{2}},
        \label{Equ:LaplaceApproxHScore}\\
        \textrm{ with }\tilde{\eta}_k :&= \argmin_{\eta_k} \sum_{i=1}^n H(y_i; f_k(\cdot; \eta_k)) - \log \pi_k(\eta_k),\label{Equ:PenalisedHScoreMin}
\end{align}
being the mode of the log \Hposterior{}, $A_k\left(\eta_k\right)$ its Hessian at $\eta_k$, and $\eta_k = (\theta_k, \kappa_k)$. \new{We denote the corresponding Laplace approximate \HBayes-factor by $\tilde{B}^{(\mathcal{H})}_{kl} := \mathcal{H}_k(y) / \mathcal{H}_l(y)$.} 

Computational tractability is important when one considers many models or the integrand is expensive to evaluate, e.g. in our kernel density examples it requires $\mathcal{O}(n^2)$ operations. 
Further, the availability of a closed-form expression facilitates its theoretical study (Section \ref{Sec:HscoreModelConsistency}).
See \cite{kass1990validity} for results on the validity of Laplace approximations. We do not undertake such a study, instead we prove our results directly for the approximation \eqref{Equ:LaplaceApproxHScore} that we actually use for inference.

Although we motivate our methodology from a Bayesian standpoint, we note that $\tilde{\mathcal{H}}_k(y)$ can be viewed as Bayesian-inspired information criteria analogous to the BIC \citep{schwarz1978estimating}. 
Specifically, in regular settings where $A_k$ is of order $1/n$, one could take the leading terms in $\log \tilde{\mathcal{H}}_k(y)$ to obtain
\begin{align}
- \sum_{i=1}^n H\left(y_i; f_k (\cdot;\tilde{\eta}_k )\right) - \frac{d_k}{2} \log(n) + \log \pi_k \left(\tilde{\eta}_j \right) 
        \nonumber
\end{align}
as a model selection criterion. This expression is analogous to the BIC, except for the log prior density term, which converges to a constant for any $\pi_k$ bounded away from 0 and infinity. The log prior term can play a relevant role however when considering non-local priors where $\pi_k(\eta_k)$ can be equal to 0 for certain $\eta_k$, see Section \ref{Sec:HscoreModelConsistency}.

\subsection{Two-step inference and model averaging}
\label{ssec:TwoStep}


%

As discussed the \Hposterior{} \eqref{Equ:GeneralBayesRuleHScore} asymptotically recovers the parameters $(\theta_k^{\ast},\kappa_k^{\ast})$ minimising Fisher's divergence,
whereas the general Bayesian posterior \eqref{Equ:GeneralBayesRule} recovers 
the parameters minimising the expected loss $\tilde{\theta}_k^{\ast} = \argmin_{\theta_k\in\Theta_k} \int \ell_k(y;\theta_k, \kappa_k)g(y)dy$, for a given $\kappa_k$.

We adopt the pragmatic view that, while one may consider the \Hposterior{} to choose a model $k$ and learn the associated hyper-parameter $\kappa_k$, after said choice one may want to obtain standard inference under the selected model.
That is, one desires to learn
\begin{equation}
   \arg \min_{\theta_k\in \Theta_k} \int \ell_k(y;\theta_k, \kappa_k^{\ast}) g(y)dy.
\end{equation}
For example, suppose that the \Hscore{} selects a proper probability model (e.g. the Gaussian model). 
One may then wish that inference collapses to standard Bayesian inference under that model, rather than being based on the \Hposterior{} in \eqref{Equ:GeneralBayesRuleHScore}. 
This is easily achieved with a two-step procedure. First, one uses \eqref{eq:postprob_hscore} to select $\hat{k}$ and \eqref{Equ:GeneralBayesRuleHScore} to estimate $\hat{\kappa}_{\hat{k}}$.
%
Second, given $(\hat{k}, \hat{\kappa}_{\hat{k}})$ one uses the general Bayesian posterior \eqref{Equ:GeneralBayesRule} for $\theta_{\hat{k}}$. 
A further alternative to selecting a single model is to mimic Bayesian model averaging \citep{hoeting1999bayesian}, where the estimates under each model are weighted according to the posterior probabilities in  \eqref{eq:postprob_hscore}. 

\new{As an important point for quantifying uncertainty on $\theta_k$, if the data-generating $g(y)$ is not contained in any of the considered models, the generalized posterior $\pi_k(\theta_k \mid y, \kappa_k) \propto \exp\{ - \ell_k(y; \theta_k, \kappa_k) \} \pi_k(\theta_k \mid \kappa_k)$ in \eqref{Equ:GeneralBayesRule} is miss-calibrated relative to the sampling distribution of its posterior mode $\tilde{\theta}_k$. This issue is not specific to generalized posteriors, it also affects any standard Bayesian posterior when the model is misspecified. When the posterior is asymptotically normal, it is possible to define a calibrated version that provides valid frequentist uncertainty quantification. Briefly, following \cite{ribatet2012bayesian} and \cite{giummole2019objective}, define the calibrated posterior as
\begin{align}
\pi_k^{C}(\theta_k \mid y, \kappa_k)=
    \pi_k(\tilde{\theta}_k + C(\theta_k - \tilde{\theta}_k) \mid y, \kappa_k) 
\nonumber,
\end{align}
where $C$ is any matrix satisfying $C^T J_{k} C= J_{k} K_{k}^{-1} J_{k}$,
$J_{k}$ is the expected Hessian of $-\ell_k$ and $K_{k}$ the covariance of its gradient, evaluated at $\tilde{\theta}^*_k$ (in practice, at its estimate $\tilde{\theta}_k$).
It is easy to check that $\pi_k^C$ has the same mode as $\pi_k$.
Specifically, one may take $C= J_{k}^{-1/2} \Sigma^{1/2}$, where $\Sigma= J_{k} K_{k}^{-1} J_{k}$ (\cite{ribatet2012bayesian}, Section 2.2). 
Then, the shape of the calibrated posterior $\pi^C_k$ at $\theta^*_k$ asymptotically matches the sampling distribution of $\tilde{\theta}_k$, and hence provides valid uncertainty quantification (\cite{giummole2019objective}, Theorem 3.1).
}

\section{Consistency of \Hscore{} model selection}{\label{Sec:HscoreModelConsistency}}








We now state Theorem \ref{Thm:HBayesFactorConsistency}, our main result that \eqref{Equ:LaplaceApproxHScore} consistently selects the model closest in Fisher's divergence to the data-generating $g$.
When several models attain the same minimum, as may happen when considering nested models, then  \eqref{Equ:LaplaceApproxHScore} selects that of smallest dimension. 
The proof does not require that the \hyvarinen score is asymptotically normal, 
which holds under the conditions in Theorem 2 of \cite{dawid2016minimum},
but simply the $\sqrt{n}$-consistency proven in Proposition \ref{Lemma:HscoreParamConsistency}. 
 


Theorem \ref{Thm:HBayesFactorConsistency} mirrors standard results for Bayes factors (Theorem 1 in \new{\cite{dawid2011posterior}}), the main difference being that it involves Fisher's rather than Kullback-Leibler divergence.
As discussed after the theorem, when the optimal hyper-parameter occurs at the boundary the model selection consistency provided by Theorem \ref{Thm:HBayesFactorConsistency} may not hold, unless one uses a suitable adjustment.
Before stating the theorem, we interpret its implications.
Part (i) considers a situation where one compares two models $l$ and $k$ such that the former is closer to $g$ in Fisher's divergence. Then $\tilde{B}^{(\mathcal{H})}_{kl}$ converges to 0 at an exponential rate in $n$. 
Part (ii) considers that both models attain the same Fisher's divergence, e.g. nested models such as Tukey's and the Gaussian model when the data are truly Normal. 
Then,  $\tilde{B}^{(\mathcal{H})}_{kl}$ favours the smaller model at a polynomial rate in $n$.

\new{\cite{dawid2015bayesian} also proved model selection consistency for a prequential application of the \Hscore{} (which differs from our methodology), restricted to well-specified linear regression models. \cite{shao2019bayesian} extended the results to other models but only considered non-nested comparisons, for nested cases only a conjecture is given.
Similarly to us they require twice differentiability, Lipschitz conditions on the score and H-score functions, but other assumptions are stronger. For example, the posterior is assumed to concentrate on the optimal $\eta_k^*$ (we prove such result) and must have certain bounded posterior expectations in supremum norm.}

\begin{theorem}
Assume Conditions A1-A4, and let $\eta_k^*=(\theta_k^*,\kappa_k^*)$ be as in Proposition \ref{Lemma:HscoreParamConsistency}.
\begin{enumerate}[label=(\roman*)]
    \item Suppose that $\mathbb{E}_g[H(y; f_l(\cdot; \eta^{\ast}_l))] < \mathbb{E}_g[H(y; f_k(\cdot; \eta^{\ast}_k))]$. Then 
    \begin{align}
        \frac{1}{n}\log \tilde{B}^{(\mathcal{H})}_{kl} = \mathbb{E}_g[H(y; f_l(\cdot; \eta^{\ast}_l))] - \mathbb{E}_g[H(y; f_k(\cdot; \eta^{\ast}_k))] + o_p(1).
        \nonumber
    \end{align}

    \item Suppose that $\mathbb{E}_g[H(y; f_l(\cdot; \eta^{\ast}_l))] = \mathbb{E}_g[H(y; f_k(\cdot; \eta^{\ast}_k))]$. Then  \begin{align}
        \log \tilde{B}^{(\mathcal{H})}_{kl} = \frac{d_l - d_k}{2}\log(n) + O_p(1).
        \nonumber
    \end{align}
\end{enumerate}
\label{Thm:HBayesFactorConsistency}
\end{theorem}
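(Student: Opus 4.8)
The plan is to start from the Laplace approximation \eqref{Equ:LaplaceApproxHScore} and write $\log \tilde{B}^{(\mathcal{H})}_{kl}$ as an explicit sum of four types of terms: (a) the difference of the in-sample \Hscore{} sums $-\sum_i H(y_i;f_k(\cdot;\tilde\eta_k)) + \sum_i H(y_i;f_l(\cdot;\tilde\eta_l))$; (b) the dimension penalty $\tfrac{1}{2}(d_k-d_l)\log(2\pi)$ plus the log-determinant terms $-\tfrac12\log|A_k(\tilde\eta_k)| + \tfrac12\log|A_l(\tilde\eta_l)|$; (c) the log-prior terms $\log\pi_k(\tilde\eta_k) - \log\pi_l(\tilde\eta_l)$. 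The first step is to control (a) via a second-order Taylor expansion of $\frac1n\sum_i H(y_i;f_k(\cdot;\eta_k))$ about $\eta_k^*$, using Proposition \ref{Lemma:HscoreParamConsistency} so that $\tilde\eta_k-\eta_k^*=O_p(n^{-1/2})$: this yields $\frac1n\sum_i H(y_i;f_k(\cdot;\tilde\eta_k)) = \frac1n\sum_i H(y_i;f_k(\cdot;\eta_k^*)) + O_p(1/n)$, and then a LLN gives $\frac1n\sum_i H(y_i;f_k(\cdot;\eta_k^*)) = \mathbb{E}_g[H(y;f_k(\cdot;\eta_k^*))] + O_p(n^{-1/2})$ (the cross term vanishes to leading order because $\eta_k^*$ is the population minimiser, so $\mathbb{E}_g[\nabla_{\eta_k} H] = 0$). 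The second step is to note, under Condition A3 (positive-definite, finite Hessian of the expected \Hscore{} at $\eta_k^*$) together with A4, that $A_k(\tilde\eta_k) = n\, K_k(\eta_k^*)(1+o_p(1))$ where $K_k$ is the population Hessian, so $-\tfrac12\log|A_k(\tilde\eta_k)| = -\tfrac{d_k}{2}\log n + O_p(1)$; and the prior terms in (c) are $O_p(1)$ since $\pi_k$ is continuous and positive at $\eta_k^*$ (for non-local priors one needs the separate boundary treatment, see below).

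For part (i), collecting the above: $\frac1n\log\tilde{B}^{(\mathcal H)}_{kl} = \mathbb{E}_g[H(y;f_l(\cdot;\eta_l^*))] - \mathbb{E}_g[H(y;f_k(\cdot;\eta_k^*))] + O_p(n^{-1/2}) - \tfrac{d_k-d_l}{2n}\log n + O_p(1/n)$, and since the log-determinant and prior contributions are $O(\log n)$ in absolute terms they are $o_p(1)$ after dividing by $n$; the leading constant is the stated strictly positive (for $B_{lk}$) Fisher-divergence gap, giving exponential decay of $\tilde{B}^{(\mathcal H)}_{kl}$. For part (ii), the $O(n)$ terms cancel exactly because $\mathbb{E}_g[H(y;f_l(\cdot;\eta_l^*))] = \mathbb{E}_g[H(y;f_k(\cdot;\eta_k^*))]$ — here one must also check that the in-sample $\frac1n\sum_i H$ terms cancel up to $O_p(1/n)$, which requires the finer expansion: writing $\sum_i H(y_i;f_k(\cdot;\tilde\eta_k)) = \sum_i H(y_i;f_k(\cdot;\eta_k^*)) - \tfrac12 n(\tilde\eta_k-\eta_k^*)^T K_k(\tilde\eta_k-\eta_k^*) + o_p(1)$ and similarly for $l$, and (since the models are nested with equal population \Hscore) the $\sum_i H(y_i;f_k(\cdot;\eta_k^*))$ pieces agree exactly at the shared optimum while the quadratic remainders are $O_p(1)$. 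What survives is then $\log\tilde{B}^{(\mathcal H)}_{kl} = \tfrac{d_l-d_k}{2}\log n + O_p(1)$, the stated polynomial rate favouring the smaller model.

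The main obstacle I anticipate is part (ii), specifically handling the nested/equal-divergence case cleanly when the shared optimum $\eta_k^*$ for the larger model lies on the boundary of $\Theta_k\times\Phi_k$ — e.g. $1/\kappa_2^*=0$ for Tukey's loss under truly Gaussian data — so that $\tilde\eta_k-\eta_k^*$ is not asymptotically normal and the usual $\chi^2$-type cancellation of the $O_p(1)$ quadratic terms must be replaced by a one-sided argument (a half-normal or constrained-optimisation limit). This is why the theorem only claims $O_p(1)$ rather than a precise constant, and why Condition A4 and the re-parameterisation mentioned in Section \ref{Sec:HscoreModelConsistency} are needed; the argument is to bound the boundary-constrained minimiser's quadratic form by its unconstrained counterpart, which is $O_p(1)$ by the $\sqrt n$-consistency of Proposition \ref{Lemma:HscoreParamConsistency}, so the remainder stays $O_p(1)$ without needing its exact distribution. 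A secondary technical point is justifying that the Laplace remainder itself is $O_p(1)$ (not just the naive error bound), which I would get from A2–A4 guaranteeing the integrand is dominated and the Hessian is uniformly non-degenerate in a shrinking neighbourhood of $\tilde\eta_k$, rather than invoking \cite{kass1990validity} as a black box.
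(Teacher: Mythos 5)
Your plan is correct and follows essentially the same route as the paper's proof: the same decomposition of $\log \tilde{B}^{(\mathcal{H})}_{kl}$ into in-sample score, log-determinant and log-prior contributions, convergence of the Hessian and prior terms via Proposition \ref{Lemma:HscoreParamConsistency} together with A1, A3 and A4, and, for part (ii), the nestedness-based exact cancellation of the in-sample scores at the shared optimum with an $O_p(1)$ quadratic remainder controlled purely by $\sqrt{n}$-consistency rather than asymptotic normality (which is precisely how the paper handles boundary optima such as $1/\kappa_2^*=0$). The only superfluous element is your final concern about bounding the Laplace remainder: the theorem is stated for $\tilde{B}^{(\mathcal{H})}_{kl}$, i.e.\ the ratio of the Laplace approximations themselves, so no control of the error relative to the exact integrated $\mathcal{H}$-score is needed.
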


The result requires Conditions A1-A4 given and discussed in Section \ref{App:TechnicalConditions}. Conditions A1-A3 are mild and also summarised before Proposition \ref{Lemma:HscoreParamConsistency}, whereas A4 imposes a Lipschitz condition on the \Hscore{} and its Hessian.
As an important remark, the slower rate in Part (ii) is due to Condition A1 that the prior density $\pi_l(\eta_l^*)>0$ at the optimal $\eta_l^*$, where $l$ is the larger model. 
This defines a so-called local prior, in contrast to non-local priors \citep{johnson2012bayesian, rossell2017nonlocal} which place zero density at the value $\kappa_{l\setminus k}^*$ where the more complicated model $l$ recovers the simpler model $k$. 
Since non-local priors violate A1, Corollary \ref{Thm:HscoreConsistencyNLPs} extends Theorem \ref{Thm:HBayesFactorConsistency} to show that non-local priors attain faster rates in Part (ii), while maintaining the exponential rates in Part (i). 
We will demonstrate that this improvement can have non-negligible practical implications in Section \ref{Sec:LPvsNLP}.


Another practically-relevant remark is that A3 requires a finite expected Hessian near the optimal $(\theta_k^*,\kappa_k^*)$, which can be problematic in certain settings. For example, in Tukey's loss if data are truly Gaussian then $\kappa_2^*=\infty$, which leads to an infinite Hessian. For Tukey's loss the problem can be avoided by reparameterising $\nu_2=1/\kappa_2^2$, for which the Hessian is finite, but more generally such a reparameterisation may not be obvious or not exist. 
These cases provide a further use for non-local priors. 
By Corollary \ref{Thm:HscoreConsistencyNLPs}, one may set a non-local prior that vanishes sufficiently fast at the boundary (basically, a faster rate than that at which the Hessian diverges) to attain model selection consistency. 

As a final remark, we also prove that under extended Conditions A1-A7 the score matching information criterion of \cite{matsuda2019information} does not lead to consistent model selection for the nested setting in Part (ii), see Corollary \ref{Thm:SMICConsistency}. 
This result is analogous to predictive criteria such as cross-validation or Akaike's information criterion not leading to consistent model selection, see \cite{shao1997asymptotic}.







\section{Robust Regression with Tukey's loss}{\label{Sec:RobustRegression}}



We revisit the robust regression in Section \ref{Sec:MotivatingApplication}, where one considers a Gaussian model and the improper model defined by Tukey's loss.
Section \ref{ssec:robustness_efficiency_tradeoff} illustrates that when the data contain outliers, the \Hscore{} chooses Tukey's model and learns its cut-off hyper-parameter in a manner that leads to robust estimation.
Section \ref{Sec:LPvsNLP} shows the opposite situation, where data are truly Gaussian, and the benefits of setting a non-local prior on Tukey's cut-off hyper-parameter to improve the model selection consistency rate. Finally, Section \ref{ssec:geneexpression_datasets} shows two gene expression datasets, one exhibiting Gaussian behavior and the other thicker tails.
We compare our results to the \SMIC \citep{matsuda2019information} which, despite not being designed to compare improper models, to our knowledge is the only existing criterion that can be used for this task.

The \Hscore s for the squared loss ($\ell_1$) and Tukey's loss ($\ell_2$) (see Section \ref{App:HscoreGaussianTukeys}) are
\begin{align}
    H_{1}(y; f(\cdot;x, \theta_1)) &= \sum_{i=1}^n -\frac{2}{\sigma^2} + \frac{(y_i - x_i^T \beta)^2}{\sigma^4}\label{Equ:HscoreGaussian}\\
    H_{2}(y; f(\cdot;x, \theta_2, \kappa_2)) &=  \sum_{|y_i - x_i^T \beta|\leq \kappa_2\sigma}\left\lbrace \left(\frac{(y_i - x_i^T \beta)}{\sigma^2}-\frac{2(y_i - x_i^T \beta)^3}{\kappa_2^2\sigma^4}+\frac{(y_i - x_i^T \beta)^5}{\kappa_2^4\sigma^6}\right)^2 - \right.\nonumber\\
    &\qquad\qquad\qquad\left. 2\left(\frac{1}{\sigma^2}-\frac{6(y_i - x_i^T \beta)^2}{\kappa_2^2\sigma^4}+\frac{5(y_i - x_i^T \beta)^4}{\kappa_2^4\sigma^6}\right)\right\rbrace.\label{Equ:HscoreTukeysLoss}
\end{align}
\color{black}
Given our interest in learning $\kappa_2$, it is important to remark that its role is to define the proportion of observations that should be viewed as outliers. Thus, too small $\kappa_2$ leads to problems.
\color{black}
First, minimising $H_2$ in \eqref{Equ:HscoreTukeysLoss} has a trivial degenerate solution by setting $(\beta,\sigma,\kappa_2)$ where \color{black} $\kappa_2$  is so small that \color{black} only one observation satisfies $|y_i-x_i^T \beta| \leq \kappa_2\sigma$, and $y_i=x_i^T \beta$ for that observation. 
\color{black} Such solution is undesirable, as it views all observations but one as outliers. Fortunately, it is possible to define a range of reasonable $\kappa_2$ using \color{black}
the notion of the breakdown point, i.e. the number of observations that can be perturbed without causing arbitrary changes to an estimator. Following \cite{rousseeuw1984robust}, this leads to a constraint on $(\beta,\sigma,\kappa_2)$ such that
\begin{align}
    \frac{1}{\rho(\kappa_2, \kappa_2)}\sum_{i=1}^n \rho\left(\frac{y_i - x_i^T \beta}{\sigma}, \kappa_2\right) \leq \frac{n}{2} - p,\label{Equ:BreakdownConstraint}
\end{align}
where $\rho\left(\frac{y_i - x_i^T \beta}{\sigma}, \kappa_2\right) = \ell_2(y_i; x_i, \theta_2, \kappa_2) - \frac{1}{2}\log(2\pi\sigma^2)$, $\ell_2$ is as in  \eqref{Equ:TukeysLoss}, and $p= \mbox{dim}(x_i)$. 
See Section \ref{App:TukeysBreakdown} for the derivation and further discussion. 

\new{

Regarding priors, throughout we set $\sigma^2 \sim \mathcal{IG}(2, 0.5)$ and $\beta |\sigma^2 \sim \mathcal{N}\left(0, 5\sigma^2 I\right)$ for the Gaussian and Tukey models, in the latter case truncated to satisfy \eqref{Equ:BreakdownConstraint}.
The idea is that these are mildly informative priors, e.g. the prior variance of $\sigma^2$ is infinite, but avoid degenerate solutions by truncating using \eqref{Equ:BreakdownConstraint}.
For $\kappa_2$ we set an inverse gamma prior in terms of $\nu_2=1/\kappa_2^2$, $\pi_2^{\NLP}(\nu_2) = \mathcal{IG}(\nu_2; a_0, b_0)$.
The inverse gamma is a non-local prior (i.e. has vanishing density at $\nu_2=0$), which as discussed in Section \ref{Sec:LPvsNLP} carries important benefits for model selection.
Its prior parameters $(a_0,b_0)=(4.35, 1.56)$ are set such that $P(\kappa_2 \in (1, 3))=0.95$. The reasoning is that one assigns high prior probability to the cutoff being in a reasonable default range, between 1-$\sigma$ and 3-$\sigma$. If data were truly Gaussian, 0.3\% would lie outside the 3-$\sigma$ region and 68.3\% within 1-$\sigma$. Hence, $\kappa_2=3$ would exclude clear outliers under the Gaussian, and $\kappa_2=1$ would keep most of the Gaussian data. We remark that this is a mildly informative prior, when warranted by the data the posterior can concentrate on $\kappa_2 \not\in [1,3]$ outside this interval (e.g. for the DLD data in Section \ref{ssec:geneexpression_datasets} the posterior mode was $\hat{\kappa}_2=4.12$).
See Section \ref{App:NLPSpecification} for further discussion.
}

Finally, to satisfy the differentiability conditions of Theorem \ref{Thm:HBayesFactorConsistency} 
and enable the use of standard second order optimisation software, we implemented a differentiable approximation to the indicator function in Tukey's loss (see Section \ref{App:AbsApprox}).

\subsection{The marginal \Hscore{} in $\kappa$}
\label{ssec:robustness_efficiency_tradeoff}

Our first example illustrates the properties of the \Hscore{} for calibrating the robustness-efficiency trade-off, in a setting where the data contain outliers. 
We simulated $n=500$ observations from the data-generating $g(y) = 0.9\mathcal{N}(y; 0, 1) + 0.1\mathcal{N}(y; 5, 3)$. 
The goal is to estimate the parameters of the larger component (uncontaminated data), in a manner that is robust to the presence of data from the smaller component (outliers).
We compare the estimation from the Gaussian model $f_1$, which is correctly specified for 90\% of the data, with that of the robust improper model arising from Tukey's loss $f_2$ (where $x_i = 1$ contains only the intercept term).

A first question of interest is studying the ability of the \Hscore{} to learn the cutoff hyper-parameter $\kappa_2$. To this end, we measured the evidence for different $\kappa_2$ provided by the marginal \Hscore{}
\begin{equation}
    \mathcal{H}_2(y; \kappa_2) = \int\pi_2(\theta_2) \exp\left(-\sum_{i=1}^nH(y_i; f_2(\cdot;\theta_2, \kappa_2))\right)d\theta_2.\label{Equ:MarginalHScoreTukeys}
\end{equation}
\begin{figure}[h]
\begin{center}
\includegraphics[trim= {0.0cm 0.0cm 0.0cm 0.0cm}, clip,  
width=0.49\columnwidth]{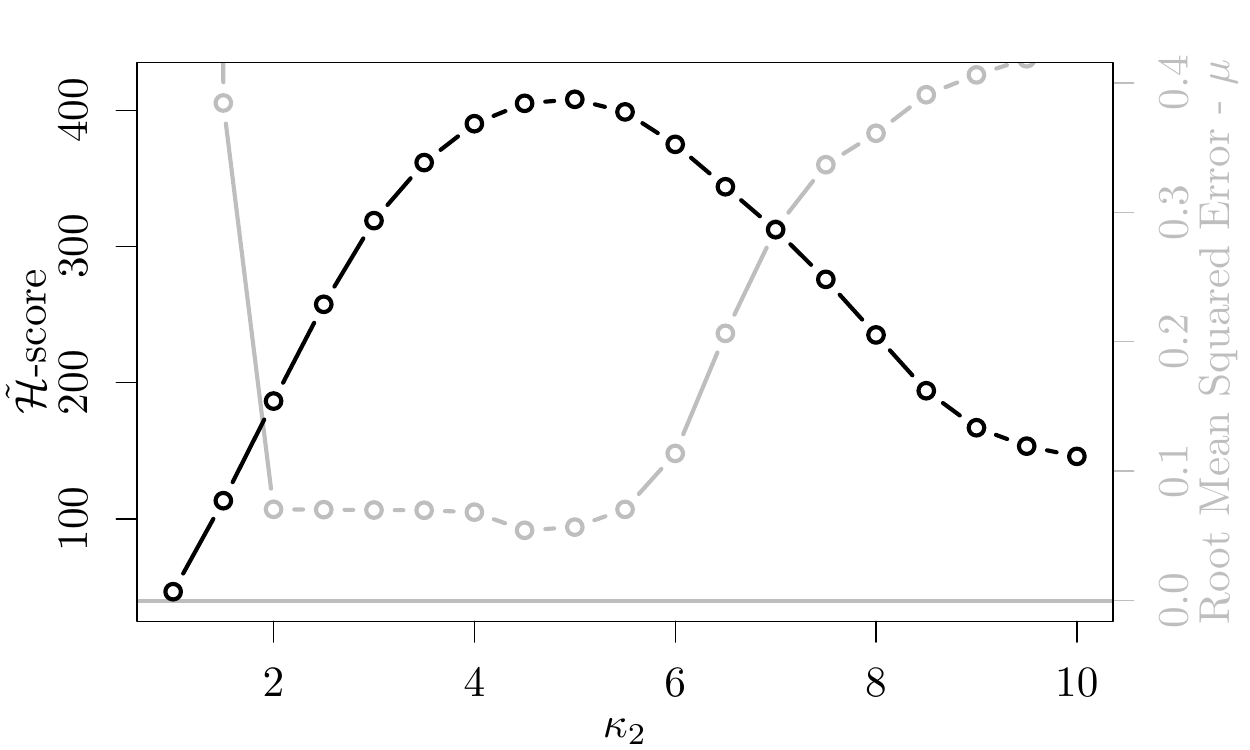}
\includegraphics[trim= {0.0cm 0.0cm 0.0cm 0.0cm}, clip,  
width=0.49\columnwidth]{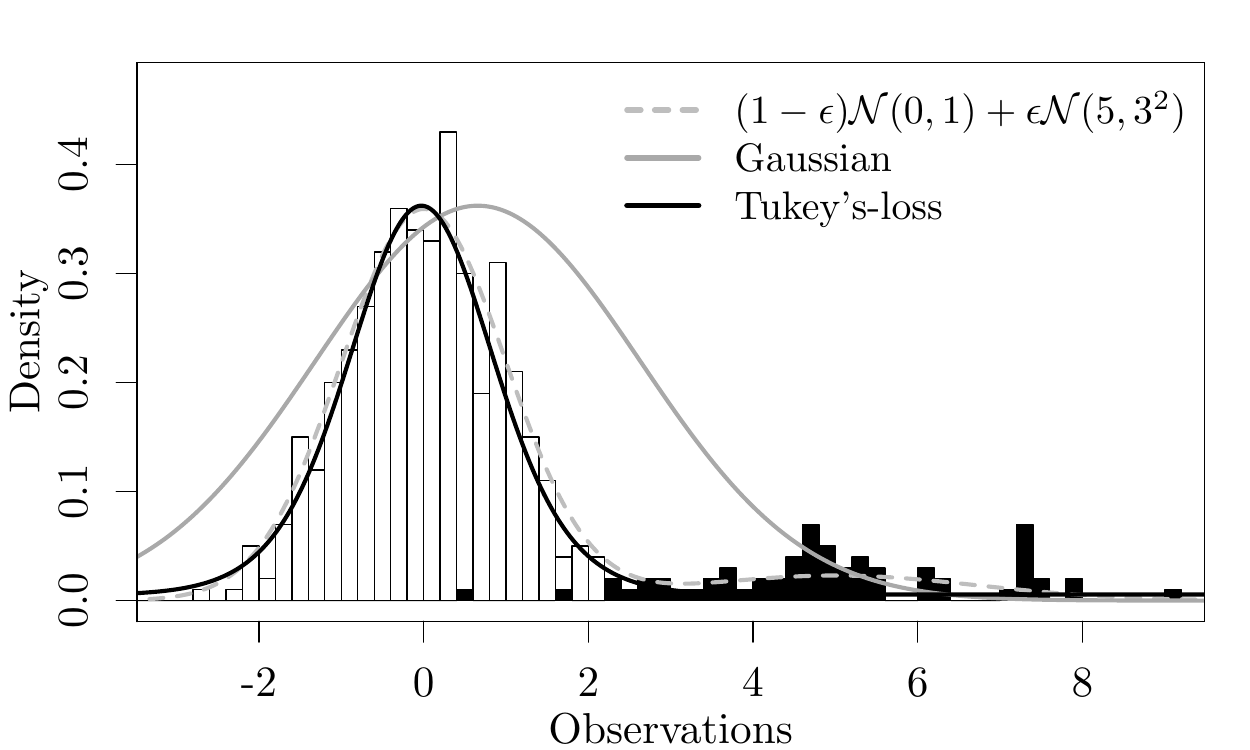}
\caption{\textbf{Left:} Marginal \Hscore{} $\mathcal{H}_2(y; \kappa_2)$ (black) 
and asymptotic approximation to the RMSE of $\hat{\beta}(\kappa_2)$ (grey) for several $\kappa_2$. 
\textbf{Right:} Histogram of $n = 500$ observations from $g(y) = 0.9\mathcal{N}(y;0, 1) + 0.1\mathcal{N}(y;5, 3)$ and fitted Gaussian and Tukey ($\hat{\kappa}_2 = 5$) models (height set to match the mode of $g(y)$).
}
\label{Fig:MarginalHscore_epsContamination}
\end{center}
\end{figure}

The left panel in Figure \ref{Fig:MarginalHscore_epsContamination} shows $\mathcal{H}_2(y; \kappa_2)$ (black line) for a grid $\kappa_2 \in \{1, 1.5, 2, \ldots, 10\}$, along with
\new{
an asymptotic approximation to the root mean squared error (RMSE) of $\hat{\beta}(\kappa_2)$ motivated by \cite{warwick2005choosing} (see Section \ref{App:asympt_MSE}).
The marginal \Hscore{} is highest for $\hat{\kappa}_2 = 5$, i.e. this value is chosen as best approximating the data-generating $g$.}
The right panel of Figure \ref{Fig:MarginalHscore_epsContamination} shows that Tukey's model for $\hat{\kappa}_2=5$ provides a good description of the uncontaminated component of the data, in the sense of capturing the log-gradient of $g(y)$ around the mode, and excludes most outliers.

\new{Interestingly, the RMSE associated to $\hat{\beta}(\kappa_2)$ at $\hat{\kappa}_2=5$ was close to optimal (Figure \ref{Fig:MarginalHscore_epsContamination}, grey line). 
A convenient feature of the asymptotic RMSE is that we can examine the decomposed effect of the bias (due to contamination) and the variance, see Section \ref{App:asympt_MSE} for details.
For too small $\kappa_2$ the RMSE increases, since then there are more observations beyond the cutoff (viewed as outliers), increasing the variance of $\hat{\beta}(\kappa_2)$. 
If $\kappa_2$ is too large, then the contaminated observations lie within the cutoff, which increases the bias
(recall that at $\kappa_2=\infty$, $\hat{\beta}(\kappa_2)$ is the sample mean).
}


\subsection{Non-local priors and model selection consistency}{\label{Sec:LPvsNLP}}

We demonstrate the selection consistency (Theorem \ref{Thm:HBayesFactorConsistency}) when data are truly Gaussian,
and that setting a non-local prior on the cutoff hyper-parameter $\kappa_2$ speeds up this selection (Corollary \ref{Thm:HscoreConsistencyNLPs}). We simulated $100$ independent data sets of sizes $n = 100$, $1,000$, $10,000$ and $100,000$ from the data-generating $g(y_i)= \mathcal{N}\left(y_i; x_i^T\beta, \sigma^2\right)$, where the first entry in $x_i \in \mathbb{R}^6$ corresponds to the intercept and the remaining entries are Gaussian with unit variances and 0.5 pairwise covariances, 
$\beta = (0,0.5,1,1.5,0,0)$ and $\sigma^2 = 1$.

Recall that Tukey's loss collapses to the Gaussian model for $\kappa_2=\infty$, and otherwise adds certain flexibility by allowing one to consider an improper model.
If this extra flexibility is not needed, following Occam's razor one wants to choose the Gaussian model.
While Theorem \ref{Thm:HBayesFactorConsistency} guarantees this to occur asymptotically, our experiments show that setting a local prior on $\kappa_2$ leads to poor performance, even for $n=100,000$. 
Specifically, we compare our default non-local prior $\pi_2^{\NLP}(\nu_2) = \mathcal{IG}(\nu_2; 4.35, 1.56)$ to a (local) half-Gaussian prior $\pi_2^{\LP}(\nu_2) \propto 1_{\nu_2\geq 0}\mathcal{N}(\nu_2; 0, 1)$, where $\nu_2 = 1/\kappa_2^2$. 
By Corollary \ref{Thm:HscoreConsistencyNLPsIG}, the log-\HBayes{} factor under the non-local prior should favor the Gaussian model at least at a $\sqrt{n}$ rate, in contrast to the local prior's $\log(n)$ rate. 

Figure \ref{Fig:Consistency_LP_vs_NLP} compares the \SMIC with our integrated \Hscore{} under the local and non-local priors. Score differences are plotted such that negative values indicate correctly selecting the Gaussian model. Firstly, there is no evidence of \SMIC being consistent as $n$ grows,
even for $n=100,000$ the wrong model was selected $11\%$ of the time. 
The \Hscore{} under the local prior has a decreasing median in $n$, 
but exhibits heavy tails and even for $n=100,000$ it also failed to select the Gaussian model $11\%$ of the time. Under the non-local prior, already for $n = 1,000$ the correct decision was made 99\% of the time. These experiments illustrate the benefits of non-local priors to penalise parameter values near the boundary ($1/\kappa_2^2 = 0$, in this example). 
Recall also that, as discussed in Section \ref{Sec:HscoreModelConsistency}, in general in such situations a local prior need not even attain consistency.

\begin{figure}[h!]
\begin{center}
    \includegraphics[trim= {0.0cm 0.0cm 0.0cm 0.0cm}, clip,  
    width=0.32\columnwidth]{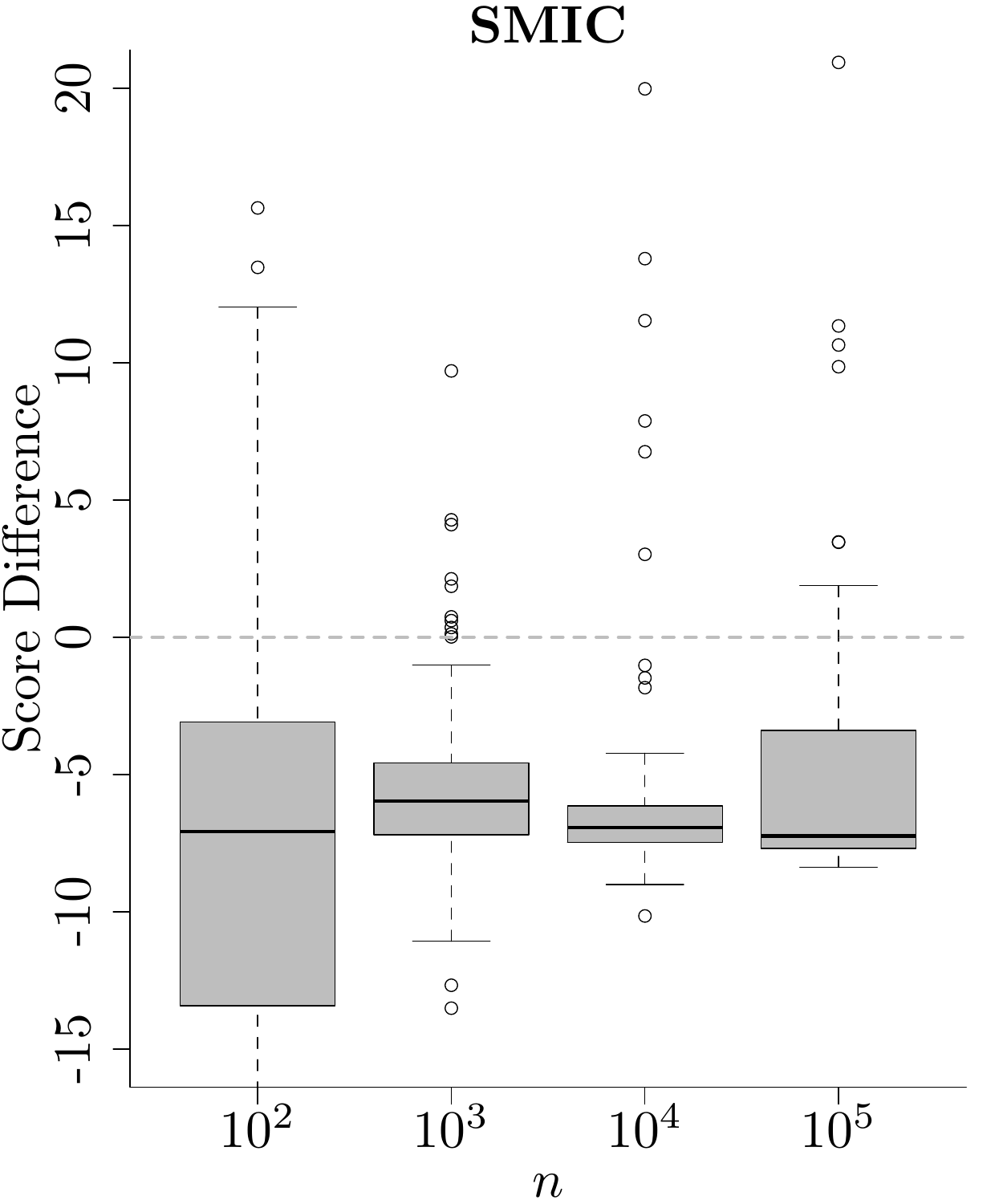}
    \includegraphics[trim= {0.0cm 0.0cm 0.0cm 0.0cm}, clip,  
    width=0.32\columnwidth]{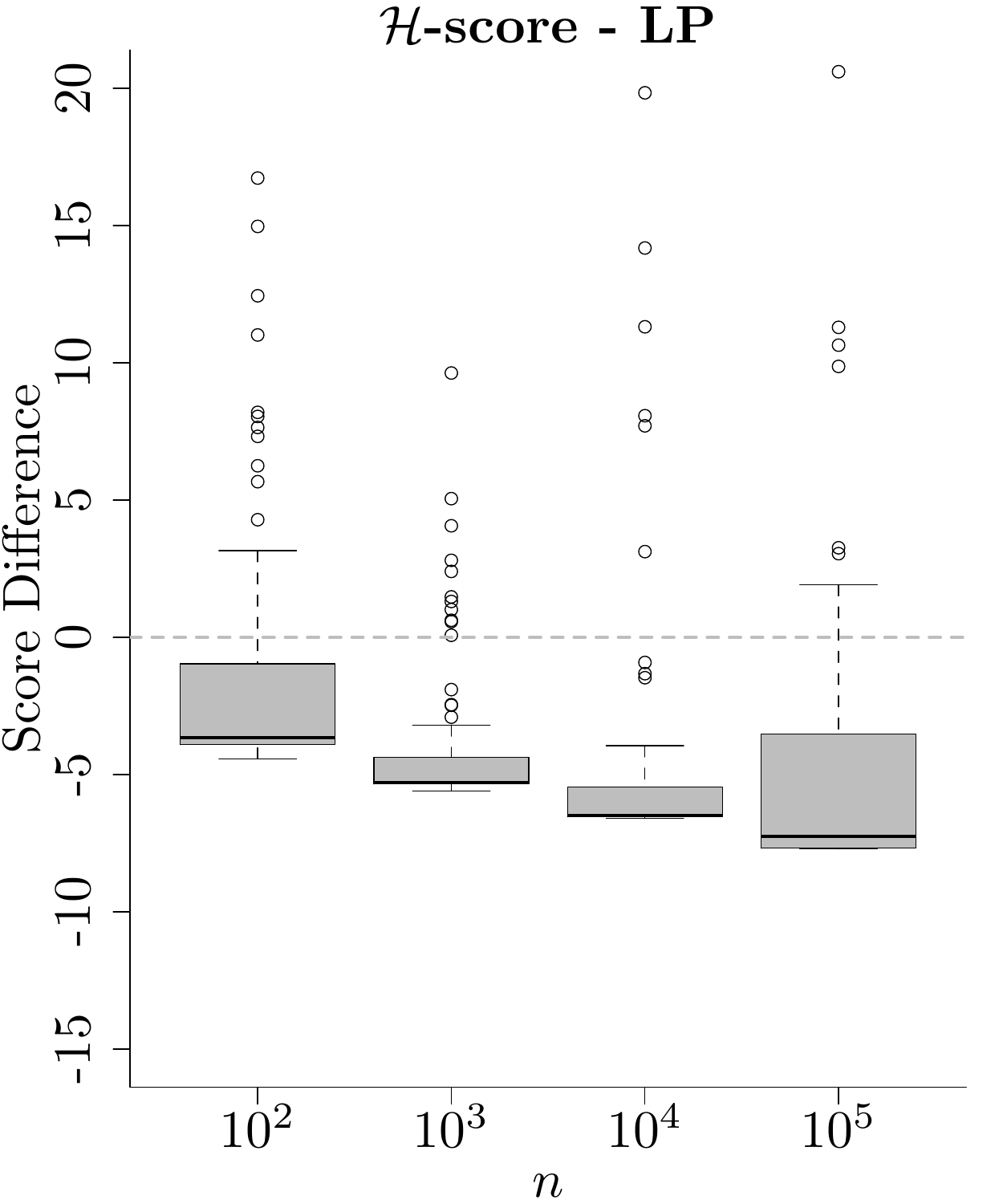}
    \includegraphics[trim= {0.0cm 0.0cm 0.0cm 0.0cm}, clip,  
    width=0.32\columnwidth]{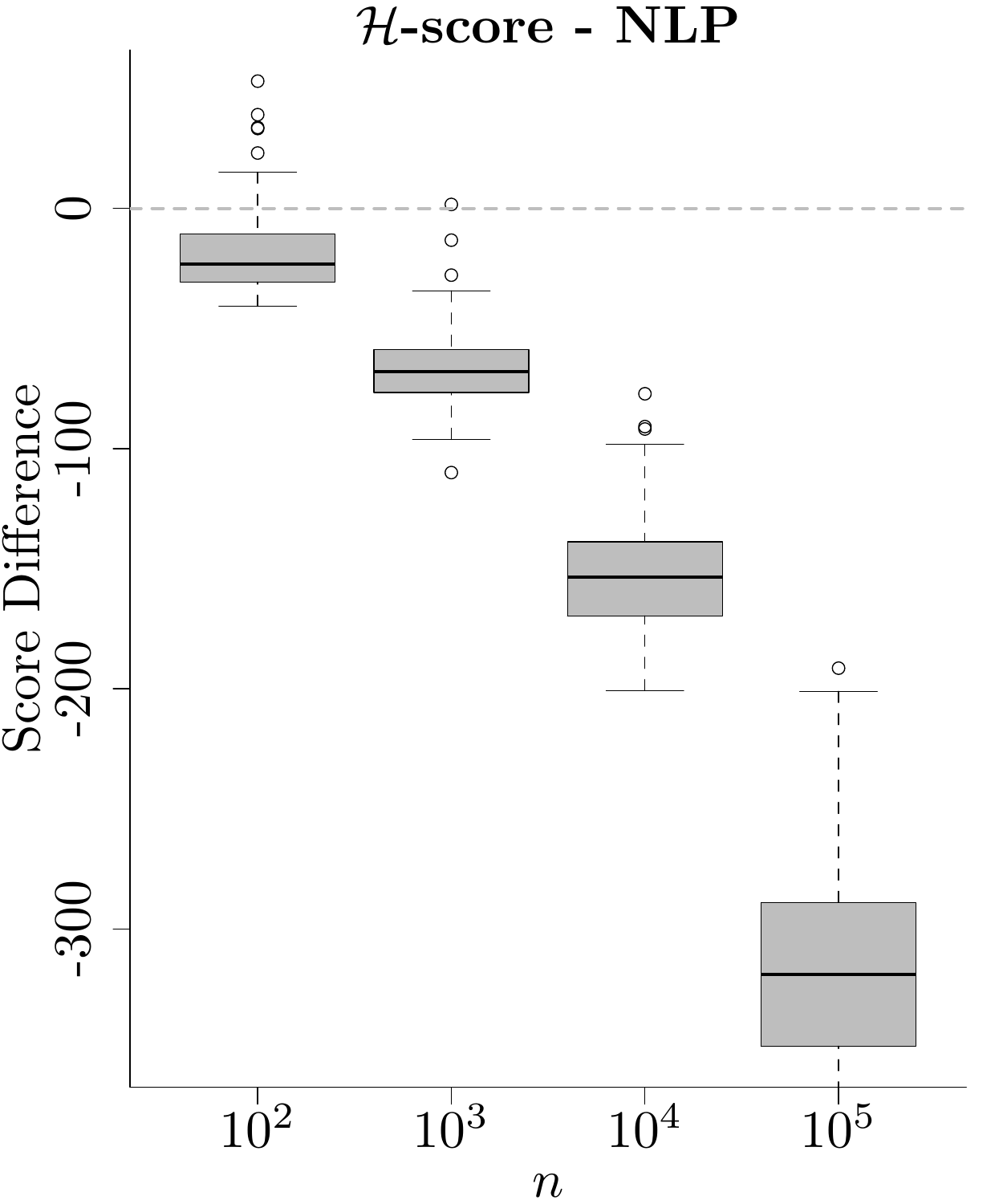}
    \caption{Selecting between the Gaussian and Tukey models in 100 truly Gaussian datasets. \SMIC ($\SMIC_1(y) - \SMIC_2(y)$) vs \Hscore{} with local and non-local priors ($\log\tilde{\mathcal{H}}_2(y) - \log\tilde{\mathcal{H}}_1(y)$). Negative values correctly select the Gaussian model.}
    \label{Fig:Consistency_LP_vs_NLP}
\end{center}
\end{figure}

\subsection{Real datasets}{\label{ssec:geneexpression_datasets}}

We considered two gene expression data sets from \cite{rossell2018tractable}. 
In the first, 
the data are well-approximated by a Gaussian distribution, whereas the second exhibits thicker tails.
In both examples we used the \Hscore{} to compare the Gaussian model ($\ell_1$) and Tukey's loss ($\ell_2$).

\subsubsection{\TGFB data}{\label{TGFBanalysis}}

The dataset from \cite{calon2012dependency} concerns gene expression data for $n=262$ colon cancer patients. 
Previous work \citep{rossell2017nonlocal, rossell2018tractable} focused on selecting genes that have an effect on the expression levels of \TGFB, a gene known to play an important role in colon cancer progression. 
Instead, we study the relation between \TGFB and the 7 genes (listed in Section \ref{App:SelectedVariables}) that appear in the
`\TGFB1 pathway' according to the \textit{KEGGREST} package in \R{} \citep{tenenbaum2016keggrest}, so that $p=8$ after including the intercept. 

%

\begin{figure}[h!]
\begin{center}
    \includegraphics[trim= {0.0cm 0.0cm 0.0cm 0.0cm}, clip,  
    width=0.49\columnwidth]{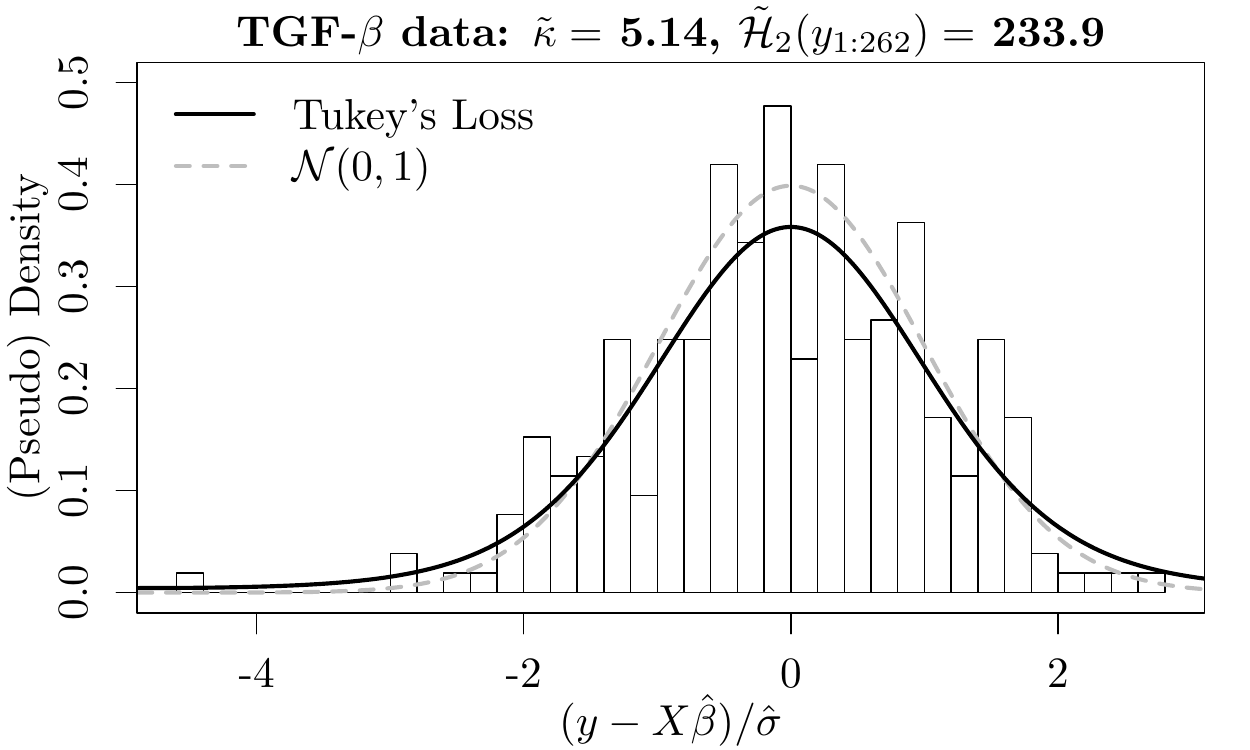}
    \includegraphics[trim= {0.0cm 0.0cm 0.0cm 0.0cm}, clip,  
    width=0.49\columnwidth]{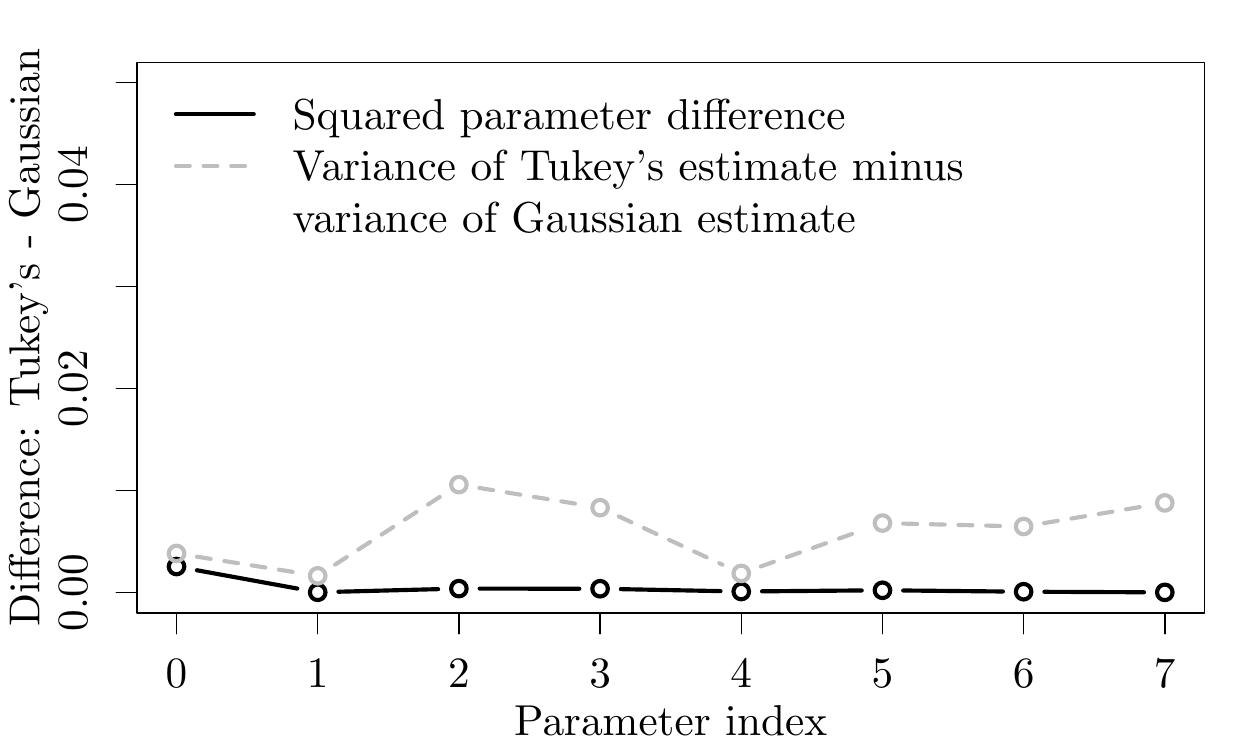}
    \includegraphics[trim= {0.0cm 0.0cm 0.0cm 0.0cm}, clip,  
    width=0.49\columnwidth]{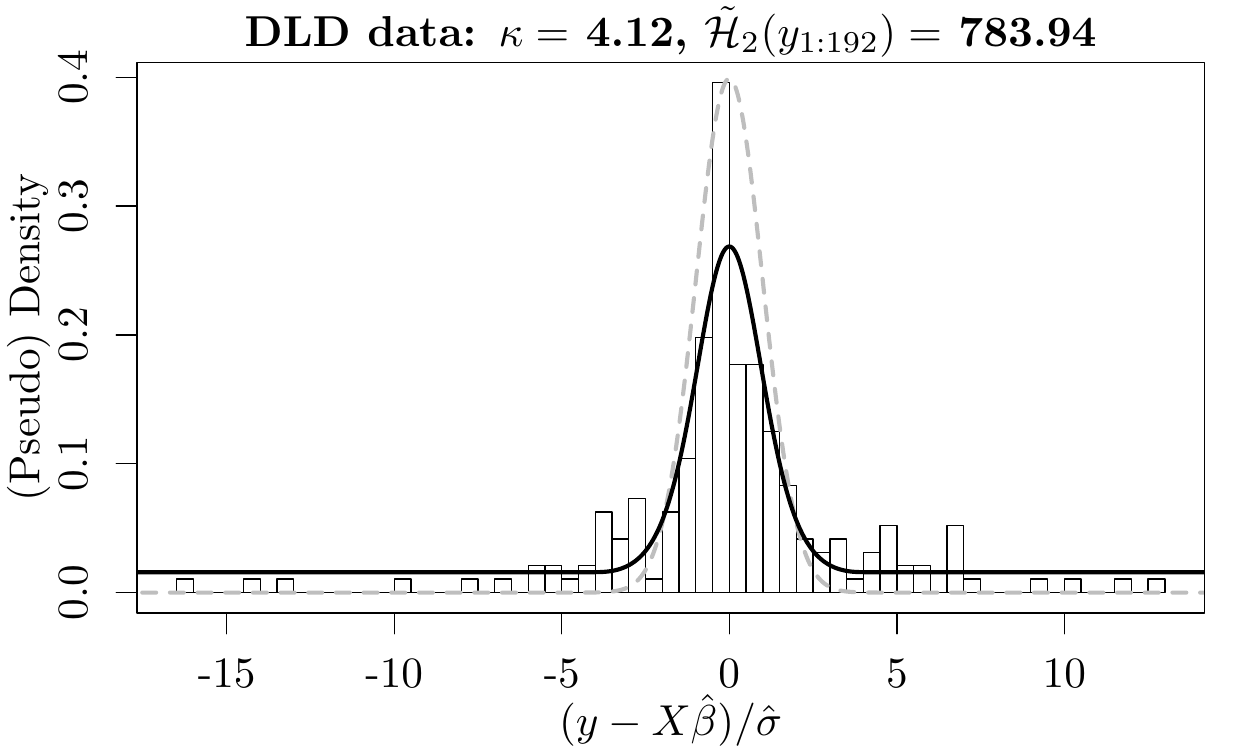}
    \includegraphics[trim= {0.0cm 0.0cm 0.0cm 0.0cm}, clip,  
    width=0.49\columnwidth]{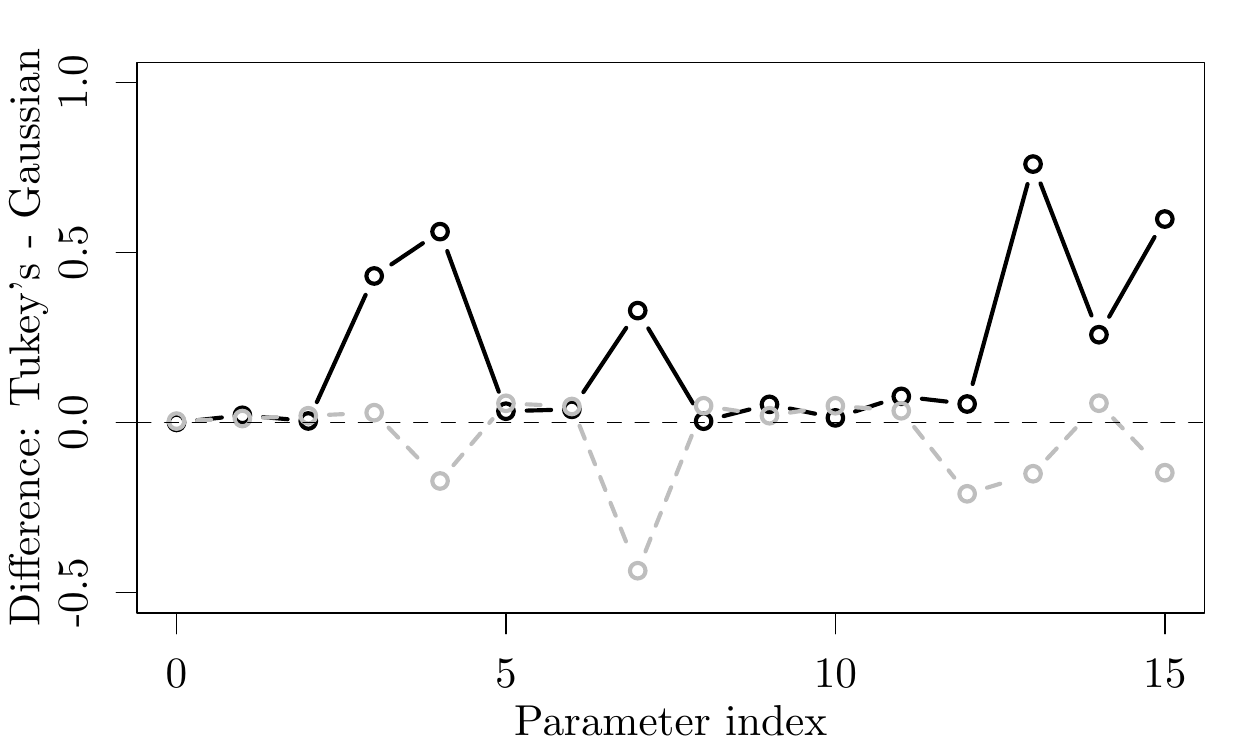}
    \caption{\textbf{Top:} \TGFB data, where the \Hscore{} selected the Gaussian model. \textbf{Bottom:} \DLD data, where the \Hscore{} selected Tukey's loss. \textbf{Left:} fitted Tukey-based density to the residuals. 
    \textbf{Right:} Squared difference between the \Hposterior{} mean estimates of each $\beta_j$ under Tukey's model minus that under the Gaussian (solid black line), and difference between their variances (estimated with $B = 500$ bootstrap re-samples).
    }
    \label{Fig:TGFB_DLD_Tukeys_results}
\end{center}
\end{figure}


The top panels in Figure \ref{Fig:TGFB_DLD_Tukeys_results} summarises the results.
The integrated \Hscore{} for the Gaussian model was $\tilde{\mathcal{H}}_1(y) = 272.88$ and that for Tukey's loss $\tilde{\mathcal{H}}_2(y) = 233.90$, providing strong evidence for the Gaussian model. 
This is in agreement with the \SMIC ($\SMIC_1(y) = -283.93$ and $\SMIC_2(y) = -277.55$, where minimisation is desired) and results in \cite{rossell2018tractable}, who found evidence for Gaussian over (thicker) Laplace tails.  
The left panel shows the fitted densities which, in conjunction with the \textit{Q-Q} Normal plots in Figure \ref{Fig:TGFB_DLD_Gaussian_results}, show that the residual distribution is well-approximated by a Gaussian.
The right panel shows the squared difference between the \Hposterior{} mean parameter estimates of each $\beta_j$ under Tukey's model minus that under the Gaussian, and the differences between their sampling variances (estimated via bootstrap, dashed gray line).
Both models returned very similar point estimates, but the Gaussian had smaller variance for all parameters.
Altogether, these results strongly support that the Gaussian model should be selected over Tukey's.




\subsubsection{\DLD dataset}{\label{DLDanalysis}}

We consider an RNA-sequencing data set from \cite{yuan2016plasma} 
measuring gene expression for $n = 192$ patients with different types of cancer. \cite{rossell2018tractable} studied the impact of 57 predictors on the expression of \DLD, a gene that can perform several functions such as metabolism regulation. 
To illustrate our methodology, we selected
the 15 variables with the 5 highest loadings in the first 3 principal components, and used the integrated \Hscore{} to choose between the Gaussian and Tukey's loss. Section \ref{App:SelectedVariables} lists the selected variables.

The bottom panels of Figure \ref{Fig:TGFB_DLD_Tukeys_results} summarise the results.
The \Hscore{} strongly supported Tukey's loss ($\tilde{\mathcal{H}}_1(y) = 155.57$ for the Gaussian model, $\tilde{\mathcal{H}}_2(y) = 783.94$ for Tukey's), with \Hposterior{} mean estimate $\hat{\kappa}_2 = 4.12$.  
%
Indeed, the bottom left panel indicates that the residuals have thicker-than-Gaussian tails, see also the \textit{Q-Q} Normal residual plot in Figure \ref{Fig:TGFB_DLD_Gaussian_results}. 
The bottom right panel illustrates two things. Firstly, the estimated coefficients of 6 of the 16 predictors differ quite considerably between the Gaussian model and Tukey's loss (solid line). Second, the latter often have smaller variance (estimated via bootstrap). Both observations align with the presence of thicker-than-normal tails, which can cause parameter estimation biases and inflated variance. 
Notably, the \Hscore{} agrees with \cite{rossell2018tractable}, who selected Laplace over Gaussian tails, but disagrees with the \SMIC ($\SMIC_1(y_{1:192}) = -145.72$ and $\SMIC_2(y_{1:192}) = -141.67$, where minimisation is desired). 
We speculate that the \SMIC results may have been affected by outliers, which could cause instability in the \SMIC asymptotic bias estimation. 


\section{Kernel density estimation}{\label{Sec:KDE}}

We revisit the kernel density estimate $\hat{g}_h(\cdot)$ from Section \ref{Sec:KDEintro}. The associated loss is
\begin{align}
    \ell(y_i; y, h, w) = 
    -w\log \hat{g}_h(y_i).\label{Equ:KDEpower_w}
\end{align}
This loss has two hyper-parameters $\kappa= (h,w)$, where $w > 0$. As discussed in Section \ref{ssec:hscore}, 
standard Bayesian inference on the bandwidth $h$ is not possible, since the loss does not define a proper probability model for the observed data $y$.
We also included a tempering hyper-parameter $w$ to illustrate how it can provide added flexibility and improve performance. 
%
%
%
A downside of allowing $w \neq 1$ is that the estimated density $\hat{g}_{h, w}(x) = \exp\{ - \ell(x; y, h, w)\}$ for a future observation $x$ has a convoluted normalisation constant. In our examples we found it more convenient to approximate the required univariate integrals with numerical integration.

We note that while kernel density estimation is a well-studied area, to the best of our knowledge the incorporation of a tempering parameter $w$ has not been considered, and is hence a further innovation enabled by the \Hscore{} associated to \eqref{Equ:KDEpower_w} (derived in Section \ref{App:HscoreKDE}).
\new{
%
Regarding the prior on $(h,w)$, we used the prior family $\pi(h^2, w) = \mathcal{IG}(h^2; 2, b_0)\textup{Exp}(w; \lambda_0)$, setting $(b_0,\lambda_0)$ to minimise the mean integrated squared error of $\hat{g}_{h,w}$ under truly Gaussian data. This results in a default $\pi(h^2, w)$ centered on $(h,w)$ values such that, if the data were truly Gaussian, the estimated density would be accurate, subject to the prior being minimally informative (e.g. $h^2$ has infinite prior variance). The elicited values were $b_0 = 0.024$ and $\lambda_0 = 0.725$. See Section \ref{App:KDEPrior} for full details. 
}

\subsection{Gaussian mixture implementations}{\label{Sec:KDEGaussianMixtures}}

We simulated data from four 
Gaussian mixture models considered by \cite{marron1992exact}
(with two further examples provided in Section \ref{App:KDEGaussianMictures}). 
All scenarios consider a data-generating
\begin{align}
g(y)= \sum_{j=1}^J m_j \mathcal{N}(y; \mu_j, \sigma_j)
    \nonumber
\end{align}
with parameters chosen as follows.
\begin{itemize}[leftmargin=*]
    \item\textit{Bimodal}: $J=2$ components,  $\mu_1 = -1.5$, $\mu_2 = 1.5$, $\sigma_1 = \sigma_2 = 1/2$ and $m_1 = m_2 = 0.5$.
    \item\textit{Trimodal}: $J=3$ components,  $\mu_1 = -1.2$, $\mu_2 = 0$, $\mu_3 = 1.2$, $\sigma_1 = \sigma_3 = 3/5$, $\sigma_2 = 1/4$, $m_1 = m_3 = 0.45$, and $m_2 = 0.1$.
    \item\textit{Claw}: $J=6$ components, $\mu_1 = 0$, $\sigma_1 = 1$ and $m_1 = 0.5$ then for $j = 2, \ldots, 6$: $\mu_j = - 2 + j/2$, $\sigma_j = 0.1$ and $m_j = 0.1$.
    \item \textit{Skewed}: $J=8$ components, for $j = 1, \ldots, 8$: $\sigma_j = \left(\nicefrac{2}{3}\right)^{j - 1}$,   $\mu_j = 3(\sigma_j - 1)$ and $m_j = 1/8$.
\end{itemize}

We compared the \Hscore{} estimated density when learning $h$ and $w$ from the data against the default kernel estimate in \R, which sets the bandwidth using Silverman's rule-of-thumb \citep{silverman1986density}, \new{using unbiased cross-validation \citep{rudemo1982empirical, bowman1984alternative} to set the bandwidth hyperparameter, implemented in \R's \textit{kedd} package \citep{guidoum2015kernel}}, a Bayesian non-parametric density estimate using a Dirichlet Process mixture of Gaussians (\DPMM), implemented in \R's \dirichletprocess package \citep{ross2018dirichletprocess} using their default parameters and placing a prior on the Dirichlet Process concentration parameter, \new{and a finite Bayesian Mixture model where we use the marginal likelihood to select the number of Gaussian mixture components, implemented in \R's \textit{mombf} package \citep{rossell2021package} (full details are provided in Section \ref{App:FiniteGaussianMixDetails}). 
Since the data are truly generated by a finite Gaussian mixture, the last competitor serves as a benchmark for all methods.
Additional comparisons against the \Hscore{} estimated density with fixed $w = 1$ are in Section \ref{App:KDEGaussianMictures}.}
We sampled $n = 1,000$ observations from each simulation setting above and standardised the data to zero mean and unit variance, as recommended for the \dirichletprocess package. 

\begin{figure}[h]
\begin{center}
\includegraphics[trim= {0.0cm 0.0cm 0.0cm 0.0cm}, clip,  
width=0.49\columnwidth]{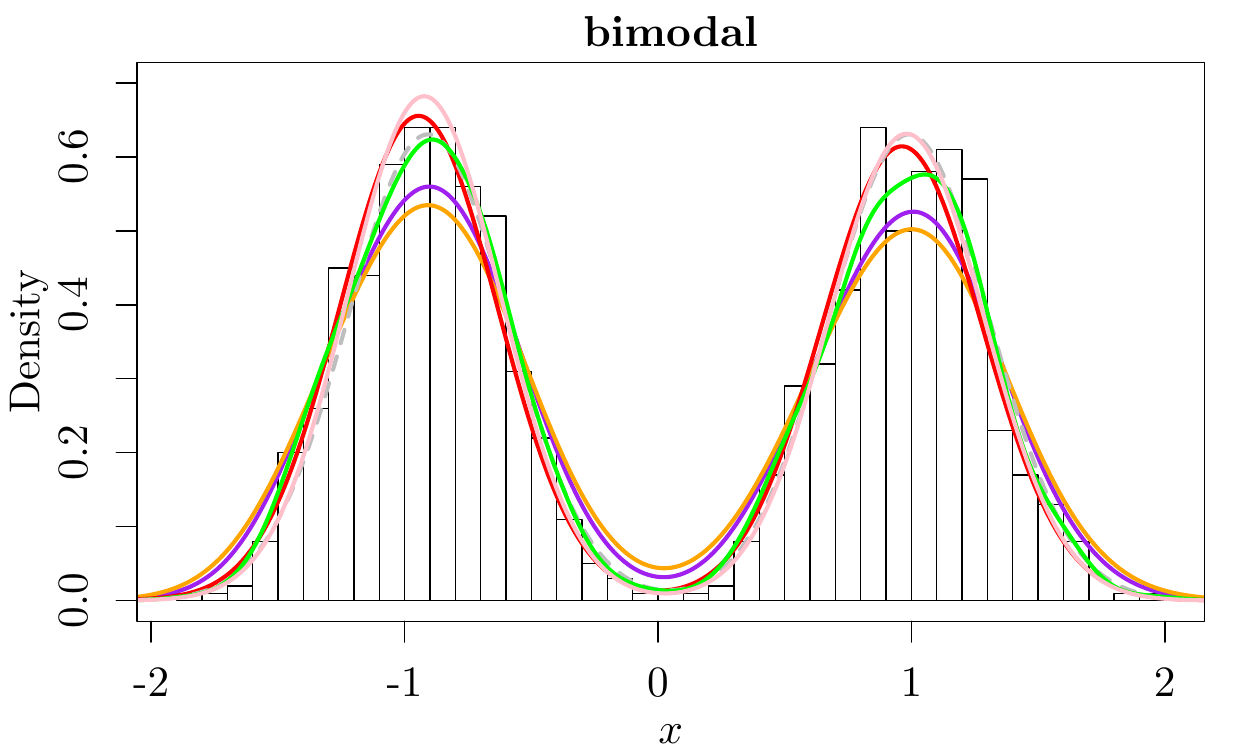}
\includegraphics[trim= {0.0cm 0.0cm 0.0cm 0.0cm}, clip,  
width=0.49\columnwidth]{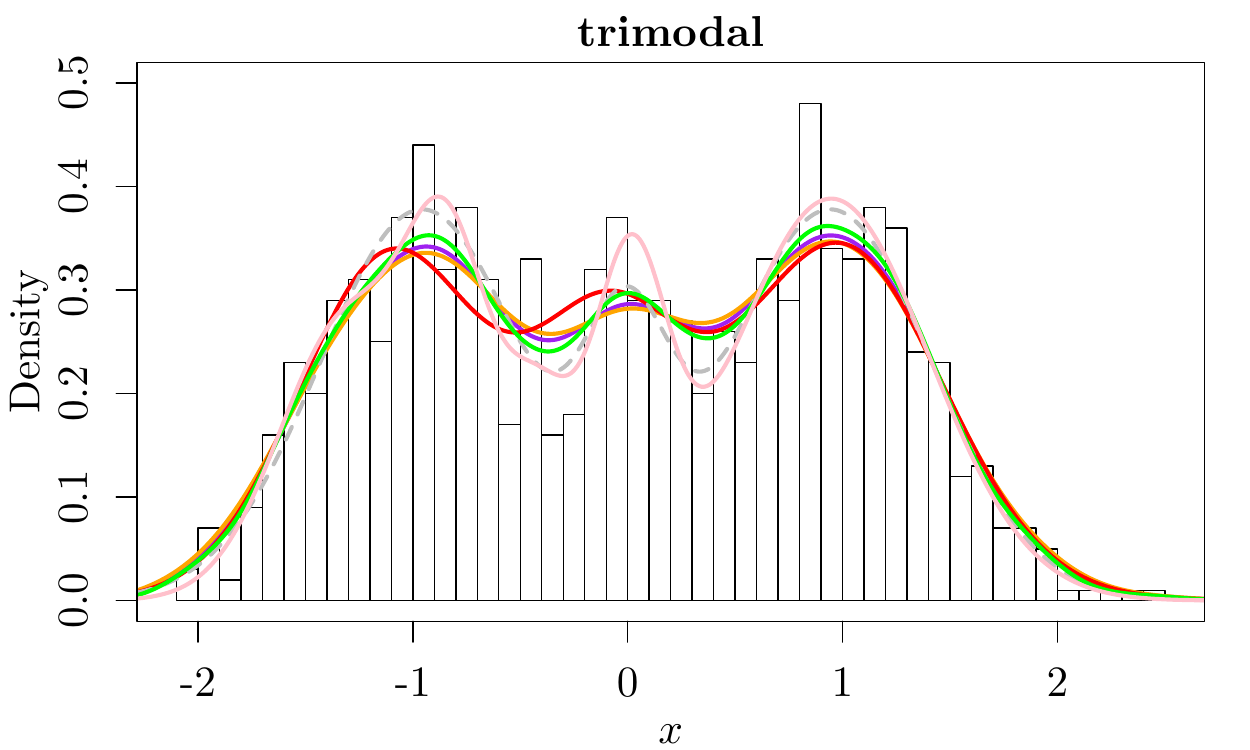}
\includegraphics[trim= {0.0cm 0.0cm 0.0cm 0.0cm}, clip,  
width=0.49\columnwidth]{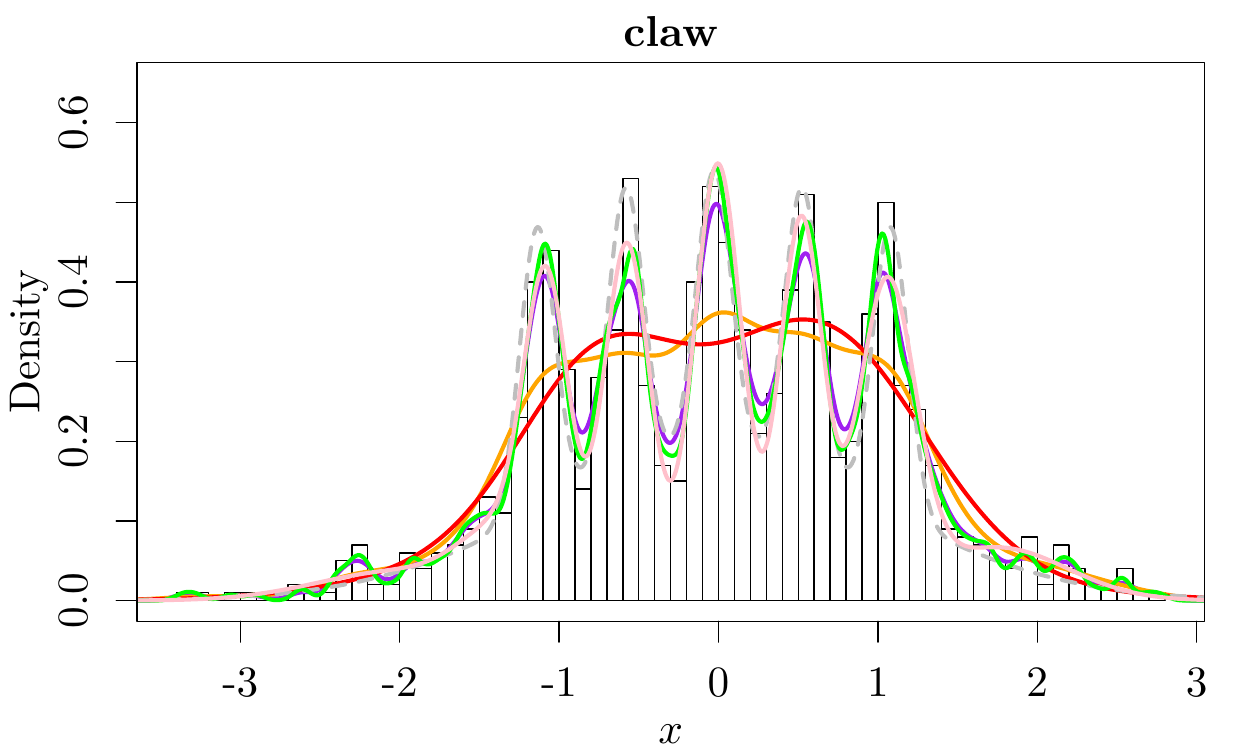}
\includegraphics[trim= {0.0cm 0.0cm 0.0cm 0.0cm}, clip,  
width=0.49\columnwidth]{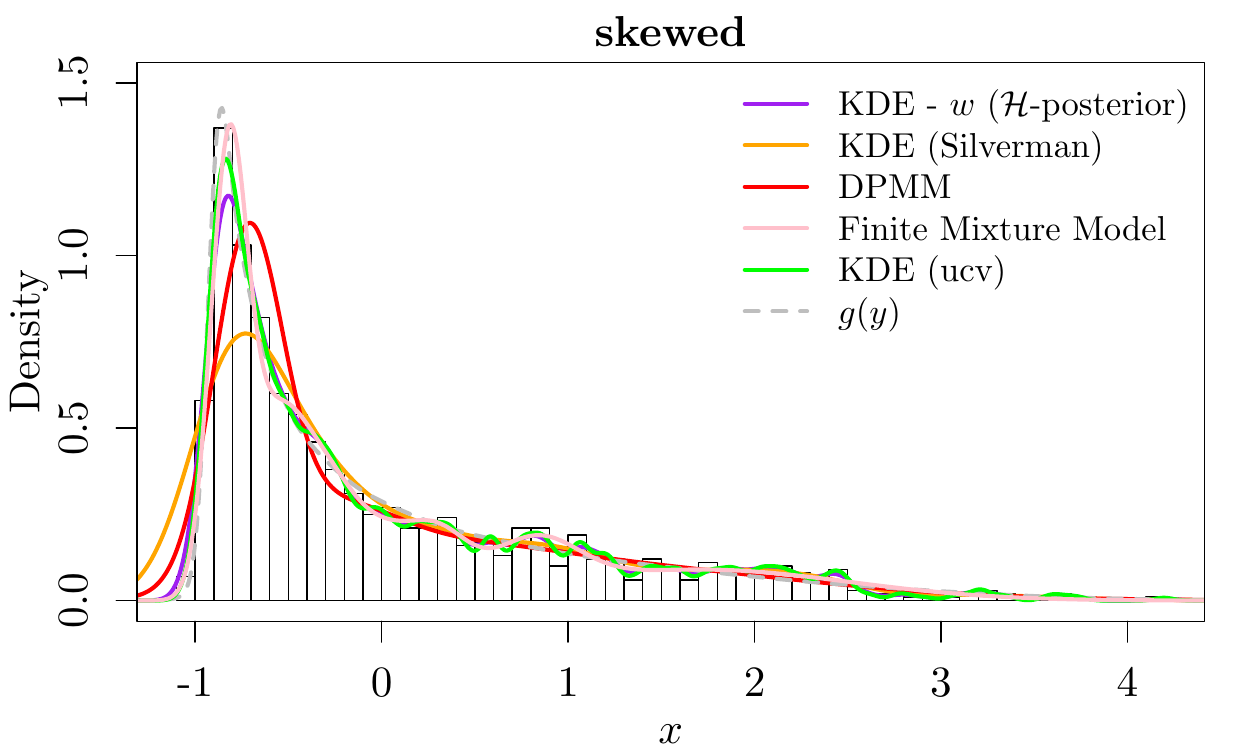}
\caption{Density estimation for Gaussian mixture data. Histograms of observed data, standardised to 0 mean and unit variance, and estimated density by  
\R's density function with the bandwidth rule of \cite{silverman1986density}, unbiased cross validated bandwidth estimation, 
\DPMM, Bayesian mixtures with marginal likelihood selection of components, 
and the \Hposterior{} estimate with tempering  ($w \neq 1$). For Bayesian methods the density associated with MAP estimates is plotted.
}
\label{Fig:KDEEstimation}
\end{center}
\end{figure}

Figure \ref{Fig:KDEEstimation} provides histograms of the standardised data and the estimated densities, while Table \ref{Tab:KDEEstimation} shows Fisher's divergence of the fitted models 
to the data-generating $g(y)$. 
\new{In Table \ref{Tab:KDEEstimation}, the \Hscore{} estimate was always the best or second best, with significant improvements over other methods. Its performance overall was comparable to the finite mixture model, which assumes the correct data-generating mechanism. Among the other competitors the cross-validation estimate performed best. This is also apparent in Figure \ref{Fig:KDEEstimation},}
where in the \textit{Claw} data the \DPMM and \R's kernel estimate missed the 5 modes, and to a lesser extent in the \textit{Trimodal} 
and \textit{Skewed} data sets. In the \textit{Claw} data however, the cross-validation \KDE exhibited a wiggly behavior in the tails.
\new{Section \ref{App:KDEGaussianMictures} shows that in most examples the \Hscore{} estimate improved by learning the hyper-parameter $w$, relative to $w=1$, suggesting that the tempering parameter can add useful flexibility.} 

\begin{table}[ht]
\centering
\caption{Fisher's divergence between the density estimates and the data-generating Gaussian mixtures in four simulation scenarios. The best performing method in each is highlighted in bold face.}
\begin{tabular}{rrrrrr}
  \hline
 & \KDE & \KDE (ucv) & \DPMM & Finite Mixture Model & $\mathcal{H}$-KDE ($w \neq 1$)\\ 
   \hline
   \textit{Bimodal} & 1.03 & 0.37 & 0.13 & 0.10 & \textbf{0.09} \\ 
   \textit{Claw} & 13.77 & 6.09 & 15.25 & \textbf{2.17} & 2.51 \\ 
   \textit{Trimodal} & 0.26 & \textbf{0.12} & 0.32 & 0.33 & 0.18 \\ 
   \textit{Skewed} & 21.34 & 16.15 & 17.61 & \textbf{6.12} & 9.51 \\ 
    \hline
 \end{tabular}
\label{Tab:KDEEstimation}
\end{table}




\section{Discussion}{\label{Sec:discussion}}

We considered a novel problem, that of selecting between probability models and possibly improper models defined via loss functions.
Despite being non-standard, the latter can be naturally interpreted in terms of relative probabilities, and open an avenue to increase the flexibility at the disposal of the data analyst.  
Relative probabilities motivated our use of Fisher's divergence and the \hyvarinen score, which led to an associated  \Hposterior{} and \HBayes{} factors to estimate hyper-parameters and model selection, respectively.
\new{Fisher's divergence satisfies two key desirata: (i) it can be applied to improper models, and (ii) the solution best approximates infinitesimal relative probabilities. Methods designed for intractable, but finite, normalization constants do not satisfy (i). The generalized Fisher divergences of \cite{lyu2009interpretation} offer an interesting basis to extend our ideas, although their proposed marginalization and the conditional expectation operators proposed do not seem directly applicable to our examples, and do not satisfy (ii). A further desideratum is being computationally tractable. The \Hscore{} leads to a closed-form posterior in exponential family models \citep{hyvarinen2007some},}
and here we used Laplace approximations for computational convenience, which we proved to recover the model closest to the data-generating truth in Fisher's divergence. 

\new{A limitation worth mentioning of score-based methods, like Fisher's divergence, is their inability to learn the mixture weights for an assumed mixture model when the components are well-separated \citep{wenliang2020blindness}. This is not an issue in our Tukey and kernel density examples, which do not feature mixture weights to be learnt from data (in \KDE each component receives a fixed $1/n$ weight).
To illustrate this, in Section \ref{App:Hscore_multimodal} we extended the Tukey and \KDE examples to a situation where the data-generating truth features two strongly-separated components. The performance of our methodology was essentially unaffected by the larger separation.
We remark however that well-separated data would be problematic in settings such as mixture-model based regression, where one seeks to learn the mixture weights, for example.}

We illustrated how the method provides promising results in robust regression and kernel density estimation, providing a new strategy to address the robustness-efficiency trade-off for the former and enabling Bayesian inference in the latter. 
\new{We remark however that Fisher divergence does not seek to optimize robustness nor efficiency. Rather, it defines a predictive criterion such that the chosen model best represents future observations, in terms of relative probabilities. This said, if the data-generating truth is contained in the considered models, then one selects the true model asymptotically and obtains optimal estimation. Fisher divergence was also shown by \cite{lyu2009interpretation} to possess robustness properties. It is equal to the derivative of the Kullback-Leibler divergence upon adding Gaussian noise with infinitesimal variance. Hence, Fisher divergence chooses a model that does not change much when noise is added to it.}

Our asymptotic results mimic those for Bayesian model selection in proper models. There is however an important issue when considering hyper-parameters whose optimal value lies at the boundary, where consistency may fail for standard priors and it may be necessary to use non-local priors. 
As future work, one could attempt to improve the finite sample performance of the \Hscore{} using 
weighting procedures to calibrate its sampling distribution, e.g. as proposed by \cite{dawid2016minimum}. 
Another interesting direction is to extend 
the robust regression examples to variable selection. \cite{rossell2018tractable} showed the importance of specifying a well-fitting error distribution to avoid non-negligible drops in statistical power, particularly in high-dimensional cases. Considering robust improper models such as that stemming from Tukey's loss may improve the selection performance there.

We also provided a novel strategy to learn likelihood tempering hyper-parameters from the data, which 
could be combined with various tempered likelihood approaches \cite[e.g.][]{grunwald2012safe, holmes2017assigning, miller2018robust} to estimate this value for standard likelihoods, \new{or to learn the Tsallis score hyper-parameter when defining a robust version of any parametric model}. 
Yet another interesting extension is to consider multivariate responses, for example in graphical or factor models where outliers can be hard to detect \cite[e.g.][]{hall2005geometric, filzmoser2008outlier} and 
result in significantly worse inference. 
In summary, we believe there are many settings where the consideration of improper models can lead to fruitful research avenues.

\section*{Acknowledgements}

The authors would like to thank Piotr Zwiernik for insightful discussions about Tukey's loss minimisation and the breakdown point. JJ and DR were partially funded by the Ayudas Fundación BBVA a Equipos de Investigación Cientifica 2017 and Government of Spain's Plan Nacional PGC2018-101643-B-I00 grants. 
DR was also partially funded by the Europa Excelencia grant EUR2020-112096 and Ram\'on y Cajal Fellowship RYC-2015-18544.




\bibliography{biblio}

@article{wenliang2020blindness,
  title={Blindness of score-based methods to isolated components and mixing proportions},
  author={Wenliang, Li K and Kanagawa, Heishiro},
  journal={arXiv preprint arXiv:2008.10087},
  year={2020}
}

@inproceedings{wang2020wasserstein,
  title={A wasserstein minimum velocity approach to learning unnormalized models},
  author={Wang, Ziyu and Cheng, Shuyu and Yueru, Li and Zhu, Jun and Zhang, Bo},
  booktitle={International Conference on Artificial Intelligence and Statistics},
  pages={3728--3738},
  year={2020},
  organization={PMLR}
}

@article{besag1975statistical,
  title={Statistical analysis of non-lattice data},
  author={Besag, Julian},
  journal={Journal of the Royal Statistical Society: Series D (The Statistician)},
  volume={24},
  number={3},
  pages={179--195},
  year={1975},
  publisher={Wiley Online Library}
}

@article{hinton2002training,
  title={Training products of experts by minimizing contrastive divergence},
  author={Hinton, Geoffrey E},
  journal={Neural computation},
  volume={14},
  number={8},
  pages={1771--1800},
  year={2002},
  publisher={MIT Press}
}

@article{bissiri:2012,
  title={On {B}ayesian learning via loss functions},
  author={Bissiri, P. G. and Walker, S.G.},
  journal={Journal of statistical planning and inference},
  volume={142},
  number={12},
  pages={3167-3173},
  year={2012},
}

@article{bernardo1979expected,
  title={Expected information as expected utility},
  author={Bernardo, Jos{\'e} M},
  journal={the Annals of Statistics},
  pages={686--690},
  year={1979},
  publisher={JSTOR}
}

@article{llorente2020marginal,
  title={Marginal likelihood computation for model selection and hypothesis testing: an extensive review},
  author={Llorente, Fernando and Martino, Luca and Delgado, David and Lopez-Santiago, Javier},
  journal={arXiv},
  volume= {2005.08334},
  pages= {1-91},
  year={2020}
}

@article{beaumont2002approximate,
  title={Approximate {B}ayesian computation in population genetics},
  author={Beaumont, Mark A and Zhang, Wenyang and Balding, David J},
  journal={Genetics},
  volume={162},
  number={4},
  pages={2025--2035},
  year={2002},
  publisher={Oxford University Press}
  }

@incollection{robert2016approximate,
  title={Approximate {B}ayesian computation: A survey on recent results},
  author={Robert, Christian P},
  booktitle={Monte Carlo and Quasi-Monte Carlo Methods},
  pages={185--205},
  year={2016},
  publisher={Springer}
}

@inproceedings{grunwald2012safe,
  title={The safe {B}ayesian},
  author={Gr{\"u}nwald, Peter},
  booktitle={International Conference on Algorithmic Learning Theory},
  pages={169--183},
  year={2012},
  organization={Springer}
}

@article{matsuda2019information,
  title={Information criteria for non-normalized models},
  author={Matsuda, Takeru and Uehara, Masatoshi and Hyvarinen, Aapo},
  journal={arXiv preprint arXiv:1905.05976},
  year={2019}
}

@article{bissiri2016general,
  title={A general framework for updating belief distributions},
  author={Bissiri, PG and Holmes, CC and Walker, Stephen G},
  journal={Journal of the Royal Statistical Society: Series B (Statistical Methodology)},
  year={2016},
  volume= 78,
  number= 5,
  pages= {1103-1130},
  publisher={Wiley Online Library}
}

@article{miller2018robust,
  title={Robust {B}ayesian inference via coarsening},
  author={Miller, Jeffrey W and Dunson, David B},
  journal={Journal of the American Statistical Association},
  pages={1--13},
  year={2018},
  publisher={Taylor \& Francis}
}

@article{johnson2012bayesian,
  title={Bayesian model selection in high-dimensional settings},
  author={Johnson, Valen E and Rossell, David},
  journal={Journal of the American Statistical Association},
  volume={107},
  number={498},
  pages={649--660},
  year={2012},
  publisher={Taylor \& Francis}
}

@article{rossell2017nonlocal,
  title={Nonlocal priors for high-dimensional estimation},
  author={Rossell, David and Telesca, Donatello},
  journal={Journal of the American Statistical Association},
  volume={112},
  number={517},
  pages={254--265},
  year={2017},
  publisher={Taylor \& Francis}
}

@Manual{mombf,
  title = {mombf: Moment and Inverse Moment Bayes Factors},
  author = {David Rossell and John D. Cook and Donatello Telesca and P. Roebuck},
  year = {2016},
  note = {R package version 1.8.3},
  url = {https://CRAN.R-project.org/package=mombf},
}

@article{fuquene2019choosing,
  title={On choosing mixture components via non-local priors},
  author={F{\'u}quene, Jairo and Steel, Mark and Rossell, David},
  journal={Journal of the Royal Statistical Society: Series B (Statistical Methodology)},
  volume={81},
  number={5},
  pages={809--837},
  year={2019},
  publisher={Wiley Online Library}
}

@article{rossell2018tractable,
  title={Tractable {B}ayesian variable selection: beyond normality},
  author={Rossell, David and Rubio, Francisco J},
  journal={Journal of the American Statistical Association},
  volume={113},
  number={524},
  pages={1742--1758},
  year={2018},
  publisher={Taylor \& Francis}
}

@article{rossell:2021,
  author= {Rossell, D.},
  title= {Concentration of posterior model probabilities and normalized {L}0 criteria},
  journal= {Bayesian Analysis},
  year= 2021,
  volume= {(in press)},
  pages= {1-27},
}

@article{holmes2017assigning,
  title={Assigning a value to a power likelihood in a general {B}ayesian model},
  author={Holmes, CC and Walker, SG},
  journal={Biometrika},
  volume={104},
  number={2},
  pages={497--503},
  year={2017},
  publisher={Oxford University Press}
}

@article{carpenter2016stan,
  title={Stan: A probabilistic programming language},
  author={Carpenter, Bob and Gelman, Andrew and Hoffman, Matt and Lee, Daniel and Goodrich, Ben and Betancourt, Michael and Brubaker, Michael A and Guo, Jiqiang and Li, Peter and Riddell, Allen},
  journal={Journal of Statistical Software},
  volume={20},
  year={2016}
}

@article{hoffman2014no,
  title={The {N}o-{U}-{T}urn sampler: adaptively setting path lengths in {H}amiltonian {M}onte {C}arlo.},
  author={Hoffman, Matthew D and Gelman, Andrew},
  journal={Journal of Machine Learning Research},
  volume={15},
  number={1},
  pages={1593--1623},
  year={2014}
}

@article{basu1998robust,
  title={Robust and efficient estimation by minimising a density power divergence},
  author={Basu, Ayanendranath and Harris, Ian R and Hjort, Nils L and Jones, MC},
  journal={Biometrika},
  volume={85},
  number={3},
  pages={549--559},
  year={1998},
  publisher={Biometrika Trust}
}

@article{filzmoser2008outlier,
  title={Outlier identification in high dimensions},
  author={Filzmoser, Peter and Maronna, Ricardo and Werner, Mark},
  journal={Computational Statistics \& Data Analysis},
  volume={52},
  number={3},
  pages={1694--1711},
  year={2008},
  publisher={Elsevier}
}

@article{hall2005geometric,
  title={Geometric representation of high dimension, low sample size data},
  author={Hall, Peter and Marron, JS and Neeman, Amnon},
  journal={Journal of the Royal Statistical Society: Series B (Statistical Methodology)},
  volume={67},
  number={3},
  pages={427--444},
  year={2005},
  publisher={Wiley Online Library}
}

@article{dawid2016minimum,
  title={Minimum scoring rule inference},
  author={Dawid, A Philip and Musio, Monica and Ventura, Laura},
  journal={Scandinavian Journal of Statistics},
  volume={43},
  number={1},
  pages={123--138},
  year={2016},
  publisher={Wiley Online Library}
}

@article{dawid2015bayesian,
  title={Bayesian model selection based on proper scoring rules},
  author={Dawid, A Philip and Musio, Monica},
  journal={Bayesian analysis},
  volume={10},
  number={2},
  pages={479--499},
  year={2015},
  publisher={International Society for Bayesian Analysis}
}

@book{silverman1986density,
  title={Density estimation for statistics and data analysis},
  author={Silverman, Bernard W},
  volume={26},
  year={1986},
  publisher={CRC press}
}

@article{lindley1957statistical,
  title={A statistical paradox},
  author={Lindley, Dennis V},
  journal={Biometrika},
  volume={44},
  number={1/2},
  pages={187--192},
  year={1957},
  publisher={JSTOR}
}

@misc{huber1981robust,
  title={Robust statistics, Series in probability and mathematical statistics},
  author={Huber, Peter J and Ronchetti, EM},
  year={1981},
  publisher={John Wiley, New York}
}

@incollection{dawid2011posterior,
  title={Posterior model probabilities},
  author={Dawid, A.P.},
  booktitle={Philosophy of Statistics},
  pages={607--630},
  year={2011},
  publisher={Elsevier}
}

@article{lyddon2019general,
  title={General {B}ayesian updating and the loss-likelihood bootstrap},
  author={Lyddon, SP and Holmes, CC and Walker, SG},
  journal={Biometrika},
  volume={106},
  number={2},
  pages={465--478},
  year={2019},
  publisher={Oxford University Press}
}

@article{ribatet2012bayesian,
  title={Bayesian inference from composite likelihoods, with an application to spatial extremes},
  author={Ribatet, Mathieu and Cooley, Daniel and Davison, Anthony C},
  journal={Statistica Sinica},
  pages={813--845},
  year={2012},
  publisher={JSTOR}
}

@article{shao2019bayesian,
  title={Bayesian model comparison with the {H}yv{\"a}rinen score: computation and consistency},
  author={Shao, Stephane and Jacob, Pierre E and Ding, Jie and Tarokh, Vahid},
  journal={Journal of the American Statistical Association},
  pages={1--24},
  year={2019},
  publisher={Taylor \& Francis}
}

@article{hyvarinen2005estimation,
  title={Estimation of non-normalized statistical models by score matching},
  author={Hyv{\"a}rinen, Aapo},
  journal={Journal of Machine Learning Research},
  volume={6},
  number={Apr},
  pages={695--709},
  year={2005}
}

@article{chernozhukov2003mcmc,
  title={An {MCMC} approach to classical estimation},
  author={Chernozhukov, Victor and Hong, Han},
  journal={Journal of Econometrics},
  volume={115},
  number={2},
  pages={293--346},
  year={2003},
  publisher={Elsevier}
}

@article{good1952rational,
  title={Rational decisions},
  author={Good, Irving John},
  journal={{J}ournal of the {R}oyal {S}tatistical {S}ociety, {S}eries
{B}},
  volume={14},
  pages={107--114},
  year={1952}
}

@article{breiman2001statistical,
  title={Statistical modeling: The two cultures (with comments and a rejoinder by the author)},
  author={Breiman, Leo and others},
  journal={Statistical science},
  volume={16},
  number={3},
  pages={199--231},
  year={2001},
  publisher={Institute of Mathematical Statistics}
}

@article{efron2020prediction,
  title={Prediction, Estimation, and Attribution},
  author={Efron, Bradley},
  journal={Journal of the American Statistical Association},
  volume={115},
  number={530},
  pages={636--655},
  year={2020},
  publisher={Taylor \& Francis}
}

@article{beaton1974fitting,
  title={The fitting of power series, meaning polynomials, illustrated on band-spectroscopic data},
  author={Beaton, Albert E and Tukey, John W},
  journal={Technometrics},
  volume={16},
  number={2},
  pages={147--185},
  year={1974},
  publisher={Taylor \& Francis Group}
}

@article{tukey1960survey,
  title={A survey of sampling from contaminated distributions},
  author={Tukey, John W},
  journal={Contributions to probability and statistics},
  pages={448--485},
  year={1960},
  publisher={Stanford University Press}
}

@article{box1953non,
  title={Non-normality and tests on variances},
  author={Box, George EP},
  journal={Biometrika},
  volume={40},
  number={3/4},
  pages={318--335},
  year={1953},
  publisher={JSTOR}
}

@book{rousseeuw2005robust,
  title={Robust regression and outlier detection},
  author={Rousseeuw, Peter J and Leroy, Annick M},
  volume={589},
  year={2005},
  publisher={John wiley \& sons}
}

@incollection{rousseeuw1984robust,
  title={Robust regression by means of {S}-estimators},
  author={Rousseeuw, Peter and Yohai, Victor},
  booktitle={Robust and nonlinear time series analysis},
  pages={256--272},
  year={1984},
  publisher={Springer}
}

@article{riani2014consistency,
  title={On consistency factors and efficiency of robust {S}-estimators},
  author={Riani, Marco and Cerioli, Andrea and Torti, Francesca},
  journal={Test},
  volume={23},
  number={2},
  pages={356--387},
  year={2014},
  publisher={Springer}
}

@article{chang2018robust,
  title={Robust Lasso Regression Using {T}ukey's Biweight Criterion},
  author={Chang, Le and Roberts, Steven and Welsh, Alan},
  journal={Technometrics},
  volume={60},
  number={1},
  pages={36--47},
  year={2018},
  publisher={Taylor \& Francis}
}

@book{van2000asymptotic,
  title={Asymptotic statistics},
  author={Van der Vaart, Aad W},
  volume={3},
  year={2000},
  publisher={Cambridge university press}
}

@article{ferentios1982tcebycheff,
  title={On Tcebycheff's type inequalities},
  author={Ferentios, K},
  journal={Trabajos de Estadistica y de Investigacion Operativa},
  volume={33},
  number={1},
  pages={125},
  year={1982},
  publisher={Springer}
}

@article{warwick2005choosing,
  title={Choosing a robustness tuning parameter},
  author={Warwick, J and Jones, MC},
  journal={Journal of Statistical Computation and Simulation},
  volume={75},
  number={7},
  pages={581--588},
  year={2005},
  publisher={Taylor \& Francis}
}

@article{shao1997asymptotic,
  title={An asymptotic theory for linear model selection},
  author={Shao, Jun},
  journal={Statistica sinica},
  pages={221--242},
  year={1997},
  publisher={JSTOR}
}

@article{wang2020tuning,
  title={A tuning-free robust and efficient approach to high-dimensional regression},
  author={Wang, Lan and Peng, Bo and Bradic, Jelena and Li, Runze and Wu, Yunan},
  journal={Journal of the American Statistical Association},
  pages={1--44},
  year={2020},
  publisher={Taylor \& Francis}
}

@article{guedj2019primer,
  title={A primer on {PAC}-{B}ayesian learning},
  author={Guedj, Benjamin},
  journal={arXiv preprint arXiv:1901.05353},
  year={2019}
}

@inproceedings{catoni2007pac,
  title={Pac-{B}ayesian supervised classification: the thermodynamics of statistical learning},
  author={Catoni, O},
  year={2007},
  organization={IMS}
}

@article{marron1992exact,
  title={Exact mean integrated squared error},
  author={Marron, J Steve and Wand, Matt P},
  journal={The Annals of Statistics},
  pages={712--736},
  year={1992},
  publisher={JSTOR}
}

@misc{ross2018dirichletprocess,
  title={dirichletprocess: An {R} Package for Fitting Complex {B}ayesian Nonparametric Models},
  author={Ross, Gordon J and Markwick, Dean},
  year={2018}
}

@article{guidoum2015kernel,
  title={Kernel estimator and bandwidth selection for density and its derivatives},
  author={Guidoum, Arsalane Chouaib},
  journal={Department of Probabilities and Statistics, University of Science and Technology, Houari Boumediene, Algeria},
  year={2015}
}

@article{rossell2021package,
  title={Package ‘mombf’},
  author={Rossell, David and Cook, John D and Telesca, Donatello and Roebuck, P and Rossell, Maintainer David},
  year={2021}
}

@article{black1996unification,
  title={On the unification of line processes, outlier rejection, and robust statistics with applications in early vision},
  author={Black, Michael J and Rangarajan, Anand},
  journal={International journal of computer vision},
  volume={19},
  number={1},
  pages={57--91},
  year={1996},
  publisher={Springer}
}

@inproceedings{belagiannis2015robust,
  title={Robust optimization for deep regression},
  author={Belagiannis, Vasileios and Rupprecht, Christian and Carneiro, Gustavo and Navab, Nassir},
  booktitle={Proceedings of the IEEE international conference on computer vision},
  pages={2830--2838},
  year={2015}
}

@article{robert1976choice,
  title={On the choice of smoothing parameters for {P}arzen estimators of probability density functions},
  author={Robert, PW},
  journal={IEEE Transactions on Computers},
  volume={25},
  number={11},
  pages={1175--1179},
  year={1976}
}

@article{habbema1974stepwise,
  title={A stepwise discriminant analysis program using density estimation.},
  author={Habbema, JDF and JDF, HABBEMA and Van den Broek, K and others},
  year={1974}
}

@article{bowman1984alternative,
  title={An alternative method of cross-validation for the smoothing of density estimates},
  author={Bowman, Adrian W},
  journal={Biometrika},
  volume={71},
  number={2},
  pages={353--360},
  year={1984},
  publisher={Oxford University Press}
}

@article{rudemo1982empirical,
  title={Empirical choice of histograms and kernel density estimators},
  author={Rudemo, Mats},
  journal={Scandinavian Journal of Statistics},
  pages={65--78},
  year={1982},
  publisher={JSTOR}
}

@article{schwarz1978estimating,
  title={Estimating the dimension of a model},
  author={Schwarz, Gideon},
  journal={The Annals of Statistics},
  volume={6},
  number={2},
  pages={461--464},
  year={1978},
  publisher={Institute of Mathematical Statistics}
}

@article{kass1990validity,
  title={The validity of posterior expansions based on {L}aplace's method},
  author={Kass, R. E. and Tierney, L. and Kadane, J. B.},
  journal={Essays in Honor of George Barnard, eds. S. Geisser, J. S. Hodges, S. J. Press and A. Zellner, Amersterdam: North-Hollans},
  pages={473-488},
  year={1990},
}

@article{hoeting1999bayesian,
  title={Bayesian model averaging: a tutorial},
  author={Hoeting, Jennifer A and Madigan, David and Raftery, Adrian E and Volinsky, Chris T},
  journal={Statistical science},
  pages={382--401},
  year={1999},
  publisher={JSTOR}
}

@article{tenenbaum2016keggrest,
  title={KEGGREST: Client-side REST access to KEGG},
  author={Tenenbaum, Dan},
  journal={R package version},
  volume={1},
  number={1},
  year={2016}
}

@article{calon2012dependency,
  title={Dependency of colorectal cancer on a {TGF}-$\beta$-driven program in stromal cells for metastasis initiation},
  author={Calon, Alexandre and Espinet, Elisa and Palomo-Ponce, Sergio and Tauriello, Daniele VF and Iglesias, Mar and C{\'e}spedes, Mar{\'\i}a Virtudes and Sevillano, Marta and Nadal, Cristina and Jung, Peter and Zhang, Xiang H-F and others},
  journal={Cancer cell},
  volume={22},
  number={5},
  pages={571--584},
  year={2012},
  publisher={Elsevier}
}

@article{yuan2016plasma,
  title={Plasma extracellular RNA profiles in healthy and cancer patients},
  author={Yuan, Tiezheng and Huang, Xiaoyi and Woodcock, Mark and Du, Meijun and Dittmar, Rachel and Wang, Yuan and Tsai, Susan and Kohli, Manish and Boardman, Lisa and Patel, Tushar and others},
  journal={Scientific reports},
  volume={6},
  number={1},
  pages={1--11},
  year={2016},
  publisher={Nature Publishing Group}
}

@inproceedings{lyu2009interpretation,
  title={Interpretation and generalization of score matching},
  author={Lyu, Siwei},
  booktitle={Proceedings of the Twenty-Fifth Conference on Uncertainty in Artificial Intelligence},
  pages={359--366},
  year={2009}
}

@inproceedings{sinova2016tukey,
  title={Tukey’s biweight loss function for fuzzy set-valued {M}-estimators of location},
  author={Sinova, Beatriz and Van Aelst, Stefan},
  booktitle={International Conference on Soft Methods in Probability and Statistics},
  pages={447--454},
  year={2016},
  organization={Springer}
}

@article{li2021robust,
  title={Robust estimation for {P}oisson integer-valued {GARCH} models using a new hybrid loss},
  author={Li, Qi and Chen, Huaping and Zhu, Fukang},
  journal={Journal of Systems Science and Complexity},
  pages={1--19},
  year={2021},
  publisher={Springer}
}

@article{matsubara2021robust,
  title={Robust Generalised {B}ayesian Inference for Intractable Likelihoods},
  author={Matsubara, Takuo and Knoblauch, Jeremias and Briol, Fran{\c{c}}ois-Xavier and Oates, Chris and others},
  journal={arXiv preprint arXiv:2104.07359},
  year={2021}
}

@article{giummole2019objective,
  title={Objective Bayesian inference with proper scoring rules},
  author={Giummol{\`e}, Federica and Mameli, Valentina and Ruli, Erlis and Ventura, Laura},
  journal={Test},
  volume={28},
  number={3},
  pages={728--755},
  year={2019},
  publisher={Springer}
}

@article{hyvarinen2007some,
  title={Some extensions of score matching},
  author={Hyv{\"a}rinen, Aapo},
  journal={Computational statistics \& data analysis},
  volume={51},
  number={5},
  pages={2499--2512},
  year={2007},
  publisher={Elsevier}
}

@inproceedings{gutmann2010noise,
  title={Noise-contrastive estimation: A new estimation principle for unnormalized statistical models},
  author={Gutmann, Michael and Hyv{\"a}rinen, Aapo},
  booktitle={Proceedings of the thirteenth international conference on artificial intelligence and statistics},
  pages={297--304},
  year={2010},
  organization={JMLR Workshop and Conference Proceedings}
}

@article{yonekura2021adaptation,
  title={Adaptation of the Tuning Parameter in General Bayesian Inference with Robust Divergence},
  author={Yonekura, Shouto and Sugasawa, Shonosuke},
  journal={arXiv preprint arXiv:2106.06902},
  year={2021}
}

@article{tsallis1988possible,
  title={Possible generalization of Boltzmann-Gibbs statistics},
  author={Tsallis, Constantino},
  journal={Journal of statistical physics},
  volume={52},
  number={1},
  pages={479--487},
  year={1988},
  publisher={Springer}
}

\newpage

\appendix

\section{Supplementary Material} 

Section \ref{ssec:proofs} contains the proofs of the theoretical results associated with Sections \ref{Sec:Methodology} and \ref{Sec:HscoreModelConsistency}  of the main paper.
Section \ref{App:derivation_hscore} provides the \Hscore{} for the squared, Tukey's and the kernel density estimation losses.
Section \ref{App:implementarion_details} describes implementation details for Tukey's loss and kernel density examples, including smooth approximations that enable the use of second order optimisation and sampling methods, and how to set a non-local prior on Tukey's cut-off hyper-parameter $\kappa_2$.
Finally, Section \ref{App:suppl_results} provides additional empirical results that complement those presented in the main paper.

\subsection{Proofs}
\label{ssec:proofs}


 The proofs are structured as follows. First we establish our notation (Section \ref{App:Notation}) in full, before stating and discussing the conditions for the theorems in Section \ref{App:TechnicalConditions}. Sections \ref{App:ParameterConsistency_op1} and \ref{App:ParameterConsistency} provide the proof of Proposition \ref{Lemma:HscoreParamConsistency}, for ease of presentation we split the proof between two separate results. Section \ref{App:ParameterConsistency_op1} provides Lemma \ref{Lemma:HscoreParamConsistency_op1} which proves consistency, $o_p(1)$, of the maximum \Hposterior{} estimates, while Section \ref{App:ParameterConsistency} provides Proposition \ref{Lemma:HscoreParamConsistency2} extending this to $O_p(\nicefrac{1}{\sqrt{n}})$ convergence. This is then used in Section \ref{App:HscoreConsistency} to prove our main result that the Laplace approximation to the $\mathcal{H}$-Bayes factor provides consistent selection amongst models and losses to that which minimises Fisher's divergence to the data generating distribution $g(y)$. Further, Section \ref{App:SMICInconsistency} contains Corollary \ref{Thm:SMICConsistency} which proves that the \SMIC of \cite{matsuda2019information} does not provide consistent model/loss selection between nested models, Corollary \ref{Thm:HscoreConsistencyNLPs} in Section \ref{App:ConsistencyNLPs} extends the \Hscore{} consistency results to situations where non-local priors are placed on the additional parameters of nested models,  and lastly Corollary \ref{Thm:HscoreConsistencyNLPsIG} extends this to the special case of the inverse-gamma non-local prior applied when selecting between a Gaussian model and Tukey's loss in Section \ref{Sec:RobustRegression}.

\subsubsection{Notation}{\label{App:Notation}}

Consider models $k$ and $l$ with possibly improper densities $f_j(y; \theta_j, \kappa_j) \propto \exp\left\lbrace- \ell_j(y; \theta_j, \kappa_j)\right\rbrace$, for $j\in\{k, l\}$, where $\kappa_j$ already includes $w_j$ where applicable. To ease notation let $\eta_j = (\theta_j, \kappa_j)$ and call its dimension $d_j$. 
Without loss of generality assume $l$ is at least as complicated as $k$, i.e. that $d_l \geq d_k$. We obtain $\mathcal{H}$-Bayes Factor rates under the assumption that $y \sim g$. In so doing
\begin{equation}
   \eta^{\ast}_j := \argmin_{\eta_j} \mathbb{E}_g\left[H(y; f_j(\cdot; \eta_j)\right].\nonumber
\end{equation}
are the parameter minimising Fisher's divergence from model $f_j$ to $g(y)$.

Given sample $y \sim g$, define the in-sample \Hscore{} minimising estimate, $\hat{\eta}_j$, and given and parameter prior $\pi(\eta_j)$ the penalised maximum \Hposterior{} estimate, $\tilde{\eta}_j$, as 
\begin{align}
    \hat{\eta}_j := \argmin_{\eta_j} \sum_{i=1}^n H(y_i; f_j(\cdot; \eta_j), \quad \tilde{\eta}_j := \argmin_{\eta_j} \sum_{i=1}^n H(y_i; f_j(\cdot; \eta_j) - \log \pi_j(\eta_j),\label{Equ:Hscore_minimisers}
\end{align}
where throughout we suppress the dependence on $n$ to avoid cumbersome notation.
The empirical and expected Hessian matrix for model $j$ are $A_j(\eta_j)$ and $A^{\ast}_j(\eta_j)$ respectively defined as
\begin{align}
A_j\left(\eta_j\right):&= \nabla_{\eta_j}^2\left(\sum_{i=1}^nH(y_i; f_j(\cdot;\eta_j)) - \log \pi_j(\eta_j)\right),\nonumber\\
A^{\ast}_j\left(\eta_j\right) :&= \nabla_{\eta_j}^2\mathbb{E}_g\left[H(y_i; f_j(\cdot;\eta_j))\right], \label{Equ:LimitingHessian}
\end{align}
The Laplace approximation to the \HBayes{} factor is then given by
\begin{align}
\label{Equ:HBayesFactor_LaplaceApprox}
\tilde{B}^{(\mathcal{H})}_{kl} := \frac{\tilde{\mathcal{H}}_k(y)}{\tilde{\mathcal{H}}_l(y)} =
(2\pi)^{\frac{d_k-d_l}{2}}
 \frac{|A_l\left(\tilde{\eta}_l\right)|^{1/2}}{|A_k\left(\tilde{\eta}_k\right)|^{1/2}}
\frac{\pi_k\left(\tilde{\eta}_k\right)}{\pi_l\left(\tilde{\eta}_l\right)}
 e^{ \sum_{i=1}^nH\left(y_i; f_l\left(\cdot;\tilde{\eta}_l\right)\right) - \sum_{i=1}^nH\left(y_i; f_k\left(\cdot;\tilde{\eta}_k\right)\right) }
\end{align}
%
For some of the results below it is sometimes easier to consider the estimating equations associated with minimising the \hyvarinen score. Define the empirical and expected vectors of first derivatives, $\Psi_j(\eta_j)$ and $\Psi_j^{\ast}(\eta_j)$ respectively, as
\begin{align}
  \Psi_j(\eta_j) :&= \nabla_{\eta_j}  \sum_{i=1}^n H(y_i; f_j(\cdot; \eta_j),\quad \phi_j(\eta_j) := \nabla_{\eta_j} \log \pi_j(\eta_j)\nonumber\\
   \tilde{\Psi}_j(\eta_j) :&= \Psi_j(\eta_j)+ \phi_j(\eta_j), \quad \Psi_j^{\ast}(\eta_j) := \nabla_{\eta_j} \mathbb{E}_g\left[H(y; f_j(\cdot; \eta_j)\right].\nonumber
\end{align}

\subsubsection{Technical Conditions}{\label{App:TechnicalConditions}}

Next, we list the technical conditions required to prove the consistency of the \HBayes{} factor to the model minimising Fisher's divergence, Theorem \ref{Thm:HBayesFactorConsistency}.
\begin{itemize}
    \item[A1.] Regularity assumptions on the models $j = l, k$ and their priors

    \begin{itemize}[leftmargin=*]
        \item The parameter and hyperparameter spaces $\Theta_j \times \Phi_j$ are compact.
        \item The priors are continuous in $\eta_j$, twice continuously differentiable and $\pi_j(\eta_j) > 0$ $\forall \eta_j$.
        \item Model densities $f_j(y; \eta_j) \propto \exp\left\lbrace- \ell_j(y; \eta_j)\right\rbrace$ are twice continuously differentiable in $y$.
        
        \item The variance of the first derivatives of the \hyvarinen score $\mathbb{V}_G\left[\Psi_j(\eta_j)\right] = \mathbb{V}_G\left[\frac{\partial}{\partial \eta_j} H(y; f_j(\cdot; \eta_j)\right] < \infty$. 
        \item The \hyvarinen score, $H(y; f_j(\cdot; \eta_j))$, is twice continuously differentiable in $\eta_j$. 
        \item These exists a unique $\eta_j^{\ast}$ and $\tilde{\eta}_j$ for all $n$ such that the first derivatives of the empirical and expected \hyvarinen score, $\tilde{\Psi}_j(\tilde{\eta}_j) = \Psi_j^{\ast}(\eta_j^{\ast}) = 0$. 
    \end{itemize}
    \item[A2.] For model $j = l, k$, there exists a function $b_j(\cdot; z): \mathbb{R}^{d_j} \mapsto \mathbb{R}$ such that $|H(z; f_j(\cdot; \eta_j))| < b_j(\eta_j; z)$ for all $z$ where $\int b_j(\eta_j; z) d\eta_j < \infty$. 
    %
    %
    \item[A3.] For model $j = l, k$, the expected Hessian of the \hyvarinen scores \eqref{Equ:LimitingHessian} is such that 
        \begin{equation}
            0 < |A^{\ast}_j(\eta^{\ast}_j)| < \infty,\nonumber
        \end{equation}
        and there exist constant, $M_{A}$ and $\delta_A > 0$
        \begin{align}
	        0 < M_{A} \leq |\frac{1}{n}A_j(\eta_j)|, \textrm{ for all } \eta_j \textrm{ such that} ||\eta_j - \eta_j^{\ast}||_2 < \delta,\nonumber
        \end{align}
        with probability tending to 1 as $n \rightarrow\infty$.
    \item[A4.] For model $j = l, k$ there exist functions $m_H: \mathbb{R}\mapsto \mathbb{R}_{\geq 0}$ and $m_A(\cdot): \mathbb{R}\mapsto \mathbb{R}_{\geq 0}$ with $\mathbb{E}\left[m_H(\cdot)\right] < \infty$ and $\mathbb{E}\left[m_A(\cdot)\right] < \infty$ so that  the following Lipschitz conditions hold
    \begin{align}
         \left|H(z; f_j(\cdot; \eta_j^{(a)})) - H(z; f_j(\cdot; \eta_j^{(b)}))\right| \leq m_H(z)\left|\left|\eta_j^{(a)} - \eta_j^{(b)}\right|\right|_2, \quad \forall \eta_j^{(a)}, \eta_j^{(b)}, \nonumber\\
         \left|\left|A^{(1)}_j(\eta_j^{(a)}) - A^{(1)}_j(\eta_j^{(b)})\right|\right|_2 \leq m_A(z)\left|\left|\eta_j^{(1)} - \eta_j^{(b)}\right|\right|_2, \quad \forall \eta_j^{(1)}, \eta_j^{(b)},\nonumber
    \end{align}
    for all $z \in \mathbb{R}$, where the notation $A^{(1)}_j(\cdot)$ emphasises that only one observation, $z$, is involved.
\end{itemize}

These conditions are mild and standard. A1 constitute standard assumptions required to ensure the \hyvarinen score applied to model/loss $f_j/\ell_j$, $j = l, k$ is itself continuous and has continuous first and second derivatives (in parameters, e.g. condition S1 \cite{matsuda2019information}), as well as a ensuring the existence and uniqueness of a \hyvarinen score minimising parameters and that its first derivative has finite variance. \new{Note that we will relax the strict positivity of $\pi_j(\eta_j)$ in Section \ref{App:ConsistencyNLPs} in order to consider non-local priors}. A2 requires the existence of an function dominating the \hyvarinen score applied to loss $j = k, l$ \new{ensuring absolute integrability w.r.t $\eta_j$}. Note that e.g. for Tukey's loss the improper-model itself is not dominated by an integrable function, but both the
%
%


\subsubsection{Lemma \ref{Lemma:HscoreParamConsistency_op1} (\hyvarinen score parameter consistency)}{\label{App:ParameterConsistency_op1}}

Before proving Theorem \ref{Thm:HBayesFactorConsistency} it is useful to first prove Proposition \ref{Lemma:HscoreParamConsistency}, providing the consistency of the \hyvarinen score minimising parameters estimates to the Fisher's divergence minimising parameters. For clarity of presentation here we split Proposition \ref{Lemma:HscoreParamConsistency} into Lemma \ref{Lemma:HscoreParamConsistency_op1} proving $o_p(1)$ and Proposition \ref{Lemma:HscoreParamConsistency2} proving $O_p(\nicefrac{1}{\sqrt{n}})$. Here we first prove Lemma \ref{Lemma:HscoreParamConsistency_op1}. We note \cite{hyvarinen2005estimation} (Corollary 3) provided a similar result except this further required that $g = f_j(\cdot;\theta_j^{\ast})$ which we do not require.

\begin{lemma}[\hyvarinen score parameter consistency]
Assume Conditions A1-A2. Given a sample $y\sim g$, then as $n\rightarrow\infty$ the maximum \Hposterior{} parameters of model $j$ have the following asymptotic behaviour
\begin{equation}
  \left|\left|\tilde{\eta}_j - \eta_j^{\ast}\right|\right|_2 = o_p(1),
\end{equation}
where $||\cdot||_2$ is the $L_2$-norm.
\label{Lemma:HscoreParamConsistency_op1}
\end{lemma}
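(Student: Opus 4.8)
The plan is to treat $\tilde\eta_j$ as an M-estimator (equivalently, a Z-estimator of the estimating equation $\tilde\Psi_j=0$) and invoke the standard argmin-consistency theorem for extremum estimators (see, e.g., Theorem 5.7 in \cite{van2000asymptotic}). Write the normalised empirical and population criteria
\begin{equation}
  M_n(\eta_j) := \frac{1}{n}\sum_{i=1}^n H(y_i; f_j(\cdot;\eta_j)) - \frac{1}{n}\log\pi_j(\eta_j), \qquad M(\eta_j) := \mathbb{E}_g\left[H(y; f_j(\cdot;\eta_j))\right], \nonumber
\end{equation}
so that $\tilde\eta_j = \argmin_{\eta_j} M_n(\eta_j)$ and $\eta_j^{\ast} = \argmin_{\eta_j} M(\eta_j)$. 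The argument then has three ingredients, carried out in this order.

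First I would show the prior term is uniformly negligible: by A1 the space $\Theta_j\times\Phi_j$ is compact and $\pi_j$ is continuous and strictly positive on it, hence $\log\pi_j$ is bounded and $\sup_{\eta_j}\,n^{-1}|\log\pi_j(\eta_j)| \to 0$, so it suffices to control the $\mathcal{H}$-score average $n^{-1}\sum_i H(y_i;f_j(\cdot;\eta_j))$. Second, I would establish a uniform law of large numbers, $\sup_{\eta_j\in\Theta_j\times\Phi_j} \left| n^{-1}\sum_{i=1}^n H(y_i;f_j(\cdot;\eta_j)) - M(\eta_j) \right| = o_p(1)$: pointwise convergence is the ordinary LLN (finiteness of the first moment of $H(y;f_j(\cdot;\eta_j))$ following from A1--A2), and uniformity over the compact parameter set follows by the classical Glivenko--Cantelli argument, using the continuity of $\eta_j\mapsto H(y;f_j(\cdot;\eta_j))$ implied by the twice continuous differentiability assumptions in A1 together with a $g$-integrable envelope supplied by A1--A2. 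Combining the two steps gives $\sup_{\eta_j} |M_n(\eta_j) - M(\eta_j)| = o_p(1)$.

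Third, I would check that $M$ has a well-separated minimum at $\eta_j^{\ast}$: A1 guarantees a unique stationary point of $\Psi_j^{\ast} = \nabla_{\eta_j}M$, and, with $M$ continuous and $\Theta_j\times\Phi_j$ compact, $\eta_j^{\ast}$ is the unique global minimiser, so $\inf_{\{\eta_j : \|\eta_j - \eta_j^{\ast}\|_2 \ge \varepsilon\}} M(\eta_j) > M(\eta_j^{\ast})$ for every $\varepsilon>0$. The argmin theorem then yields $\|\tilde\eta_j - \eta_j^{\ast}\|_2 = o_p(1)$. An essentially equivalent route, which I might prefer because the $O_p(n^{-1/2})$ refinement in Proposition \ref{Lemma:HscoreParamConsistency2} will also work through the estimating equations, is to view $\tilde\eta_j$ as a Z-estimator solving $\tilde\Psi_j(\tilde\eta_j) = \Psi_j(\tilde\eta_j) + \phi_j(\tilde\eta_j) = 0$: since $n^{-1}\phi_j(\tilde\eta_j) = o(1)$ and $n^{-1}\Psi_j(\eta_j) \to \Psi_j^{\ast}(\eta_j)$ uniformly in $\eta_j$, one gets $\Psi_j^{\ast}(\tilde\eta_j) = o_p(1)$, and uniqueness and continuity of the zero of $\Psi_j^{\ast}$ force $\tilde\eta_j \to_p \eta_j^{\ast}$.

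I expect the uniform law of large numbers to be the main obstacle, since A1--A2 are comparatively weak: A2 only supplies a dominating function integrable in $\eta_j$ (which is what makes the integrated $\mathcal{H}$-score finite) rather than in $y$, so a $g$-integrable envelope for $\sup_{\eta_j}|H(y;f_j(\cdot;\eta_j))|$ is not handed to us directly. The cleanest way around this that I see is to use compactness and a mean-value bound, $|H(y;f_j(\cdot;\eta_j)) - H(y;f_j(\cdot;\eta_j^{\ast}))| \le \big(\sup_{\bar\eta_j}\|\nabla_{\bar\eta_j} H(y;f_j(\cdot;\bar\eta_j))\|\big)\,\|\eta_j - \eta_j^{\ast}\|_2$, control the single-observation gradient via the finite-variance condition in A1, and thereby reduce the uniform statement to pointwise convergence at $\eta_j^{\ast}$ plus a stochastic-equicontinuity estimate; verifying that the requisite moments are finite from A1--A2 alone is the delicate point.
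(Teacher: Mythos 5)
Your proposal follows essentially the same route as the paper's own proof: it too applies the argmin-consistency theorem (Theorem 5.7 of \cite{van2000asymptotic}), disposes of the prior term via compactness and strict positivity of $\pi_j$, obtains the uniform law of large numbers for the averaged \Hscore{} by the classical Glivenko--Cantelli argument (compactness, continuity in $\eta_j$, and a dominating function), and gets well-separation of the minimum from uniqueness of $\eta_j^{\ast}$ on the compact set. The only point of divergence is your concern about whether A2 supplies a $g$-integrable envelope; the paper does not attempt your mean-value/equicontinuity workaround but simply reads A2 as providing the integrable dominating function required for the Glivenko--Cantelli step, so your construction is an (arguably more careful, though itself incomplete, since the finite-variance condition in A1 does not bound $\sup_{\eta_j}\|\nabla_{\eta_j}H\|$) addition rather than a different proof strategy.
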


%
%

\begin{proof}
\color{black}
Theorem 5.7 of \cite{van2000asymptotic} provides general conditions for the convergence in probability of any \textit{M}-estimator, of which $\tilde{\eta}_j$ minimising the the in-sample \hyvarinen score plus prior penalty is an example, to its optimum, $\eta_j^{\ast}$. The theorem requires that for every $\epsilon$ 
\begin{equation}
    \underset{\eta_j: ||\eta_j - \eta_j^{\ast}||_2 \geq \epsilon}{\sup} \mathbb{E}_g\left[-H(y_1; f_j(\cdot;\eta_j)\right] < \mathbb{E}_g\left[-H(y_1; f_j(\cdot;\eta^{\ast}_j)\right],\nonumber
\end{equation}
which is provided by the uniqueness of $\eta_j^{\ast}$ assumed in A1, that $\tilde{\eta}_j$ satisfies 
\begin{align}
	\frac{1}{n}\log \pi_j(\tilde{\eta}_j) - \frac{1}{n}\sum_{i=1}^n H(y_i; f_j(\cdot; \tilde{\eta}_j) &\geq \frac{1}{n}\log \pi_j(\eta_j^{\ast}) - \frac{1}{n}\sum_{i=1}^n H(y_i; f_j(\cdot; \eta_j^{\ast}) - o_p(1),\nonumber
\end{align}
which is does by the definition of $\tilde{\eta}_j$ in \eqref{Equ:Hscore_minimisers} of the paper, and that $\frac{1}{n}\log \pi_j(\eta_j) - \frac{1}{n}\sum_{i=1}^n H(y_i; f_j(\cdot; \eta_j)$ converges uniformly to $\mathbb{E}\left[-H(y_1; f_j(\cdot; \eta_j)\right]$. Using the triangle inequality we can decompose the required uniform convergence as
\begin{align}
&\sup_{\eta_j}\left|\frac{1}{n}\log \pi_j(\eta_j) - \frac{1}{n}\sum_{i=1}^n H(y_i; f_j(\cdot; \eta_j) - \mathbb{E}\left[-H(y_1; f_j(\cdot; \eta_j)\right]\right|\nonumber\\
\leq&\frac{1}{n}\sup_{\eta_j}\left|\log \pi_j(\eta_j)\right| + \sup_{\eta_j}\left| \mathbb{E}\left[H(y_1; f_j(\cdot; \eta_j)\right] - \frac{1}{n}\sum_{i=1}^n H(y_i; f_j(\cdot; \eta_j)\right|\nonumber.
\end{align}
Condition A1 ensures that $\pi_j(\eta_j) > 0$ which in turn ensures that $\frac{1}{n}\sup_{\eta_j}\left|\log \pi_j(\eta_j)\right| = o_p(1)$. Lastly $\sup_{\eta_j}\left| \mathbb{E}\left[H(y_1; f_j(\cdot; \eta_j)\right] - \frac{1}{n}\sum_{i=1}^n H(y_i; f_j(\cdot; \eta_j)\right| = o_P(1)$ if $\{H(y_i; f_j(\cdot; \eta_j)\}$ are Glivenko-Cantelli. As is discussed, \cite[p. 46]{van2000asymptotic} sufficient conditions for $o_p(1)$ convergence are, compactness of the set $\Theta_j \times \Phi_j$ (A1), uniqueness of $\eta_j^{\ast}$ (A1), continuity of $\eta_j \mapsto H(z; f_j(\cdot; \eta_j))$ for all $z$ (A1), and the dominance of $H(z; f_j(\cdot; \eta_j))$ by an integrable function (A2). As a result, under our conditions $\left|\left|\tilde{\eta}_j - \eta_j^{\ast}\right|\right|_2 = o_p(1)$. 
\color{black}
\end{proof}

\subsubsection{Proposition \ref{Lemma:HscoreParamConsistency} (\hyvarinen score parameter consistency at rate $O_p(\nicefrac{1}{\sqrt{n}})$)}{\label{App:ParameterConsistency}}

To complete the proof of Proposition \ref{Lemma:HscoreParamConsistency} we extend Lemma \ref{Lemma:HscoreParamConsistency_op1} to establish that under Condition A3 also, the convergence of the \hyvarinen score minimising parameters happens at rate $O_p(\nicefrac{1}{\sqrt{n}})$.

\begin{proposition}[\hyvarinen score parameter consistency at rate $O_p(\nicefrac{1}{\sqrt{n}})$]
Assume Conditions A1-A3. Given a sample $y\sim g$, then as $n\rightarrow\infty$ the maximum \Hposterior{} parameters of model $j$ have the following asymptotic behaviour
\begin{equation}
  \left|\left|\tilde{\eta}_j - \eta_j^{\ast}\right|\right|_2 = O_p(\nicefrac{1}{\sqrt{n}}),
\end{equation}
where $||\cdot||_2$ is the $L_2$-norm.
\label{Lemma:HscoreParamConsistency2}
\end{proposition}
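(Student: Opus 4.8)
The plan is to run the textbook $Z$-estimator argument for $\sqrt{n}$-consistency, bootstrapping off the $o_p(1)$ consistency already granted by Lemma~\ref{Lemma:HscoreParamConsistency_op1}. Since by Condition~A1 the penalised \Hposterior{} mode obeys the estimating equation $\tilde\Psi_j(\tilde\eta_j) = \Psi_j(\tilde\eta_j) + \phi_j(\tilde\eta_j) = 0$, I would Taylor-expand this identity about $\eta_j^{\ast}$, using the integral form of the remainder to sidestep the fact that the multivariate mean value theorem does not produce a single intermediate point: writing $\bar A_j := \int_0^1 A_j(\eta_j^{\ast} + t(\tilde\eta_j - \eta_j^{\ast}))\,dt$ one gets $0 = \tilde\Psi_j(\eta_j^{\ast}) + \bar A_j(\tilde\eta_j - \eta_j^{\ast})$, hence $\tilde\eta_j - \eta_j^{\ast} = -(\tfrac1n\bar A_j)^{-1}\,\tfrac1n\tilde\Psi_j(\eta_j^{\ast})$. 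One could alternatively cite Theorem~5.23 of \cite{van2000asymptotic}, but I prefer the direct route since it carries over verbatim to the reparameterised/compactified settings of Section~\ref{Sec:HscoreModelConsistency}, e.g.\ $\nu_2 = 1/\kappa_2^2$ for Tukey's loss.

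Two estimates then finish the argument. First, the ``score'' at the optimum is $O_p(n^{-1/2})$: because $\Psi_j^{\ast}(\eta_j^{\ast}) = \nabla_{\eta_j}\mathbb{E}_g[H(y;f_j(\cdot;\eta_j^{\ast}))] = 0$ by A1 — with the interchange of $\nabla_{\eta_j}$ and $\mathbb{E}_g$ licensed by the domination/Lipschitz Conditions A2 and A4 — the i.i.d.\ summands $\nabla_{\eta_j}H(y_i;f_j(\cdot;\eta_j^{\ast}))$ have mean zero and, by A1, finite variance, so the CLT (or Chebyshev) gives $\tfrac1n\Psi_j(\eta_j^{\ast}) = O_p(n^{-1/2})$; the prior gradient $\phi_j(\eta_j^{\ast})$ is a fixed finite vector by A1, so $\tfrac1n\tilde\Psi_j(\eta_j^{\ast}) = O_p(n^{-1/2})$. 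Second, $\tfrac1n\bar A_j \to A_j^{\ast}(\eta_j^{\ast})$ in probability: the prior contributes only $O_p(1/n)$ to $\tfrac1n A_j$, while for the \Hscore{} part I would combine the pointwise LLN $\tfrac1n\sum_i\nabla_{\eta_j}^2 H(y_i;f_j(\cdot;\eta_j^{\ast}))\to\mathbb{E}_g[\nabla_{\eta_j}^2 H(y;f_j(\cdot;\eta_j^{\ast}))]$ with the single-observation Hessian Lipschitz bound in A4 (whose envelope $m_A$ is integrable, so $\tfrac1n\sum_i m_A(y_i) = O_p(1)$) and the consistency $\tilde\eta_j - \eta_j^{\ast} = o_p(1)$, which together control the discrepancy uniformly over the integration path. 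By A3 the limit $A_j^{\ast}(\eta_j^{\ast})$ has determinant bounded away from $0$ and $\infty$ — hence, being the Hessian of the expected \Hscore{} at its minimiser, positive definite — so continuity of matrix inversion yields $(\tfrac1n\bar A_j)^{-1} = O_p(1)$.

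Putting the pieces together gives $\tilde\eta_j - \eta_j^{\ast} = O_p(1)\cdot O_p(n^{-1/2}) = O_p(n^{-1/2})$, as required. I expect the one genuinely delicate point to be establishing $(\tfrac1n\bar A_j)^{-1} = O_p(1)$, i.e.\ ruling out near-degeneracy of the empirical Hessian \emph{anywhere} along the segment from $\tilde\eta_j$ to $\eta_j^{\ast}$ rather than merely at $\eta_j^{\ast}$ itself; this is precisely the role of the uniform lower bound $|\tfrac1n A_j(\eta_j)|\ge M_A$ on a $\delta$-ball in A3, used in tandem with the Lipschitz control of A4. Everything else — the LLN/CLT applications and differentiating under the integral sign — is routine once A2 and A4 are in hand.
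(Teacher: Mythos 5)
Your overall strategy is the same as the paper's: use the estimating equation $\tilde{\Psi}_j(\tilde{\eta}_j)=0$, expand around $\eta_j^{\ast}$, show the (penalised) score at $\eta_j^{\ast}$ is $O_p(\nicefrac{1}{\sqrt{n}})$ by Chebyshev/CLT together with the finite-variance part of A1 and the boundedness of the prior gradient, and then convert this into a rate for $\tilde{\eta}_j-\eta_j^{\ast}$ through nondegeneracy of the Hessian near $\eta_j^{\ast}$. The one substantive problem is the hypothesis set: the proposition assumes only A1--A3, but your argument for $(\tfrac{1}{n}\bar{A}_j)^{-1}=O_p(1)$ runs through convergence of the path-averaged Hessian to $A_j^{\ast}(\eta_j^{\ast})$, which you justify by the single-observation Hessian Lipschitz bound with integrable envelope $m_A$ --- that is Condition A4, which the paper deliberately reserves for Theorem \ref{Thm:HBayesFactorConsistency} and does not invoke here. (The same remark applies to your appeal to A4 for interchanging $\nabla_{\eta_j}$ and $\mathbb{E}_g$; the paper folds the identities $\Psi_j^{\ast}(\eta_j^{\ast})=0$ and the zero-mean/finite-variance property of the score into A1 itself.) Under A1--A3 alone, a pointwise law of large numbers at $\eta_j^{\ast}$ does not control $A_j(\cdot)$ along the random segment from $\eta_j^{\ast}$ to $\tilde{\eta}_j$, so as written the second estimate is not licensed by the stated assumptions.

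The repair is exactly the one you gesture at in your closing sentence, and it is what the paper does: A3 \emph{directly assumes} that, with probability tending to one, the scaled empirical Hessian $\tfrac{1}{n}A_j(\eta_j)$ is bounded below (the paper reads this as $\lambda_{\min}\geq M_A$) uniformly over a $\delta$-ball around $\eta_j^{\ast}$. Since Lemma \ref{Lemma:HscoreParamConsistency_op1} puts the whole segment inside that ball eventually, one either bounds $\|\tfrac{1}{n}A_j(\tilde{\eta}_j^{\dagger})(\tilde{\eta}_j-\eta_j^{\ast})\|_2\geq M_A\|\tilde{\eta}_j-\eta_j^{\ast}\|_2$ as the paper does (never inverting anything), or, in your formulation, notes $x^T\bigl(\tfrac{1}{n}\bar{A}_j\bigr)x=\int_0^1 x^T\tfrac{1}{n}A_j(\eta_t)x\,dt\geq M_A\|x\|_2^2$, so the averaged Hessian is invertible with $O_p(1)$ inverse without any convergence statement and hence without A4. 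A secondary caution: your inference that $A_j^{\ast}(\eta_j^{\ast})$ is positive definite because it is ``the Hessian of the expected \Hscore{} at its minimiser'' with nonzero determinant presumes an interior optimum; the boundary cases (e.g.\ $\nu_2^{\ast}=1/\kappa_2^{\ast 2}=0$ for Tukey's loss under Gaussian data) are precisely the ones the paper is designed to cover, which is another reason to lean on A3's empirical-Hessian bound rather than on properties of the population Hessian.
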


\begin{proof}
We achieve this by first proving that the gradient of the penalised empirical \hyvarinen loss function $\frac{1}{n}\tilde{\Psi}_j(\eta_j)$ converges to the gradient of the expected \hyvarinen loss function $\Psi_j^{\ast}(\eta_j)$ at rate $O_p(\nicefrac{1}{\sqrt{n}})$ (Step 1). We then use this and A3 on the Hessian matrix of the \hyvarinen score to extend the $O_p(\nicefrac{1}{\sqrt{n}})$ convergence of the gradients to that of the parameters (Step 2).  

\noindent\textbf{Step 1: Proof that $\left|\left|\frac{1}{n}\tilde{\Psi}_j(\eta_j) - \Psi_j^{\ast}(\eta_j)\right|\right|_2 = O_p(\nicefrac{1}{\sqrt{n}})$}

To do so firstly note that by the triangle inequality of norm $||\cdot||_2$
\begin{align}
    & P\left(\left|\left|\sqrt{n}\left(\frac{1}{n}\Psi_j(\eta_j)  - \frac{1}{n}\phi_j(\eta_j) - \Psi_j^{\ast}(\eta_j)\right)\right|\right|_2 > M\right)\nonumber\\
    &\leq P\left( \left|\left|\sqrt{n}\left(\frac{1}{n}\Psi_j(\eta_j) - \Psi_j^{\ast}(\eta_j)\right)\right|\right|_2 + \left|\left|\sqrt{n}\left(\frac{1}{n}\phi_j(\eta_j)\right)\right|\right|_2 > M\right)\nonumber
\end{align}
for any $M > 0$ and under the assumption that $\pi_j(\theta_j) > 0$ (A1). Now for $O_p(\nicefrac{1}{\sqrt{n}})$ convergence we are interested in the behaviour of this probability for sufficiently large $n > N$ and therefore,
\begin{align}
    &P\left( \left|\left|\sqrt{n}\left(\frac{1}{n}\Psi_j(\eta_j) - \Psi_j^{\ast}(\eta_j)\right)\right|\right|_2 + \left|\left|\sqrt{n}\left(\frac{1}{n}\phi_j(\eta_j)\right)\right|\right|_2 > M\right)\nonumber\\
    =& P\left( \left|\left|\sqrt{n}\left(\frac{1}{n}\Psi_j(\eta_j) - \Psi_j^{\ast}(\eta_j)\right)\right|\right|_2 > M - \left|\left|\frac{1}{\sqrt{n}}\phi_j(\eta_j)\right|\right|_2\right)\nonumber\\
    \leq& P\left( \left|\left|\sqrt{n}\left(\frac{1}{n}\Psi_j(\eta_j) - \Psi_j^{\ast}(\eta_j)\right)\right|\right|_2 > M^{\prime}(\eta_j)\right)\quad \forall n \geq N,\nonumber
\end{align}
where $M^{\prime}(\eta_j) := M - \left|\frac{1}{\sqrt{N}}\phi_j(\eta_j)\right|$. Next note that vector valued,
\begin{align}
    \mathbb{E}_g\left[\sqrt{n}\left(\frac{1}{n}\Psi_j(\eta_j) - \Psi_j^{\ast}(\eta_j)\right)\right] &= 0\nonumber\\
    \mathbb{V}_G\left[\sqrt{n}\left(\frac{1}{n}\Psi_j(\eta_j) - \Psi_j^{\ast}(\eta_j)\right)\right] &= \mathbb{V}_G\left[\Psi_j(\eta_j)\right],\nonumber
\end{align}
where $\sum \mathbb{V}_G\left[\Psi_j(\eta_j)\right] < \infty$ (A1). 
Then we can apply the $L_2$-norm vector extension of Chebyshev's inequality \citep{ferentios1982tcebycheff} to vector valued $X_n = \sqrt{n}\left(\frac{1}{n}\Psi_j(\eta_j) - \Psi_j^{\ast}(\eta_j)\right)$ which provides that 
\begin{align}
    P\left( \left|\left|\sqrt{n}\left(\frac{1}{n}\Psi_j(\eta_j) - \Psi_j^{\ast}(\eta_j)\right)\right|\right|_2 > M^{\prime}(\eta_j)\right)  < \frac{\sum \mathbb{V}_G\left[\Psi_j(\eta_j)\right]}{\left(M^{\prime}(\eta_j)\right)^2}\nonumber
\end{align}
which proves that 
\begin{equation}
P\left(\left|\left|\sqrt{n}\left(\frac{1}{n}\tilde{\Psi}_j(\eta_j) - \Psi_j^{\ast}(\eta_j)\right)\right|\right|_2 > M\right) < \frac{\sum\mathbb{V}_G\left[\Psi_j(\eta_j)\right]}{\left(M^{\prime}(\eta_j)\right)^2}, \quad \forall n > N.\nonumber
\end{equation}
Lastly, we note that $\frac{\sum \mathbb{V}_G\left[\Psi_j(\eta_j)\right]}{\left(M^{\prime}(\eta_j)\right)^2}$ can be made arbitrarily small by making $M$ arbitrarily big  and under the assumption that $\pi_j(\theta_j) > 0$ (A1) and therefore $\left|\left|\frac{1}{n}\tilde{\Psi}_j(\eta_j) - \Psi_j^{\ast}(\eta_j)\right|\right|_2 = O_p(\nicefrac{1}{\sqrt{n}})$ for all $\eta_j$.\\

\noindent\textbf{Step 2: Proof that $||\tilde{\eta}_j - \eta_j^{\ast}||_2 = O_p(\nicefrac{1}{\sqrt{n}})$}

Next, we use A3 on the behaviour of the empirical Hessians around $\eta_j^{\ast}$ to extend the fact that $\left|\left|\frac{1}{n}\tilde{\Psi}_j(\eta_j) - \Psi^{\ast}_j(\eta_j)\right|\right|_2 = O_p(\nicefrac{1}{\sqrt{n}})$ to prove that $||\tilde{\eta}_j - \eta_j^{\ast}||_2 = O_p(\nicefrac{1}{\sqrt{n}})$.

Plugging in $\eta_j^{\ast}$ for $\eta_j$ and using $\tilde{\Psi}_j(\tilde{\eta}_j) = \Psi_j^{\ast}(\eta_j^{\ast}) = 0$ (by A1) gives, 
\begin{align}
\left|\left|\frac{1}{n}\tilde{\Psi}_j(\eta_j^{\ast}) - \Psi_j^{\ast}(\eta_j^{\ast})\right|\right|_2 &= \left|\left|\frac{1}{n}\tilde{\Psi}_j(\eta_j^{\ast}) - \frac{1}{n}\tilde{\Psi}_j(\tilde{\eta}_j)\right|\right|_2 = O_p(\nicefrac{1}{\sqrt{n}})\nonumber
\end{align}
Next, consider a first order Taylor expansion of $\frac{1}{n}\tilde{\Psi}_j(\cdot)$ about $\eta_j^{\ast}$ evaluated at $\tilde{\eta}_j$. Note that $\frac{1}{n}\Psi_j$ is a vector valued function and as a result we consider a Taylor expansion of each entry of the vector. 
\begin{align}
    &\frac{1}{n}\tilde{\Psi}_j(\tilde{\eta}_j) = \frac{1}{n}\tilde{\Psi}_j(\eta_j^{\ast}) + \nabla_{\eta_j} \frac{1}{n}\tilde{\Psi}_j(\tilde{\eta}_j^{\dagger})(\tilde{\eta}_j - \eta_j^{\ast})\nonumber \\
    \Rightarrow &\left|\left|\frac{1}{n}\tilde{\Psi}_j(\tilde{\eta}_j) - \frac{1}{n}\tilde{\Psi}_j(\eta_j^{\ast})\right|\right|_2 = \left|\left| \frac{1}{n}A_j(\tilde{\eta}_j^{\dagger})(\tilde{\eta}_j - \eta_j^{\ast})\right|\right|_2 \nonumber
\end{align}
for some $\eta^{\dagger}_j = \alpha\tilde{\eta}_j + (1-\alpha)\eta^{\ast}_j$ for $\alpha\in[0,1]^{d_j}$. 
The fact that $\tilde{\Psi}_j(\cdot)$ is already the vector valued first derivative of the \hyvarinen score meant that $\nabla_{\eta_j} \frac{1}{n}\tilde{\Psi}_j(\tilde{\eta}_j^{\dagger}) = \frac{1}{n}A_j(\tilde{\eta}_j^{\dagger})$. 
Now the right hand side can be bounded as follows
\begin{align}
\left|\left|\frac{1}{n}A_j(\tilde{\eta}_j^{\dagger})(\tilde{\eta}_j - \eta_j^{\ast})\right|\right|_2
&= \sqrt{(\tilde{\eta}_j - \eta_j^{\ast})^T\left(\frac{1}{n}A_j(\tilde{\eta}_j^{\dagger})\right)^T\frac{1}{n}A_j(\tilde{\eta}_j^{\dagger})(\tilde{\eta}_j - \eta_j^{\ast})}\nonumber\\
&\in
\left[ \lambda_{\min}(\tilde{\eta}_j^{\dagger}) ||\tilde{\eta}_j - \eta_j^{\ast} ||_2,
\lambda_{\max}(\tilde{\eta}_j^{\dagger}) || \tilde{\eta}_j - \eta_j^{\ast} ||_2,
\right]\nonumber
\end{align}
where $\lambda_{\min}(\tilde{\eta}_j^{\dagger}) $ and $\lambda_{\max}(\tilde{\eta}_j^{\dagger}) $ are the smallest and largest Eigen-values of 
$\frac{1}{n}A_j(\tilde{\eta}_j^{\dagger})$  
respectively. 
As a result we can bound
\begin{align}
  \left|\left|\frac{1}{n}\tilde{\Psi}_j(\tilde{\eta}_j) - \frac{1}{n}\tilde{\Psi}_j(\eta_j^{\ast})\right|\right|_2 \geq  \lambda_{\min}(\tilde{\eta}_j^{\dagger}) ||\tilde{\eta}_j - \eta_j^{\ast} ||_2.\nonumber
\end{align}

Further, $||\tilde{\eta}_j - \eta_j^{\ast}||_2 = o_p(1)$ from Lemma \ref{Lemma:HscoreParamConsistency_op1} and
by A3ii) there exists $\delta_A>0$ such that $\lambda_{\min}(\eta_j) > M_{A}$ with probability tending to 1 for $||\eta_j - \eta_j^{\ast}||_2 \leq \delta_A$. Therefore, with probability tending to 1
\begin{align}
 \left|\left|\frac{1}{n}\tilde{\Psi}_j(\tilde{\eta}_j) - \frac{1}{n}\tilde{\Psi}_j(\eta_j^{\ast})\right|\right|_2 \geq M_{A} ||\tilde{\eta}_j - \eta_j^{\ast} ||_2.\nonumber
\end{align}
$M_A$ is a constant and as a result $||\tilde{\eta}_j - \eta_j^{\ast} ||_2 = O_p(\nicefrac{1}{\sqrt{n}})$, as required.

\end{proof}
Combining Lemma \ref{Lemma:HscoreParamConsistency_op1} and Proposition \ref{Lemma:HscoreParamConsistency2} proves Proposition \ref{Lemma:HscoreParamConsistency}.


\subsubsection{Theorem \ref{Thm:HBayesFactorConsistency2} ($\mathcal{H}$-Bayes Factor Consistency)}{\label{App:HscoreConsistency}}

Given Proposition \ref{Lemma:HscoreParamConsistency} and the conditions stated in Section \ref{App:TechnicalConditions} we are now ready to prove the paper's main theorem. First we restate Theorem \ref{Thm:HBayesFactorConsistency}.

\newtheorem{thm}{Theorem}
\begin{thm}[$\mathcal{H}$-Bayes Factor Consistency]
Assume Conditions A1-A4 given in Section \ref{App:TechnicalConditions}. Given a sample $y \sim g$ the Laplace approximation to the \HBayes{} factor (given by $\tilde{B}^{(\mathcal{H})}_{kl} := \nicefrac{\tilde{\mathcal{H}}_k(y)}{\tilde{\mathcal{H}}_l(y)}$ for $\tilde{\mathcal{H}}_j(y)$ defined in  \eqref{Equ:LaplaceApproxHScore}) of loss $k$ over loss $l$ has the following asymptotic behaviour as $n\rightarrow\infty$
\begin{enumerate}[label=(\roman*)]
    \item When $\mathbb{E}_g[H(z; f_l(\cdot; \eta^{\ast}_l))] - \mathbb{E}_g[H(z; f_k(\cdot; \eta^{\ast}_k))] < 0$ then 
    \begin{align}
        \frac{1}{n}\log \tilde{B}^{(\mathcal{H})}_{kl} = \mathbb{E}_g[H(z; f_l(\cdot; \eta^{\ast}_l))] - \mathbb{E}_g[H(z; f_k(\cdot; \eta^{\ast}_k))] + o_p(1)
    \end{align}
    That is that when the more complex model $l$ decreases Fisher's divergence relative to model $k$, the \HBayes{} factor accrues evidence in favour of model $l$, $\tilde{B}^{(\mathcal{H})}_{kl}\rightarrow 0$, at an exponential rate. 
    \item When $\mathbb{E}_g[H(z; f_l(\cdot; \eta^{\ast}_l))] = \mathbb{E}_g[H(z; f_k(\cdot; \eta^{\ast}_k))]$, with $k$ being the simpler model 
    \begin{align}
        \log \tilde{B}^{(\mathcal{H})}_{kl} = \frac{d_l - d_k}{2}\log(n) + O_p(1)
    \end{align}
    That is that when  the models are equally preferable according to Fisher's divergence then $\tilde{B}^{(\mathcal{H})}_{kl}\rightarrow \infty$ at a polynomial rate in $n$.
\end{enumerate}
\label{Thm:HBayesFactorConsistency2}
\end{thm}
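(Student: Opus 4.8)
The plan is to take logarithms in the closed-form Laplace approximation \eqref{Equ:HBayesFactor_LaplaceApprox} and control the three resulting groups of terms separately: (a) the difference of summed \Hscore s, $\sum_{i} H(y_i;f_l(\cdot;\tilde{\eta}_l)) - \sum_{i} H(y_i;f_k(\cdot;\tilde{\eta}_k))$; (b) the log-determinant ratio $\tfrac{1}{2}\log|A_l(\tilde{\eta}_l)| - \tfrac{1}{2}\log|A_k(\tilde{\eta}_k)|$; and (c) the prior ratio $\log\pi_k(\tilde{\eta}_k) - \log\pi_l(\tilde{\eta}_l)$ together with the $(2\pi)^{(d_k-d_l)/2}$ constant. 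The engine throughout is Proposition \ref{Lemma:HscoreParamConsistency}, which gives $\|\tilde{\eta}_j - \eta_j^*\|_2 = O_p(n^{-1/2})$ and, separately, $o_p(1)$, combined with Conditions A1--A4. First I would dispatch the easy groups. By A1 the priors are continuous and strictly positive and $\tilde{\eta}_j\to\eta_j^*$, so group (c) equals $\log\pi_k(\eta_k^*) - \log\pi_l(\eta_l^*) + o_p(1) = O_p(1)$. For group (b), write $A_j(\tilde{\eta}_j) = n\big(\tfrac{1}{n}\nabla^2_{\eta_j}\sum_i H(y_i;f_j(\cdot;\tilde{\eta}_j)) - \tfrac{1}{n}\nabla^2_{\eta_j}\log\pi_j(\tilde{\eta}_j)\big)$; the prior piece is $O(n^{-1})$, the law of large numbers gives $\tfrac{1}{n}\nabla^2_{\eta_j}\sum_i H(y_i;f_j(\cdot;\eta_j^*))\to A_j^*(\eta_j^*)$, and the Lipschitz bound on the Hessian in A4 together with $\|\tilde{\eta}_j-\eta_j^*\|_2=o_p(1)$ upgrades this to $\tfrac{1}{n}A_j(\tilde{\eta}_j)\to A_j^*(\eta_j^*)$, which is positive definite and finite by A3. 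Hence $\log|A_j(\tilde{\eta}_j)| = d_j\log n + \log|A_j^*(\eta_j^*)| + o_p(1)$, so group (b) equals $\tfrac{d_l-d_k}{2}\log n + O_p(1)$.

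The crux is group (a). Taylor-expanding the smooth map $\eta_j\mapsto\sum_i H(y_i;f_j(\cdot;\eta_j))$ about $\eta_j^*$ and evaluating at $\tilde{\eta}_j$ gives
\begin{align}
\sum_{i=1}^n H(y_i;f_j(\cdot;\tilde{\eta}_j)) = \sum_{i=1}^n H(y_i;f_j(\cdot;\eta_j^*)) + \Psi_j(\eta_j^*)^\top(\tilde{\eta}_j-\eta_j^*) + \tfrac{1}{2}(\tilde{\eta}_j-\eta_j^*)^\top \bar{A}_j\, (\tilde{\eta}_j-\eta_j^*),\nonumber
\end{align}
with $\bar{A}_j = \nabla^2_{\eta_j}\sum_i H(y_i;f_j(\cdot;\eta_j^\dagger))$ for an intermediate $\eta_j^\dagger$ on the segment joining $\tilde{\eta}_j$ and $\eta_j^*$. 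Since $\Psi_j^*(\eta_j^*)=0$ (A1) and $\nabla_{\eta_j}$ commutes with $\mathbb{E}_g$ under the domination in A1--A2, the summands of $\Psi_j(\eta_j^*)$ are mean zero with finite variance (A1), so the central limit theorem yields $\Psi_j(\eta_j^*) = O_p(\sqrt{n})$; with $\tilde{\eta}_j-\eta_j^* = O_p(n^{-1/2})$ the linear term is $O_p(1)$. The intermediate $\eta_j^\dagger$ lies in the $\delta_A$-ball around $\eta_j^*$ with probability tending to one (Proposition \ref{Lemma:HscoreParamConsistency}), where A3 bounds $\|\tfrac{1}{n}\bar{A}_j\|$, so $\bar{A}_j = O_p(n)$ and the quadratic term is $O_p(n)\,O_p(n^{-1}) = O_p(1)$. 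Thus $\sum_i H(y_i;f_j(\cdot;\tilde{\eta}_j)) = \sum_i H(y_i;f_j(\cdot;\eta_j^*)) + O_p(1)$ for $j=k,l$.

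It remains to assemble the pieces. For part (i), dividing $\log\tilde{B}^{(\mathcal{H})}_{kl}$ by $n$ makes groups (b), (c) and the $O_p(1)$ remainders all $o_p(1)$, while $\tfrac{1}{n}\sum_i H(y_i;f_j(\cdot;\eta_j^*))\to\mathbb{E}_g[H(y;f_j(\cdot;\eta_j^*))]$ by the law of large numbers, giving $\tfrac{1}{n}\log\tilde{B}^{(\mathcal{H})}_{kl} = \mathbb{E}_g[H(y;f_l(\cdot;\eta_l^*))] - \mathbb{E}_g[H(y;f_k(\cdot;\eta_k^*))] + o_p(1)$, a negative constant, so $\tilde{B}^{(\mathcal{H})}_{kl}\to0$ exponentially. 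For part (ii), with $k$ nested in $l$ one has $f_l(\cdot;\eta_l^*)\equiv f_k(\cdot;\eta_k^*)$ as functions of $y$, hence $H(y_i;f_l(\cdot;\eta_l^*)) = H(y_i;f_k(\cdot;\eta_k^*))$ for every $i$, so the leading sums in group (a) cancel exactly and group (a) is $O_p(1)$; combining with group (b) gives $\log\tilde{B}^{(\mathcal{H})}_{kl} = \tfrac{d_l-d_k}{2}\log n + O_p(1)\to+\infty$ since $d_l>d_k$, i.e. the simpler model $k$ is selected at a polynomial rate.

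The main obstacle I anticipate is the bookkeeping in group (a): establishing $\Psi_j(\eta_j^*)=O_p(\sqrt{n})$ cleanly needs the interchange of differentiation and integration at $\eta_j^*$ (justified by the envelopes in A1--A2), and the Taylor remainder must be controlled through the intermediate point $\eta_j^\dagger$ sitting in the region where A3/A4 bound $\tfrac{1}{n}\bar{A}_j$ --- precisely where the $o_p(1)$ half of Proposition \ref{Lemma:HscoreParamConsistency} enters. The second delicate point, specific to part (ii), is the pointwise (not merely in-expectation) identity $H(\cdot;f_l(\cdot;\eta_l^*))\equiv H(\cdot;f_k(\cdot;\eta_k^*))$ for nested models: this exact cancellation removes the $O_p(\sqrt{n})$ stochastic fluctuation and leaves the $\tfrac{d_l-d_k}{2}\log n$ dimension penalty as the dominant term.
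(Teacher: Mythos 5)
Your proposal is correct, and for the decisive step it takes a route that differs from the paper's. The treatment of the prior ratio and of the log-determinant terms (your groups (b) and (c)) coincides with the paper's Steps 1--2, and your part (i) matches the paper's Case 1 in spirit (the paper bounds $\tfrac{1}{n}\sum_i H(y_i;f_j(\cdot;\tilde{\eta}_j)) - \mathbb{E}_g[H(y;f_j(\cdot;\eta_j^*))]$ directly via the Lipschitz condition A4 and the WLLN, with no Taylor expansion). The real divergence is in part (ii): the paper exploits the nesting explicitly by embedding $\tilde{\eta}_k$ into model $l$'s parameter space as $\tilde{\eta}'_k=(\tilde{\eta}_k,\eta^*_{l\setminus k})$ and Taylor-expanding the \emph{penalised} $\mathcal{H}$-score of model $l$ about its own sample minimiser $\tilde{\eta}_l$, where $\tilde{\Psi}_l(\tilde{\eta}_l)=0$ holds exactly, so no stochastic linear term ever appears; you instead expand each model's unpenalised $\mathcal{H}$-score sum about its population optimum $\eta_j^*$, control the linear term by a CLT/Chebyshev bound $\Psi_j(\eta_j^*)=O_p(\sqrt{n})$ combined with the $\sqrt{n}$-rate of Proposition \ref{Lemma:HscoreParamConsistency}, and then cancel the leading sums using the pointwise identity $H(y;f_l(\cdot;\eta_l^*))=H(y;f_k(\cdot;\eta_k^*))$ under nesting (the same structural fact the paper encodes via $\eta^*_{l\setminus k}$). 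Your route buys a cleaner unified statement, $\sum_i H(y_i;f_j(\cdot;\tilde{\eta}_j))=\sum_i H(y_i;f_j(\cdot;\eta_j^*))+O_p(1)$, that serves both parts at once, at the price of needing the mean-zero/finite-variance score argument and the interchange of $\nabla_{\eta_j}$ and $\mathbb{E}_g$ at $\eta_j^*$ (which the paper's own proof of Proposition \ref{Lemma:HscoreParamConsistency} also implicitly assumes, so you are on equal footing); the paper's route avoids any CLT on the score entirely. Two small points of care: A3 only lower-bounds the determinant of $\tfrac{1}{n}A_j$ near $\eta_j^*$ and bounds $|A_j^*(\eta_j^*)|$, so the upper bound $\tfrac{1}{n}\bar{A}_j=O_p(1)$ you need for the quadratic remainder should be credited to the A4 Lipschitz condition plus the WLLN (exactly your group (b) argument, and the paper's Step 2), not to A3 alone; and your pointwise cancellation, like the paper's Case 2, tacitly takes case (ii) to be the nested situation — which is how the paper itself interprets it — rather than merely equality of the expected $\mathcal{H}$-scores.
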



\begin{proof}
The proof of Theorem \ref{Thm:HBayesFactorConsistency} can be broken down into 3 stages. Step 1 deals with how the prior densities of models $l$ and $k$ contribute to $\tilde{B}^{(\mathcal{H})}_{kl}$. Step 2 does the same for the Hessian's of the \hyvarinen score applied to models $l$ and $k$. Lastly, Step 3 deals with the in-sample \hyvarinen score contribution, considering the two cases of the theorem. 


\noindent\textbf{Step 1: Priors}

From Proposition \ref{Lemma:HscoreParamConsistency}, given that  $||\tilde{\eta}_j - \eta_j^{\ast}||_2 = O_p(\nicefrac{1}{\sqrt{n}})$ and under the assumption that $\pi(\eta_j)$ is continuous in $\eta_j$ (A1), the Continuous Mapping Theorem provides $\pi_j(\tilde{\eta}_j) \stackrel{P}{\longrightarrow} \pi_j(\eta_j^{\ast})$, $j \in \{l, k\}$.
Further, assuming that $\pi_l(\eta_l^{\ast}) > 0$ (A1), it follows that
\begin{align}
    \frac{\pi_k(\tilde{\eta}_k)}{\pi_l(\tilde{\eta}_l)}
\stackrel{P}{\longrightarrow}
\frac{\pi_k(\eta_k^{\ast})}{\pi_l(\eta_l^{\ast})},  \nonumber
\end{align}
a strictly positive finite constant. Note: that for non-local priors either $\pi_j(\eta_j^{\ast})$ may be zero, in that case the limit may be 0 or $\infty$ . We address this point again later Theorem \ref{Thm:HscoreConsistencyNLPs}.\\

\noindent\textbf{Step 2: Hessian's}

Now for the Hessian's firstly we rewrite $|A_j(\tilde{\eta}_j)| = n^{d_j/2}\left|\frac{1}{n}A_j(\tilde{\eta}_j)\right|$. Then we want to prove that $\frac{1}{n}A_j(\tilde{\eta}_j)\stackrel{P}{\longrightarrow} A^{\ast}_j(\eta^{\ast}_j)$. 
Firstly,
\begin{align}
   \frac{1}{n}A_j(\tilde{\eta}_j) -  A^{\ast}_j(\eta^{\ast}_j) &= \frac{1}{n}A_j(\tilde{\eta}_j) - \frac{1}{n}A_j(\eta^{\ast}_j) + \frac{1}{n}A_j(\eta^{\ast}_j) - A^{\ast}_j(\eta^{\ast}_j).\nonumber
\end{align}
Then, we use the weak law of large numbers and the Lipschitz condition (A4)) to show that
\begin{align}
   \left|\left|\frac{1}{n}A_j(\tilde{\eta}_j) -  A^{\ast}_j(\eta^{\ast}_j)\right|\right|_2 &= \left|\left|\frac{1}{n}A_j(\tilde{\eta}_j) - \frac{1}{n}A_j(\eta^{\ast}_j) + \frac{1}{n}A_j(\eta^{\ast}_j) - A^{\ast}_j(\eta^{\ast}_j)\right|\right|_2\nonumber\\
   &\leq \left|\left|\frac{1}{n}A_j(\tilde{\eta}_j) - \frac{1}{n}A_j(\eta^{\ast}_j)\right|\right|_2 + \left|\left|\frac{1}{n}A_j(\eta^{\ast}_j) - A^{\ast}_j(\eta^{\ast}_j)\right|\right|_2\textrm{ (tri. in)}\nonumber\\
    &\leq \frac{1}{n}\sum_{i=1}^n\left|\left|A_j(\tilde{\eta}_j) - A_j(\eta^{\ast}_j)\right|\right|_2 + o_p(1)\textrm{ (tri. in \& WLLN.)}\nonumber\\
   &\leq \frac{1}{n}\left|\left|\tilde{\eta}_j - \eta^{\ast}_j\right|\right|_2\sum_{i=1}^nm_A(y_i) + o_p(1)= o_p(1),\nonumber
\end{align}

where $\frac{1}{n}\sum_{i=1}^nm_A(y_i) \stackrel{P}{\longrightarrow} \mathbb{E}\left[m_A(y)\right] < \infty$ by the weak law of large numbers and $\left|\left|\tilde{\eta}_j - \eta^{\ast}_j\right|\right|_2 = O_p(\nicefrac{1}{\sqrt{n}})$ as proved in Proposition \ref{Lemma:HscoreParamConsistency}. 
As a result we have that 
\begin{align}
    \frac{\left|A_l(\tilde{\eta}_l)\right|^{1/2}}{\left|A_k(\tilde{\eta}_k)\right|^{1/2}}\stackrel{P}{\longrightarrow}
n^{\frac{(d_l - d_k)}{2}}\frac{|A^{\ast}_l(\eta^{\ast}_l)|^{1/2}}{|A^{\ast}_k(\eta^{\ast}_k)|^{1/2}}\nonumber.
\end{align}

\noindent\textbf{Step 3: In-sample \hyvarinen score difference}

Combining Steps 1-2, the right hand terms of \eqref{Equ:HBayesFactor_LaplaceApprox} is asymptotically equivalent to
\begin{align}
 \frac{|A^{\ast}_l(\eta_l^{\ast})|^{1/2}}{|A^{\ast}_k(\eta_k^{\ast})|^{1/2}}
\frac{\pi_k(\eta_k^*)}{\pi_l(\eta_l^*)}
\left(\frac{2\pi}{n}\right)^{\frac{d_k-d_l}{2}}
 e^{ \sum_{i=1}^n H\left(y_i; f_l\left(\cdot;\tilde{\eta}_l\right)\right) - \sum_{i=1}^nH\left(y_i; f_k\left(\cdot;\tilde{\eta}_k\right)\right) },
 \label{eq:HBF_asymptotic}
\end{align}
where the first two terms are constants (A1 and A3). The only step left is to characterize the last term,  the in-sample \Hscore ratio. To do this it is necessary to distinguish two cases:
\begin{enumerate}
    \item Model $l$ decreases Fisher's divergence relative to model $k$, $\mathbb{E}_g\left[H(y; f_l(\cdot; \eta^{\ast}_{l})\right] < \mathbb{E}_g\left[H(y; f_k(\cdot; \eta^{\ast}_{k})\right]$. This encompasses the cases where the models are non-nested or nested models where the bigger of the two models is true.
    \item Model $l$ does not decrease Fisher's divergence relative to model $k$, $\mathbb{E}_g\left[H(y; f_l(\cdot; \eta^{\ast}_{l})\right] = \mathbb{E}_g\left[H(y; f_k(\cdot; \eta^{\ast}_{k})\right]$. This is the case when the models are nested and the simplest model $k$ is sufficient for minimising Fisher's divergence.
\end{enumerate}

\noindent\textbf{Case 1)}

First note that the logarithm of \eqref{eq:HBF_asymptotic} is equivalent to 
\begin{align}
\frac{1}{n} c
- {\frac{d_k-d_l}{2n}} \log\left(\frac{2\pi}{n}\right)
+ \sum_{i=1}^n \frac{1}{n} H\left(y_i; f_l\left(\cdot;\tilde{\eta}_l\right)\right) - \sum_{i=1}^n \frac{1}{n} H\left(y_i; f_k\left(\cdot;\tilde{\eta}_k\right)\right)=
\nonumber \\
\sum_{i=1}^n \frac{1}{n} H\left(y_i; f_l\left(\cdot;\tilde{\eta}_l\right)\right) - \sum_{i=1}^n \frac{1}{n} H\left(y_i; f_k\left(\cdot;\tilde{\eta}_k\right)\right) + O(1/n), \nonumber
\end{align}
for a constant $c\in \mathbb{R}$ (A1 and A3) given by the first two terms in \eqref{eq:HBF_asymptotic}. Now 
\begin{align}
&\frac{1}{n}\sum_{i=1}^n H(y_i; f_l(\cdot; \tilde{\eta}_l)) - \frac{1}{n}\sum_{i=1}^n H(y_i; f_k(\cdot; \tilde{\eta}_k)) = \mathbb{E}_g[H(y; f_l(\cdot; \eta^{\ast}_l))] - \mathbb{E}_g[H(y; f_k(\cdot; \eta^{\ast}_k))]\nonumber\\
& + \frac{1}{n}\sum_{i=1}^n H(y_i; f_l(\cdot; \tilde{\eta}_l)) - \mathbb{E}_g[H(y; f_l(\cdot; \eta^{\ast}_l))] - \left\lbrace \frac{1}{n}\sum_{i=1}^n H(y_i; f_k(\cdot; \tilde{\eta}_k)) - \mathbb{E}_g[H(y; f_k(\cdot; \eta^{\ast}_k))]\right\rbrace,\nonumber
\end{align}
and for each of the terms on the bottom line ($j \in \{k, l\}$)
%
%
%
%
%
%
\begin{align}
&\frac{1}{n}\sum_{i=1}^n H(y_i; f_j(\cdot; \tilde{\eta}_j)) - \mathbb{E}_g[H(y; f_j(\cdot; \eta^{\ast}_j))]\nonumber\\
=&\frac{1}{n}\sum_{i=1}^n H(y_i; f_j(\cdot; \tilde{\eta}_j)) - \frac{1}{n}\sum_{i=1}^n H(y_i; f_j(\cdot; \eta^{\ast}_j)) + \frac{1}{n}\sum_{i=1}^n H(y_i; f_j(\cdot; \eta^{\ast}_j))  - \mathbb{E}_g[H(y; f_j(\cdot; \eta^{\ast}_j))].\nonumber
\end{align}
We then use the weak law of large numbers and the Lipschitz condition (A4) to show that 
\begin{align}
&\left|\frac{1}{n}\sum_{i=1}^n H(y_i; f_j(\cdot; \tilde{\eta}_j)) - \mathbb{E}_g[H(y; f_j(\cdot; \eta^{\ast}_j))]\right|\nonumber\\
=&\left|\frac{1}{n}\sum_{i=1}^n H(y_i; f_j(\cdot; \tilde{\eta}_j)) - \frac{1}{n}\sum_{i=1}^n H(y_i; f_j(\cdot; \eta^{\ast}_j)) + \frac{1}{n}\sum_{i=1}^n H(y_i; f_j(\cdot; \eta^{\ast}_j))  - \mathbb{E}_g[H(y; f_j(\cdot; \eta^{\ast}_j))]\right|\nonumber\\
\leq&\left|\frac{1}{n}\sum_{i=1}^n H(y_i; f_j(\cdot; \tilde{\eta}_j)) - \frac{1}{n}\sum_{i=1}^n H(y_i; f_j(\cdot; \eta^{\ast}_j))\right| + \left|\frac{1}{n}\sum_{i=1}^n H(y_i; f_j(\cdot; \eta^{\ast}_j))  - \mathbb{E}_g[H(y; f_j(\cdot; \eta^{\ast}_j))]\right|\textrm{ (tri. in)}\nonumber\\
\leq&\frac{1}{n}\sum_{i=1}^n\left| H(y_i; f_j(\cdot; \tilde{\eta}_j)) - H(y_i; f_j(\cdot; \eta^{\ast}_j))\right| + o_p(1)\textrm{ (tri. in \& WLLN)}\nonumber\\
\leq&\frac{1}{n}\sum_{i=1}^nm_H(y_i)\left|\left| \tilde{\eta}_j - \eta^{\ast}_j\right|\right|_2 + o_p(1) = o_p(1)\nonumber
\end{align}
where $\frac{1}{n}\sum_{i=1}^nm_H(y_i) \stackrel{P}{\longrightarrow} \mathbb{E}\left[m_H(y)\right] < \infty$ by the weak law of large numbers and $\left|\left|\tilde{\eta}_j - \eta^{\ast}_j\right|\right|_2 = O_p(\nicefrac{1}{\sqrt{n}})$ by Proposition \ref{Lemma:HscoreParamConsistency}.  
As a result 
\begin{align}
\frac{1}{n}\log \frac{\mathcal{H}_k(y)}{\mathcal{H}_l(y)} = \mathbb{E}_g[H(y; f_l(\cdot; \eta^{\ast}_l))] - \mathbb{E}_g[H(y; f_k(\cdot; \eta^{\ast}_k))] + o_p(1),\nonumber
\end{align}
where $\mathbb{E}_g[H(y; f_l(\cdot; \eta^{\ast}_l))] - \mathbb{E}_g[H(y; f_k(\cdot; \eta^{\ast}_k))] < 0$ by assumption. That is, the \Hscore{} ratio accrues evidence in favour of model $l$ at an exponential rate. 

\noindent\textbf{Case 2)}

Now $\mathbb{E}_g\left[H(y; f_l(\cdot; \eta^{\ast}_{l})\right] = \mathbb{E}_g\left[H(y; f_k(\cdot; \eta^{\ast}_{k})\right]$ and we want to prove that the $\mathcal{H}$-Bayes factor convergence towards the simpler model $k$. The key is to prove that 
\begin{align}
\sum_{i=1}^n H(y_i; f_l(\cdot; \tilde{\eta}_l)) - \sum_{i=1}^n H(y_i; f_k(\cdot; \tilde{\eta}_k)) &= O_p(1). \nonumber
\end{align}
Note that as model $k$ is nested in model $l$, there exists $\eta^{\ast}_{l\setminus k}$ such that
\begin{equation}
	H(y_i; f_k(\cdot; \eta_k)) = H(y_i; f_l(\cdot; \left(\eta_k, \eta^{\ast}_{l\setminus k}\right)))\quad \forall \eta_k,\nonumber
\end{equation}
where $\eta^{\ast}_{l\setminus k}$ is the value of the hyperparameters of model $l$ that recover model $k$. 
Henceforth, denote $\tilde{\eta}^{\prime}_k = \left(\tilde{\eta}_k, \eta^{\ast}_{l\setminus k}\right)$. We can rewrite our objective of interest as
\begin{align}
\sum_{i=1}^n H(y_i; f_l(\cdot; \tilde{\eta}_l)) - \log \pi_l(\tilde{\eta}^{\prime}_k) - \left\lbrace \sum_{i=1}^n H(y_i; f_k(\cdot; \tilde{\eta}_k)) - \log \pi_l(\tilde{\eta}^{\prime}_k)\right\rbrace,\nonumber
\end{align}
provided $\pi_l(\eta_l) > 0$, $\forall \eta_l$ (A1). This facilitates the Taylor expansion of $\tilde{H}_l(\tilde{\eta}^{\prime}_k) := \sum_{i=1}^n H(y_i; f_l(\cdot; \tilde{\eta}^{\prime}_k)) - \log \pi_l(\tilde{\eta}^{\prime}_k)$ about $\tilde{\eta}_l$, assuming that its first and second derivatives are finite. 
\begin{align}
	 & \tilde{H}_l(\tilde{\eta}^{\prime}_k) = \tilde{H}_l(\tilde{\eta}_l) + (\tilde{\eta}_k - \tilde{\eta}_l)\tilde{\Psi}_l(\tilde{\eta}_l) + \frac{1}{2}(\tilde{\eta}_k - \tilde{\eta}_l)^T\nabla_{\eta_l}^2 A_l(\tilde{\eta}^{\dagger}_l)(\tilde{\eta}_k - \tilde{\eta}_l),\nonumber
\end{align}
for some $\eta^{\dagger}_l = \alpha\tilde{\eta}_k + (1-\alpha)\tilde{\eta}_l$ with $\alpha\in [0, 1]^{d_l}$.

Now by definition we have that $\tilde{\Psi}_l(\tilde{\eta}_l) = 0$ for all $n$ and therefore plugging this back into our target equation we get that 
\begin{align}
&\sum_{i=1}^n H(y_i; f_l(\cdot; \tilde{\eta}_l)) - \log \pi_l(\tilde{\eta}^{\prime}_k) - \left\lbrace\sum_{i=1}^n H(y_i; f_k(\cdot; \tilde{\eta}_k)) - \log \pi_l(\tilde{\eta}_l)\right\rbrace\nonumber\\
=& \log \pi_l(\tilde{\eta}_l) - \log \pi_l(\tilde{\eta}^{\prime}_k) + \frac{1}{2}(\tilde{\eta}_k - \tilde{\eta}_l)^TA_l(\tilde{\eta}^{\dagger}_l)(\tilde{\eta}_k - \tilde{\eta}_l)\nonumber\\
=& \log \pi_l(\tilde{\eta}_l) - \log \pi_l(\tilde{\eta}^{\prime}_k) +  \frac{1}{2}\sqrt{n}(\tilde{\eta}_k - \tilde{\eta}_l)^T\frac{1}{n}A_l(\tilde{\eta}^{\dagger}_l)\sqrt{n}(\tilde{\eta}_k - \tilde{\eta}_l).\nonumber
\end{align}
Now, $||\tilde{\eta}_j - \eta_j^{\ast}||_2 = O_p\left(1/\sqrt{n}\right)$, $j = \{k, l\}$ from Proposition \ref{Lemma:HscoreParamConsistency}, where in this nested case $\eta_l^{\ast} = \eta_k^{\ast} = \eta^{\ast}$, which further provides that $||\eta^{\dagger}_l - \eta^{\ast}||_2 = O_p(\nicefrac{1}{\sqrt{n}})$ also. From Step 1 we know that $\log \pi_j(\tilde{\eta}_j) \stackrel{P}{\longrightarrow} \log \pi_j(\eta^{\ast}_j)$, $j = k, l$, and therefore $\log \pi_l(\tilde{\eta}_l) - \log \pi_l(\tilde{\eta}^{\prime}_k) \stackrel{P}{\longrightarrow} 0$. 
Lastly, Step 2 proved that $\left|\left|\frac{1}{n}A_j(\tilde{\eta}_j) -  A^{\ast}_j(\eta^{\ast}_j)\right|\right|_2 = o_p(1)$
and therefore
\begin{align}
\sum_{i=1}^n H(y_i; f_l(\cdot; \tilde{\eta}_l)) - \sum_{i=1}^n H(y_i; f_k(\cdot; \tilde{\eta}_k)) &=\frac{1}{2}\sqrt{n}(O_p(\nicefrac{1}{\sqrt{n}})(A^{\ast}_l(\eta_l^{\ast}) + o_p(1))\sqrt{n}(O_p(\nicefrac{1}{\sqrt{n}})\nonumber\\
&= O_p(1) + O_p(1)o_p(1) = O_p(1)\nonumber
\end{align}
As a result, that log-\Hscore{} accrues evidence in favour of simpler model $k$ at a $\log(n)$ rate. That is, the \Hscore{} accrues evidence at polynomial rate. 

\end{proof}

\subsubsection{Corollary \ref{Thm:SMICConsistency} (\SMIC inconsistency for nested models)}{\label{App:SMICInconsistency}}

Here we investigate whether the \SMIC model selection criteria provided by \cite{matsuda2019information} can provide the same model/loss selection consistency as Theorem \ref{Thm:HBayesFactorConsistency}, in particular focusing on nested models. We do so under the same conditions as Theorem  \ref{Thm:HBayesFactorConsistency}, where the Laplace approximation of the \Hscore{} were shown to be consistent, but with minor further conditions. Firstly, define 
\begin{align}
    I_j(\eta_j) :&= \sum_{i=1}^n\left(\nabla_{\eta_j} H(y_i; f_j(\cdot; \eta_j))\right)\left(\nabla_{\eta_j} H(y_i; f_j(\cdot; \eta_j))\right)^T\nonumber\\
    I^{\ast}_j(\eta_j) :&= \mathbb{E}_g\left[\left(\nabla_{\eta_j} H(z; f_j(\cdot; \eta_j))\right)\left(\nabla_{\eta_j} H(z; f_j(\cdot; \eta_j))\right)^T\right].\nonumber
\end{align}
We then consider conditions
\begin{itemize}
    \item[A5.] For model $j = l, k$ there exist functions $m_I(\cdot)$ with $\mathbb{E}\left[m_I(\cdot)\right] < \infty$ so that the following Lipschitz condition holds
    \begin{align}
         \left|\left|I^{(1)}_j(\eta_j^{(a)}) - I^{(1)}_j(\eta_j^{(b)})\right|\right|_2 \leq m_I(z)\left|\left|\eta_j^{(a)} - \eta_j^{(b)}\right|\right|_2, \quad \forall{}\eta_j^{(a)}, \eta_j^{(b)},\nonumber
    \end{align}
    where the notation $I^{(1)}_j(\cdot)$ emphasises that only one observation, $z$, is involved.
    \item[A6] Models $k$ and $l$ are such that $||\tilde{\eta}_j - \eta^{\ast}_j||_2 > o_p(1/\sqrt{n})$.
\end{itemize}

A5 provides a standard Lipschitz condition for the matrices $I_j(\eta_j)$, while A6 says that the convergence of $\tilde{\eta}_j$ to $\eta^{\ast}_j$ is not faster than $1/\sqrt{n}$. While Proposition \ref{Lemma:HscoreParamConsistency} proved this convergence of $O_p(\nicefrac{1}{\sqrt{n}})$, it is rare to obtain parametric convergence of $o_p(1/\sqrt{n})$, for simplicity here we simply assume this to be the case. 
Further, note that while we desired maximisation of our \Hscore, the \SMIC is to be minimised.

\begin{corollary}[\SMIC inconsistency for nested models]
Assume Conditions A1-A4, A5 and parametric convergence according to A6. Given a sample $y \sim g$, the difference in the \SMIC proposed by \cite{matsuda2019information} of loss $k$ over loss $l$ has the following asymptotic behaviour as $n\rightarrow\infty$.
\begin{enumerate}[label=(\roman*)]
    \item When $\mathbb{E}_g[H(z; f_l(\cdot; \eta^{\ast}_l))] = \mathbb{E}_g[H(z; f_k(\cdot; \eta^{\ast}_k))]$, with $k$ being the simpler model
    \begin{align}
        \SMIC_k - \SMIC_l > o_p(1)
    \end{align}
    That is that when the models are equally preferable according to Fisher's divergence the \SMIC will not consistently select the simpler model.
\end{enumerate}
\label{Thm:SMICConsistency}
\end{corollary}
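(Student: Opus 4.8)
The plan is to expand $\SMIC_k-\SMIC_l$ into two pieces and show that, in the nested equal-divergence case, it is only $O_p(1)$ — it does not diverge in $n$ — so that, unlike the $\tfrac{d_l-d_k}{2}\log n$ penalty in Theorem~\ref{Thm:HBayesFactorConsistency}(ii), it cannot favour the simpler model with probability tending to one. Recall that the SMIC is of Takeuchi/AIC type: up to a positive multiplicative constant that is immaterial here,
\begin{equation}
\SMIC_j = \textstyle\sum_{i=1}^n H(y_i;f_j(\cdot;\hat\eta_j)) + \mathrm{tr}\bigl\{I_j(\hat\eta_j)A_j(\hat\eta_j)^{-1}\bigr\},\nonumber
\end{equation}
where $\hat\eta_j$ is the unpenalised $\mathcal{H}$-score minimiser and the $\log\pi_j$ contribution to $A_j$ is $O_p(1)$, hence negligible in every limit below. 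Writing $c_n := \mathrm{tr}\{I_k(\hat\eta_k)A_k(\hat\eta_k)^{-1}\} - \mathrm{tr}\{I_l(\hat\eta_l)A_l(\hat\eta_l)^{-1}\}$, we get
\begin{equation}
\SMIC_k-\SMIC_l = \Bigl(\textstyle\sum_{i=1}^n H(y_i;f_k(\cdot;\hat\eta_k)) - \sum_{i=1}^n H(y_i;f_l(\cdot;\hat\eta_l))\Bigr) + c_n .\nonumber
\end{equation}

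First I would control the bias-correction difference $c_n$. By Proposition~\ref{Lemma:HscoreParamConsistency}, $\hat\eta_j\stackrel{P}{\longrightarrow}\eta_j^*$ at rate $1/\sqrt n$; by the weak law of large numbers $\tfrac1n I_j(\eta_j^*)\stackrel{P}{\longrightarrow}I_j^*(\eta_j^*)$ and $\tfrac1n A_j(\eta_j^*)\stackrel{P}{\longrightarrow}A_j^*(\eta_j^*)$; and the Lipschitz conditions A4 (for $A_j$) and A5 (for $I_j$), combined with $\|\hat\eta_j-\eta_j^*\|_2=O_p(1/\sqrt n)$, let me replace $\eta_j^*$ by $\hat\eta_j$ exactly as in Step~2 of the proof of Theorem~\ref{Thm:HBayesFactorConsistency}, giving $\tfrac1n I_j(\hat\eta_j)\stackrel{P}{\longrightarrow}I_j^*(\eta_j^*)$ and $\tfrac1n A_j(\hat\eta_j)\stackrel{P}{\longrightarrow}A_j^*(\eta_j^*)$. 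Since $A_j^*(\eta_j^*)$ is non-singular (A3), the continuous mapping theorem yields $c_n\stackrel{P}{\longrightarrow}c_{kl}:=\mathrm{tr}\{I_k^*(\eta_k^*)A_k^*(\eta_k^*)^{-1}\}-\mathrm{tr}\{I_l^*(\eta_l^*)A_l^*(\eta_l^*)^{-1}\}$, a finite constant; in particular $c_n=O_p(1)$.

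Next I would show that the in-sample $\mathcal{H}$-score difference is $O_p(1)$ and non-negative, recycling the nested-model Taylor argument from Case~2 of Theorem~\ref{Thm:HBayesFactorConsistency}. As $k$ is nested in $l$, there is $\eta_{l\setminus k}^*$ with $H(y;f_k(\cdot;\eta_k))=H(y;f_l(\cdot;(\eta_k,\eta_{l\setminus k}^*)))$ for all $\eta_k$; set $\hat\eta_k'=(\hat\eta_k,\eta_{l\setminus k}^*)$. Expanding $\sum_i H(y_i;f_l(\cdot;\cdot))$ about $\hat\eta_l$ and using the first-order condition $\Psi_l(\hat\eta_l)=0$,
\begin{equation}
\textstyle\sum_{i=1}^n H(y_i;f_k(\cdot;\hat\eta_k)) - \sum_{i=1}^n H(y_i;f_l(\cdot;\hat\eta_l)) = \tfrac12(\hat\eta_k'-\hat\eta_l)^{\top}A_l(\eta_l^\dagger)(\hat\eta_k'-\hat\eta_l)\nonumber
\end{equation}
for some $\eta_l^\dagger$ on the segment between $\hat\eta_k'$ and $\hat\eta_l$. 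By Proposition~\ref{Lemma:HscoreParamConsistency}, $\sqrt n(\hat\eta_k'-\hat\eta_l)=O_p(1)$ (both converge to $\eta_l^*$ at rate $1/\sqrt n$), and $\tfrac1n A_l(\eta_l^\dagger)\stackrel{P}{\longrightarrow}A_l^*(\eta_l^*)$ by the Step~2 argument again; hence the right-hand side equals $\tfrac12[\sqrt n(\hat\eta_k'-\hat\eta_l)]^{\top}A_l^*(\eta_l^*)[\sqrt n(\hat\eta_k'-\hat\eta_l)]+o_p(1)$, which is $O_p(1)$ and, since $A_l^*(\eta_l^*)$ is positive definite (A3), non-negative with probability tending to one. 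Assumption A6 ensures $\hat\eta_j-\eta_j^*$ is genuinely of order $1/\sqrt n$ (not smaller), so this quadratic form is not $o_p(1)$ and converges in distribution to a non-degenerate, $[0,\infty)$-valued limit.

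Combining the two displays, $\SMIC_k-\SMIC_l$ equals a non-negative $O_p(1)$ term plus $c_n\stackrel{P}{\longrightarrow}c_{kl}$, hence is itself $O_p(1)$: it does not diverge to $-\infty$, so $P(\SMIC_k\le\SMIC_l)$ does not tend to one and SMIC fails to select the simpler model consistently — in sharp contrast with the $\HBayes{}$ factor, whose $\tfrac{d_l-d_k}{2}\log n$ term diverges. When $c_{kl}\ge0$ this sharpens to the stated $\SMIC_k-\SMIC_l>o_p(1)$ (SMIC then picks the larger model asymptotically, like AIC); when $c_{kl}<0$ it picks each model with positive probability, but consistency fails either way. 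The substantive ingredient — that the in-sample $\mathcal{H}$-score improvement from the redundant parameters is exactly $O_p(1)$, a score-matching analogue of Wilks' phenomenon — is already supplied by Theorem~\ref{Thm:HBayesFactorConsistency}(ii); what is new here is only the routine verification that the SMIC bias correction converges to a constant (via A3–A5) together with the observation that a bounded penalty cannot offset a bounded in-sample gain. The one mildly delicate point is invoking A6 and the positive-definiteness in A3 so that the limiting quadratic form neither degenerates nor exactly cancels $c_{kl}$.
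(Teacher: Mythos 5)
Your proposal follows the paper's proof essentially step for step: the same decomposition of $\SMIC_k-\SMIC_l$ (from Matsuda's expression) into the in-sample \Hscore{} difference plus the trace corrections, the same A5-Lipschitz/WLLN argument showing the trace terms converge to a finite constant, the same recycling of the nested-model Taylor expansion from Case 2 of Theorem \ref{Thm:HBayesFactorConsistency}, and the same use of A6 to rule out $o_p(1/\sqrt{n})$ cancellation, ending with the same case discussion on the sign of the limiting trace constant. The only cosmetic differences are that you work with the unpenalised minimiser $\hat{\eta}_j$ (the paper uses the penalised $\tilde{\eta}_j$ and notes the unpenalised case is analogous) and your aside that the quadratic form has a non-degenerate distributional limit, which neither the paper's argument nor your own conclusion actually requires.
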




\begin{proof}
To prove this we first establish that the trace terms, $\textrm{tr}\left(\frac{1}{n}I_j(\tilde{\eta}_j)\left(\frac{1}{n}A_j(\tilde{\eta}_j)\right)^{-1}\right)$, tend to constants. Then we show that under the conditions the difference $\left\lbrace\sum_{i=1}^n H(y_i; f_k(\cdot; \tilde{\eta}_k)) - \sum_{i=1}^n H(y_i; f_l(\cdot; \tilde{\eta}_l)) \right\rbrace$ is not $o_p(1)$ and therefore for any $n$ there is non-zero probability of $SMIC_k - SMIC_l < 0$ when the model $k$ was sufficient according to Fisher's divergence. 

From \cite{matsuda2019information}  (34)
\begin{align}
    \SMIC_k - \SMIC_l =& \left\lbrace\sum_{i=1}^n H(y_i; f_k(\cdot; \tilde{\eta}_k)) - \sum_{i=1}^n H(y_i; f_l(\cdot; \tilde{\eta}_l)) \right\rbrace\nonumber \\
    & + \textrm{tr}\left(\frac{1}{n}I_k(\tilde{\eta}_k)\left(\frac{1}{n}A_k(\tilde{\eta}_k)\right)^{-1}\right) - \textrm{tr}\left(\frac{1}{n}I_l(\tilde{\eta}_l)\left(\frac{1}{n}A_l(\tilde{\eta}_l)\right)^{-1}\right).\nonumber
\end{align}
Firstly, \textbf{Step 2} of Theorem \ref{Thm:HBayesFactorConsistency} proved that 
\begin{align}
    ||\frac{1}{n}A_j(\tilde{\eta}_j) - A^{\ast}_j(\eta^{\ast})||_2 = o_p(1), \quad j = l, k.\nonumber
\end{align}
Next, we use similar arguments, and A5 to prove that $||\frac{1}{n}I_j(\tilde{\eta}_j) - I^{\ast}_j(\eta^{\ast})||_2 = o_p(1)$.
\begin{align}
   \left|\left|\frac{1}{n}I_j(\tilde{\eta}_j) -  I^{\ast}_j(\eta^{\ast}_j)\right|\right|_2 &= \left|\left|\frac{1}{n}I_j(\tilde{\eta}_j) - \frac{1}{n}I_j(\eta^{\ast}_j) + \frac{1}{n}I_j(\eta^{\ast}_j) - I^{\ast}_j(\eta^{\ast}_j)\right|\right|_2\nonumber\\
   &\leq \left|\left|\frac{1}{n}I_j(\tilde{\eta}_j) - \frac{1}{n}I_j(\eta^{\ast}_j)\right|\right|_2 + \left|\left|\frac{1}{n}I_j(\eta^{\ast}_j) - I^{\ast}_j(\eta^{\ast}_j)\right|\right|_2\textrm{ (tri. in)}\nonumber\\
    &\leq \frac{1}{n}\sum_{i=1}^n\left|\left|I_j(\tilde{\eta}_j) - I_j(\eta^{\ast}_j)\right|\right|_2 + o_p(1)\textrm{ (tri. in \& WLLN.)}\nonumber\\
   &\leq \frac{1}{n}\left|\left|\tilde{\eta}_j - \eta^{\ast}_j\right|\right|_2\sum_{i=1}^nm_I(y_i) + o_p(1)= o_p(1),\nonumber
\end{align}
%
where $\frac{1}{n}\sum_{i=1}^nm_I(y_i) \stackrel{P}{\longrightarrow} \mathbb{E}\left[m_I(y)\right] < \infty$ by the weak law of large numbers and $\left|\left|\tilde{\eta}_j - \eta^{\ast}_j\right|\right|_2 = O_p(\nicefrac{1}{\sqrt{n}})$ as proved in Proposition \ref{Lemma:HscoreParamConsistency}. Therefore, for constant $C^{\ast} = \textrm{tr}\left(\frac{1}{n}I_k^{\ast}(\eta^{\ast}_k)\left(\frac{1}{n}A_k^{\ast}(\eta^{\ast}_k)\right)^{-1}\right) - \textrm{tr}\left(\frac{1}{n}I_l^{\ast}(\eta^{\ast}_l)\left(\frac{1}{n}A_l^{\ast}(\eta^{\ast}_l)\right)^{-1}\right)$ we have that 
\begin{align}
    \SMIC_k - \SMIC_l &= \left\lbrace\sum_{i=1}^n H(y_i; f_k(\cdot; \tilde{\eta}_k)) - \sum_{i=1}^n H(y_i; f_l(\cdot; \tilde{\eta}_l)) \right\rbrace + C^{\ast} + o_p(1).\nonumber
\end{align}
Lastly, \textbf{Step 3, Case 2)} of Theorem \ref{Thm:HBayesFactorConsistency} proved that in the nested case
\begin{align}
    \left\lbrace\sum_{i=1}^n H(y_i; f_k(\cdot; \tilde{\eta}_k)) - \sum_{i=1}^n H(y_i; f_l(\cdot; \tilde{\eta}_l)) \right\rbrace 
    = & o_p(1) + \frac{1}{2}\sqrt{n}(\tilde{\eta}_k - \tilde{\eta}_l)^T\left(\frac{1}{n}A_l^{\ast}(\eta^{\ast}_l) + o_p(1)\right)\sqrt{n}(\tilde{\eta}_k - \tilde{\eta}_l).\nonumber
\end{align}
As a result unless $||\tilde{\eta}_k - \tilde{\eta}_l||_2 = o_p(1/\sqrt{n})$, which when $\mathbb{E}_g[H(z; f_l(\cdot; \eta^{\ast}_l))] = \mathbb{E}_g[H(z; f_k(\cdot; \eta^{\ast}_k))]$ will not happen by assumption (A6), then 
\begin{align}
    \left\lbrace\sum_{i=1}^n H(y_i; f_k(\cdot; \tilde{\eta}_k)) - \sum_{i=1}^n H(y_i; f_l(\cdot; \tilde{\eta}_l)) \right\rbrace > o_p(1) \nonumber
\end{align}
as a result there exists $\epsilon^{\ast}> 0$ and $M^{\ast} > 0$ such that for all $n$
\begin{align}
    P\left(\left\lbrace\sum_{i=1}^n H(y_i; f_k(\cdot; \tilde{\eta}_k)) - \sum_{i=1}^n H(y_i; f_l(\cdot; \tilde{\eta}_l)) \right\rbrace > M^{\ast} \right) > \epsilon. \nonumber
\end{align}

The consequences of this are as follows, in the nested modelling scenario $\{\sum_{i=1}^n H(y_i; f_k(\cdot; \tilde{\eta}_k)) - \sum_{i=1}^n H(y_i; f_l(\cdot; \tilde{\eta}_l)) \} \geq 0$, therefore if $C^{\ast} > 0$, then model $k$ will never be preferred to model $l$. For nested models however it is often the case that $C^{\ast}$ is less than 0. In this case is is possible that $M^{\ast}$ is such that $M^{\ast} + C^{\ast} > 0$ and therefore there is still non-zero probability that model $l$ is chosen over model $k$ even as $n \rightarrow \infty$. Knowing whether this is the case is impossible without knowledge of the data generating density $g$ and therefore a practitioner can not be sure the SMIC will provide consistent model selection. Therefore, we describe it as inconsistent for this task. 

\end{proof}

Note Corollary \ref{Thm:SMICConsistency} considered the \SMIC evaluated at the penalised \hyvarinen score minimiser $\tilde{\eta}_k$ in order to simplify the extension from Theorem \ref{Thm:HBayesFactorConsistency}, however it is straightforward to see how this also holds for the unpenalised $\hat{\eta}_j$ also. A further consequence of this result is that cases where $A_j^{\ast}(\eta_j^{\ast})$ is not a finite matrix, as discussed in Section \ref{Sec:HscoreModelConsistency}, will also affect the performance of the \SMIC. For the \SMIC there is no natural method to ameliorate this.

\subsubsection{Theorem \ref{Thm:HscoreConsistencyNLPs} ($\mathcal{H}$-Bayes Factor Consistency under non-local priors)}{\label{App:ConsistencyNLPs}}


The conditions of Theorem \ref{Thm:HBayesFactorConsistency} required that $\pi_j(\theta_j) > 0$ for all $\theta_j$ and therefore these conditions will be violated if a non-local prior \citep{johnson2012bayesian, rossell2017nonlocal} is placed on any model parameters. Here, we extend Theorem \ref{Thm:HBayesFactorConsistency} to allow for these, particularly focusing on nested models 
%
In order to do so we consider the following extended conditions defining the form of the nested models and the non-local prior.
\begin{itemize}
    \item[A7] Models $k$ and $l$ are such that $\eta_k = \{\theta_k\}$ with $\theta_k \in \Theta_k = \Theta$ (i.e. no hyperparameters $\kappa_k = \emptyset$) and $\eta_l = \{\theta_l, \kappa_l\}$ with $\kappa_l \neq \emptyset$ and $\theta_l \in \Theta_l = \Theta$. Define $\kappa_{l\setminus k}$ as the hyperparameters of model $l$ that recovers model $k$, 
    \begin{equation}
	    H(y_i; f_k(\cdot; \eta_k)) = H(y_i; f_l(\cdot; \left\lbrace\eta_k, \kappa_{l\setminus k}\right\rbrace)), \quad \textrm{for all } y_i.\nonumber
    \end{equation}
    \item[A8] Models $l$ and $k$ satisfying A7 have priors
    \begin{itemize}  
        \item[i)] The prior $\pi_k(\eta_k) = \pi_k(\theta_k)$ for model $k$ satisfies A1 and the prior for model $l$ is $\pi_l(\eta_l) = \pi_l(\theta_l, \kappa_l) = \pi^{\NLP}_l(\kappa_l)\pi_l(\theta_l | \kappa_l)$ where  $\pi_l(\theta_l | \kappa_l)$ satisfy A1 also. 
        \item[ii)] The non-local prior on $\kappa_l$ is $\pi^{\NLP}_l(\kappa_l) = d_l(\left|\left|\kappa_l - \kappa_{l \setminus k}\right|\right|_2)\pi^{\LP}_l(\kappa_l)$,  
            where $\pi^{\LP}_l(\kappa_l)$ satisfies A1, $d_l(z) > 0$ for all $z \neq 0$, and there is a monotonically decreasing function $c(z): \mathbb{R}\mapsto \mathbb{R}$ such that $\lim_{z\rightarrow 0} c(z) = \infty$, $c(z/a) \leq c(z)/c(a)$, and 
            $\frac{c(z)}{- \log d_l(z)} = O(1)$ as $z\rightarrow 0$. 
    \end{itemize}  
\end{itemize}  
A7 defines the form of the nested models we consider. Both models share the same parameter space $\Theta$, but the simpler model $k$ has no hyperparameters. An example of A7 is where $k$ is the Gaussian model and $l$ Tukey's loss and $\kappa_{l\setminus k} = \infty$. A8 specifies that a non-local prior is placed on the additional hyperparameter of model $l$ which penalises their being too close to the value that recovered the simpler model $k$ at a rate as least as fast as $c(\cdot)$. The parameters shared by both models are given local priors.

We define the non-local prior adjusted Laplace approximate Bayes factor as
\begin{equation}
    \tilde{B}_{kl}^{\mathcal{H}-\NLP} := \nicefrac{\tilde{\mathcal{H}}_k(y)}{\tilde{\mathcal{H}}^{\LP}_l(y)}\times \nicefrac{1}{d_l(||\tilde{\kappa}_l - \kappa_{l\setminus k}||_2)},\label{Equ:LaplaceApproxHScoreNLP}
\end{equation}
where $\tilde{\mathcal{H}}_k(y)$ and $\tilde{\mathcal{H}}^{\LP}_l(y)$ are Laplace approximations to the marginal likelihoods \eqref{Equ:LaplaceApproxHScore} whose respective priors $\pi_k(\eta_k) = \pi_k(\theta_k)$ and $\pi_l(\eta_l) = \pi_l(\theta_l | \kappa_l)\pi_l^{\LP}(\kappa_l)$ satisfy the conditions of A1 according to A8. This allows us to invoke Theorem \ref{Thm:HBayesFactorConsistency} for the asymptotic behaviour of $\nicefrac{\tilde{\mathcal{H}}_k(y)}{\tilde{\mathcal{H}}^{\LP}_l(y)}$. Some justification for this objective is provided in Remark \ref{Rem:JustificationNLPBF} at the end of this section.

The following Theorem proves that \eqref{Equ:LaplaceApproxHScoreNLP} 
maintains consistent model selection of the \HBayes{} factor as proved in Theorem \ref{Thm:HBayesFactorConsistency}, and can improve the rate at which the simpler of two nested models is selected when it is sufficient for minimising Fisher's divergence.

\begin{theorem}[$\mathcal{H}$-Bayes Factor Consistency under Non-Local Priors]
Assume Conditions A1-A4, models satisfying A7 and non-local priors satisfying A8. 
\begin{enumerate}[label=(\roman*)]
    \item When $\mathbb{E}_g[H(z; f_l(\cdot; \eta^{\ast}_l))] - \mathbb{E}_g[H(z; f_k(\cdot; \eta^{\ast}_k))] < 0$ then 
    \begin{align}
        \frac{1}{n}\log \tilde{B}^{\mathcal{H}-\NLP}_{kl} = \mathbb{E}_g[H(z; f_l(\cdot; \eta^{\ast}_l))] - \mathbb{E}_g[H(z; f_k(\cdot; \eta^{\ast}_k))] + o_p(1).
    \end{align}
    That is that when the more complex model $l$ decreases Fisher's divergence relative to model $k$, the non-local \HBayes{} factor accrues evidence in favour of model $l$, $\tilde{B}^{\mathcal{H}-\NLP}_{kl}\rightarrow 0$, at an exponential rate. 
    \item When $\mathbb{E}_g[H(z; f_l(\cdot; \eta^{\ast}_l))] = \mathbb{E}_g[H(z; f_k(\cdot; \eta^{\ast}_k))]$, with $k$ being the simpler model, there exists a constant $M_d$ such that with arbitrarily high probability 
    \begin{align}
    \log \tilde{B}_{kl}^{\mathcal{H}-\NLP} > c(1/\sqrt{n})\left(\frac{d_l - d_k}{2}\frac{\log(n)}{c(1/\sqrt{n})} + o_p(1) + M_d\right).\label{Equ:logBNLP_case2}
    \end{align}
    That is that when the models are equally preferable according to Fisher's divergence then $\tilde{B}^{\mathcal{H}-\NLP}_{kl}\rightarrow \infty$ at a rate depending on the \NLP specification via the function $c(\cdot)$.
\end{enumerate}
\label{Thm:HscoreConsistencyNLPs}
\end{theorem}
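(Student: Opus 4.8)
The plan is to peel $\log\tilde{B}_{kl}^{\mathcal{H}-\NLP}$ into two pieces. Writing $\tilde{B}_{kl}^{\mathcal{H}-\NLP} = \big(\tilde{\mathcal{H}}_k(y)/\tilde{\mathcal{H}}^{\LP}_l(y)\big)\cdot d_l(\|\tilde{\kappa}_l - \kappa_{l\setminus k}\|_2)^{-1}$, the first factor is an ordinary Laplace-approximate $\mathcal{H}$-Bayes factor between two models whose priors ($\pi_k$ and $\pi_l(\theta_l\mid\kappa_l)\pi^{\LP}_l(\kappa_l)$) satisfy A1 by construction (A8), with A2--A4 assumed; hence Theorem \ref{Thm:HBayesFactorConsistency} applies verbatim to it. The whole proof then reduces to controlling the correction term $-\log d_l(\|\tilde{\kappa}_l - \kappa_{l\setminus k}\|_2)$, for which I would use Proposition \ref{Lemma:HscoreParamConsistency} to locate $\tilde{\kappa}_l$.

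For Part (i) I would first show $\kappa_l^* \neq \kappa_{l\setminus k}$. By the nesting in A7, $H(y;f_l(\cdot;(\theta,\kappa_{l\setminus k}))) = H(y;f_k(\cdot;\theta))$ for every $\theta$, so $\mathbb{E}_g[H(y;f_l(\cdot;\eta_l^*))] \le \mathbb{E}_g[H(y;f_k(\cdot;\eta_k^*))]$; were $\kappa_l^* = \kappa_{l\setminus k}$, the reverse inequality would also hold, contradicting the strict gap assumed in (i). Then Proposition \ref{Lemma:HscoreParamConsistency} gives $\tilde{\kappa}_l \stackrel{P}{\longrightarrow}\kappa_l^*$ with $\|\kappa_l^* - \kappa_{l\setminus k}\|_2 > 0$, so by continuity and positivity of $d_l$ away from the origin, $\log d_l(\|\tilde{\kappa}_l - \kappa_{l\setminus k}\|_2) = O_p(1)$; dividing by $n$ this is $o_p(1)$, and adding it to Theorem \ref{Thm:HBayesFactorConsistency}(i) yields the stated expansion.

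For Part (ii), the nesting together with equality of expected $\mathcal{H}$-scores forces $\kappa_l^* = \kappa_{l\setminus k}$, and Theorem \ref{Thm:HBayesFactorConsistency}(ii) gives $\log\big(\tilde{\mathcal{H}}_k(y)/\tilde{\mathcal{H}}^{\LP}_l(y)\big) = \tfrac{d_l-d_k}{2}\log n + O_p(1)$. By Proposition \ref{Lemma:HscoreParamConsistency}, $\|\tilde{\kappa}_l - \kappa_{l\setminus k}\|_2 = O_p(1/\sqrt n)$, so for any target level $1-\delta$ there is $M$ with $P(\|\tilde{\kappa}_l - \kappa_{l\setminus k}\|_2 \le M/\sqrt n) \ge 1-\delta$ for $n$ large. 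On that event I would chain the inequalities $-\log d_l(\|\tilde{\kappa}_l - \kappa_{l\setminus k}\|_2) \ge c(\|\tilde{\kappa}_l - \kappa_{l\setminus k}\|_2)/C \ge c(M/\sqrt n)/C \ge (c(M)/C)\,c(1/\sqrt n)$, using respectively the condition $c(z)/(-\log d_l(z)) = O(1)$ from A8 (equivalently $-\log d_l(z)\ge c(z)/C$ for some $C>0$ and all small $z$), the monotonicity of $c$, and the submultiplicativity $c(z/a)\le c(z)/c(a)$ with $z = M/\sqrt n$ and $a = M$. Setting $M_d := c(M)/C$, combining gives $\log\tilde{B}_{kl}^{\mathcal{H}-\NLP} \ge \tfrac{d_l-d_k}{2}\log n + O_p(1) + M_d\, c(1/\sqrt n)$; factoring out $c(1/\sqrt n)\to\infty$ (so that $O_p(1)/c(1/\sqrt n) = o_p(1)$) produces exactly \eqref{Equ:logBNLP_case2}.

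The main obstacle I anticipate is the bookkeeping in Part (ii): converting the high-probability bound $\|\tilde{\kappa}_l - \kappa_{l\setminus k}\|_2 \le M/\sqrt n$ into a lower bound of the exact form $M_d\,c(1/\sqrt n)$ requires combining the non-local decay condition with the submultiplicativity of $c$ carefully, and keeping the ``arbitrarily high probability'' quantifier and the constant $M_d$ straight. A secondary point that must be checked is that $\tilde{\kappa}_l$ (the mode entering the correction factor) is still located at the $1/\sqrt n$ rate by Proposition \ref{Lemma:HscoreParamConsistency} even when $\kappa_l^* = \kappa_{l\setminus k}$ sits at a boundary of the hyperparameter space, as in the Tukey example after the reparameterisation $\nu_2 = 1/\kappa_2^2$; this is precisely where A3's finiteness of the expected Hessian at $\eta_l^*$ is used.
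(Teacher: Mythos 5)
Your proposal is correct and follows essentially the same route as the paper: decompose $\log \tilde{B}_{kl}^{\mathcal{H}-\NLP}$ into the local-prior Laplace $\mathcal{H}$-Bayes factor (handled by Theorem \ref{Thm:HBayesFactorConsistency}) plus the penalty term $-\log d_l(\|\tilde{\kappa}_l-\kappa_{l\setminus k}\|_2)$, which in case (i) converges to a constant because $\kappa_l^*\neq\kappa_{l\setminus k}$, and in case (ii) is bounded below by a multiple of $c(1/\sqrt{n})$ using Proposition \ref{Lemma:HscoreParamConsistency}, the near-zero bound $-\log d_l(z)\gtrsim c(z)$ from A8, monotonicity, and the submultiplicativity $c(z/a)\le c(z)/c(a)$. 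The only difference is bookkeeping: you argue on the high-probability event $\|\tilde{\kappa}_l-\kappa_{l\setminus k}\|_2\le M/\sqrt{n}$ directly, whereas the paper chains the same inequalities through a law-of-total-probability conditioning step; the substance is identical.
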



\begin{proof}

To prove this we decompose the log non-local \HBayes{} factor in the local \HBayes{} factor, which was dealt with by Theorem \ref{Thm:HBayesFactorConsistency}, and the non-local penalty term. Then, we consider the two cases of the theorem. When the more complicated model is true, we show that the non-local penalty term tends to a positive constant while when the simpler model is sufficient, we show that the properties of the non-local penalty function control the rate of convergence.

Firstly, from \eqref{Equ:LaplaceApproxHScoreNLP}
\begin{align}
    \log \tilde{B}_{kl}^{\mathcal{H}-\NLP} = \log \tilde{B}^{(\mathcal{H})}_{kl} - \log d_l(||\tilde{\kappa}_l - \kappa_{l\setminus k}||_2),\label{Equ:logBNLP}
\end{align}
and by Slutsky's Theorem we can investigate both terms separately. The term $\tilde{B}^{(\mathcal{H})}_{kl}$ was constructed so that it satisfied the conditions of Theorem \ref{Thm:HBayesFactorConsistency} and therefore we can invoke these results here. Further, $\tilde{\eta}_l = \{\tilde{\theta}_l, \tilde{\kappa}_l\}$ where defined as the \hyvarinen score minimisers penalised by the local prior, therefore satisfying the conditions of Proposition \ref{Lemma:HscoreParamConsistency} and ensuring that  $||\tilde{\eta}_j - \eta_j^{\ast}||_2 = O_p(\nicefrac{1}{\sqrt{n}})$. Next, we consider the two cases.

\noindent\textbf{Case 1)}

From Theorem \ref{Thm:HBayesFactorConsistency} we have that 
    \begin{align}
        \frac{1}{n}\log \tilde{B}^{(\mathcal{H})}_{kl} = \mathbb{E}_g[H(z; f_l(\cdot; \eta^{\ast}_l))] - \mathbb{E}_g[H(z; f_k(\cdot; \eta^{\ast}_k))] + o_p(1).\nonumber
    \end{align}
Further, for $\mathbb{E}_g[H(z; f_l(\cdot; \eta^{\ast}_l))] - \mathbb{E}_g[H(z; f_k(\cdot; \eta^{\ast}_k))] < 0$ it must be the case that $\kappa^{\ast}_l \neq \kappa_{l\setminus k}$ (by A7). Therefore, $||\tilde{\kappa}_l - \kappa_{l\setminus k}||_2 \nrightarrow 0$ and by the continuous mapping theorem we have that $|d(||\tilde{\kappa}_l - \kappa_{l\setminus k}||_2) - d_l(||\kappa^{\ast}_l - \kappa_{l\setminus k}||_2)| = O_p(\nicefrac{1}{\sqrt{n}})$ with  $d_l(||\kappa^{\ast}_l - \kappa_{l\setminus k}||_2) > 0$ as a strictly positive constant (A8ii)). As a result $\frac{1}{n}d(||\tilde{\kappa}_l - \kappa_{l\setminus k}||_2) = O(1/n))$ and
\begin{align}
    \frac{1}{n}\log \tilde{B}^{\mathcal{H}-\NLP}_{kl} = \mathbb{E}_g[H(z; f_l(\cdot; \eta^{\ast}_l))] - \mathbb{E}_g[H(z; f_k(\cdot; \eta^{\ast}_k))] + o_p(1).\nonumber
\end{align}

\noindent\textbf{Case 2)}

In this case, $\mathbb{E}_g[H(y; f_l(\cdot; \eta^{\ast}_l))] = \mathbb{E}_g[H(y; f_k(\cdot; \eta^{\ast}_k))] = 0$ which by A7 requires that $\kappa^{\ast}_l = \kappa_{l\setminus k}$. Therefore, $||\tilde{\kappa}_l - \kappa_{l\setminus k}||_2 = O_p(\nicefrac{1}{\sqrt{n}})$, which by the continuous mapping theorem implies $d_l(||\tilde{\kappa}_l - \kappa_{l\setminus k}||_2) \rightarrow0$. Here we establish the asymptotic rate of this convergence.

By \eqref{Equ:logBNLP}, in order to establish \eqref{Equ:logBNLP_case2} we require that
\begin{align}
\forall \epsilon,\quad \exists M_d, N_d > 0 \textrm{ such that } P\left(- \log d_l(||\tilde{\kappa}_l - \kappa_{l\setminus k}||_2) > M_dc(1/\sqrt{n})\right) > 1 - \epsilon,\quad \forall n > N_d  \nonumber  
\end{align}
From A8ii), $\frac{c(z)}{- \log d_l(z)} = O(1)$ as $z\rightarrow 0$ which guarantees the existence of $M_{\delta_{\kappa}}$ and $\delta_{\kappa}>0$ such that 
\begin{equation}
\frac{- \log d_l(z)}{c(z)} > M_{\delta_{\kappa}} \textrm{ for all } z \textrm{ such that } |z| < \delta_{\kappa}. \nonumber
\end{equation}
As a result, by the law of total probability 
\begin{align}
    &P(- \log d_l(||\tilde{\kappa}_l - \kappa_{l\setminus k}||_2) > M_dc(1/\sqrt{n})) \nonumber\\
    =& P(-\log d_l(||\tilde{\kappa}_l - \kappa_{l\setminus k}||_2) > M_dc(1/\sqrt{n}) \mid ||\tilde{\kappa}_l - \kappa_{l\setminus k}||_2\leq \delta_{\kappa})P(||\tilde{\kappa}_l - \kappa_{l\setminus k}||_2 \leq \delta_{\kappa}) \nonumber\\
    &+ P(-\log d_l(||\tilde{\kappa}_l - \kappa_{l\setminus k}||_2) > M_dc(1/\sqrt{n})\mid ||\tilde{\kappa}_l - \kappa_{l\setminus k}||_2 > \delta_{\kappa})P(||\tilde{\kappa}_l - \kappa_{l\setminus k}||_2 > \delta_{\kappa}) \nonumber\\
    \geq&  P(c(||\tilde{\kappa}_l - \kappa_{l\setminus k}||_2) > \frac{M_d}{M_{\delta_{\kappa}}}c(1/\sqrt{n}) \mid ||\tilde{\kappa}_l - \kappa_{l\setminus k}||_2\leq \delta_{\kappa})P(||\tilde{\kappa}_l - \kappa_{l\setminus k}||_2 \leq \delta_{\kappa}). \nonumber
\end{align}
Given that $||\tilde{\kappa}_l - \kappa_{l\setminus k}||_2 = O_p(\nicefrac{1}{\sqrt{n}})$ from Proposition \ref{Lemma:HscoreParamConsistency} and $\delta_{\kappa}$ is a constant, the term $P(||\tilde{\kappa}_l - \kappa_{l\setminus k}||_2 \leq \delta_{\kappa})$ is arbitrarily close to 1 as $n\rightarrow\infty$. Regarding the other terms, by the decreasing monotonicity of $c(z)$ (A8ii)) the conditioning events is such that 
\begin{align}
\left\lbrace||\tilde{\kappa}_l - \kappa_{l\setminus k}||_2 \leq \delta_{\kappa}\right\rbrace = \left\lbrace c\left(||\tilde{\kappa}_l - \kappa_{l\setminus k}||_2\right) \geq c(\delta_{\kappa})\right\rbrace,\nonumber
\end{align}
and therefore for $\frac{M_d}{M_{\delta_{\kappa}}}c(1/\sqrt{n}) \geq c(\delta_{\kappa})$ we have that 
\begin{align}
    P\left(c(||\tilde{\kappa}_l - \kappa_{l\setminus k}||_2) > \frac{M_d}{M_{\delta_{\kappa}}}c(1/\sqrt{n}) \mid c(||\tilde{\kappa}_l - \kappa_{l\setminus k}||_2) > c(\delta_{\kappa})\right) \geq P\left(c(||\tilde{\kappa}_l - \kappa_{l\setminus k}||_2) > \frac{M_d}{M_{\delta_{\kappa}}}c(1/\sqrt{n})\right).\nonumber
\end{align}
Since $||\tilde{\kappa}_l - \kappa_{l\setminus k}||_2 = O_p(\nicefrac{1}{\sqrt{n}})$, by definition for every $\epsilon' > 0$ there exists $M_{\epsilon'} >0$ and $N_{\epsilon'} > 0$ such that 
\begin{align}
P(\sqrt{n}||\tilde{\kappa}_l - \kappa_{l\setminus k}||_2  < M_{\epsilon'}) > 1 - \epsilon',\quad \forall n > N_{\epsilon}. \nonumber
\end{align}
Further, since $c(z)$ is strictly decreasing and $c(z/a) \leq c(z)/c(a)$ by Assumption A8ii), we have that
\begin{align}
P(c\left(||\tilde{\kappa}_l - \kappa_{l\setminus k}||_2 \right) / c(1/\sqrt{n}) > c(M_{\epsilon'}))
\geq P(c\left(||\tilde{\kappa}_l - \kappa_{l\setminus k}||_2 /(1/\sqrt{n})\right) > c(M_{\epsilon'})) > 1 - \epsilon'\quad  \forall n > N_{\epsilon'}.\nonumber
\end{align}
Consider the particular choice $M_d = c(M_{\epsilon'})M_{\delta_{\kappa}}$ so that $\frac{M_d}{M_{\delta_{\kappa}}} = c(M_{\epsilon'})$ giving that
\begin{align}
    P(c\left(||\tilde{\kappa}_l - \kappa_{l\setminus k}||_2 \right) > \frac{M_d}{M_{\delta_{\kappa}}} c(1/\sqrt{n})) > 1 - \epsilon'\quad  \forall n > N_{\epsilon'}\nonumber
\end{align}
Therefore, taking $(1 - \epsilon') > (1 - \epsilon)$
\begin{align}
    P(- \log d_l(||\tilde{\kappa}_l - \kappa_{l\setminus k}||_2) > M_dc(1/\sqrt{n})) > (1 - \epsilon')P(||\tilde{\kappa}_l - \kappa_{l\setminus k}||_2) > (1-\epsilon)\nonumber
\end{align}
for large enough $n$. Further, invoking Theorem \ref{Thm:HBayesFactorConsistency} we have that with arbitrarily high probability there exists constant $M_d$ such that for sufficiently large $n$
\begin{align}
    \log \tilde{B}_{kl}^{\mathcal{H}-NLP} &= \log \tilde{B}^{(\mathcal{H})}_{kl} - \log d_l(||\tilde{\kappa}_l - \kappa_{l\setminus k}||_2)\nonumber\\
    &> \log \tilde{B}^{(\mathcal{H})}_{kl} + M_d \times c(1/\sqrt{n})\nonumber\\
    &= \frac{d_l - d_k}{2}\log(n) + O_p(1) + M_d \times c(1/\sqrt{n})\nonumber\\
    &= c(1/\sqrt{n})\left(\frac{d_l - d_k}{2}\frac{\log(n)}{c(1/\sqrt{n})} + \frac{O_p(1)}{c(1/\sqrt{n})} + M_d\right).\nonumber
\end{align}

\end{proof}

We follow Theorem \ref{Thm:HscoreConsistencyNLPs} with the special case corollary considering the inverse-gamma non-local prior applied to $\nu_2 = \frac{1}{\kappa_2^2}$ where $\kappa_2$ was the cut-off parameter of Tukey's loss (Section \ref{Sec:LPvsNLP}). First, we generalise that prior set-up from Section \ref{Sec:LPvsNLP} as the following condition.
\begin{itemize}
    \item[A9] Under the model setup of A7
    \begin{itemize}
        \item[i)] Let $\kappa_{l} = \{\nu\}\in \mathbb{R}$ be a univariate hyperparameter with $\kappa_{l\setminus k} = \nu_0$ and $\nu \geq \nu_0$. 
        \item[ii)] $\pi_l^{\LP}(\nu) = 2\mbox{I}_{\nu \geq \nu_0}\mathcal{N}(\nu - \nu_0; 0, s_0^2)$, a half-Gaussian distribution with scale $s_0$.
        \item[iii)] $\pi_l^{\NLP}(\nu) = \mathcal{IG}(\nu - \nu_0; a_0, b_0)$, an inverse-gamma distribution with shape $a_0$ and scale $b_0$. 
    \end{itemize}
\end{itemize}
We note that the corresponding local prior to an inverse-gamma non-local prior would normally be a gamma distribution. However, this does not have finite second derivative as $\nu - \nu_0\rightarrow 0$ and would therefore violate the conditions of Theorem \ref{Thm:HBayesFactorConsistency}.

\begin{corollary}[$\mathcal{H}$-Bayes Factor Consistency under an Inverse-Gamma Non-Local Prior]
Assume Conditions A1-A4, models satisfying A7 and non-local priors satisfying A8(i) and A9.
\begin{enumerate}[label=(\roman*)]
    \item When $\mathbb{E}_g[H(z; f_l(\cdot; \eta^{\ast}_l))] - \mathbb{E}_g[H(z; f_k(\cdot; \eta^{\ast}_k))] < 0$ then 
    \begin{align}
        \frac{1}{n}\log \tilde{B}^{\mathcal{H}-\NLP}_{kl} = \mathbb{E}_g[H(z; f_l(\cdot; \eta^{\ast}_l))] - \mathbb{E}_g[H(z; f_k(\cdot; \eta^{\ast}_k))] + o_p(1).
    \end{align}
    That is that when the more complex model $l$ decreases Fisher's divergence relative to model $k$, the non-local \HBayes{} factor accrues evidence in favour of model $l$, $\tilde{B}^{\mathcal{H}-\NLP}_{kl}\rightarrow 0$, at an exponential rate. 
    \item When $\mathbb{E}_g[H(z; f_l(\cdot; \eta^{\ast}_l))] = \mathbb{E}_g[H(z; f_k(\cdot; \eta^{\ast}_k))]$, with $k$ being the simpler model, there exists a constant $M_d$ such that with arbitrarily high probability
    \begin{align}
    \log \tilde{B}_{kl}^{\mathcal{H}-\NLP} > \sqrt{n}(o_p(1) + M_d).
    \end{align}
    That is that when the models are equally preferable according to Fisher's divergence then $\log \tilde{B}^{\mathcal{H}-\NLP}_{kl}\rightarrow \infty$ at a rate at least as fast as $\sqrt{n}$ as $n \rightarrow\infty$.
\end{enumerate}
\label{Thm:HscoreConsistencyNLPsIG}
\end{corollary}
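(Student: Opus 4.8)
The plan is to reduce Corollary \ref{Thm:HscoreConsistencyNLPsIG} to Theorem \ref{Thm:HscoreConsistencyNLPs} by verifying that the inverse-gamma specification of A9 fits into Condition A8, and by exhibiting an explicit auxiliary function $c(\cdot)$ for which the abstract rate \eqref{Equ:logBNLP_case2} collapses to the claimed $\sqrt{n}$ rate. Since A9 makes $\kappa_l=\{\nu\}$ univariate with $\kappa_{l\setminus k}=\nu_0$ and $\nu\ge\nu_0$ (e.g.\ $\nu_0=0$, $\nu=1/\kappa_2^2$ for Tukey's loss), write $z:=\|\kappa_l-\kappa_{l\setminus k}\|_2=\nu-\nu_0\ge 0$. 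Using A9(ii)--(iii), the non-local/local density ratio is
\[
d_l(z)=\frac{\pi_l^{\NLP}(\nu)}{\pi_l^{\LP}(\nu)}
=\frac{b_0^{a_0}/\Gamma(a_0)\, z^{-a_0-1}e^{-b_0/z}}{2(2\pi s_0^2)^{-1/2}e^{-z^2/(2s_0^2)}},
\]
which is strictly positive for $z\neq 0$, so the factorisation $\pi_l^{\NLP}=d_l(\|\cdot\|_2)\,\pi_l^{\LP}$ demanded by A8(ii) holds. Taking logarithms,
\[
-\log d_l(z)=\frac{b_0}{z}+(a_0+1)\log z-\frac{z^2}{2s_0^2}+\text{const},
\]
so $-\log d_l(z)\sim b_0/z$ as $z\to 0^+$.

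Next I would check the remaining requirements of A8 and fix $c(\cdot)$. Condition A8(i) is assumed in the statement; the half-Gaussian $\pi_l^{\LP}$ of A9(ii) is smooth, bounded away from $0$ on compacta, and twice continuously differentiable on $[\nu_0,\infty)$, hence satisfies A1 (this is exactly why A9 uses a half-Gaussian rather than a gamma, whose second derivative diverges at $\nu_0$). For the auxiliary function take $c(z)=1/z$ on $z>0$ (the only regime used): it is monotonically decreasing with $\lim_{z\to 0}c(z)=\infty$; it satisfies $c(z/a)=a/z=c(z)/c(a)$, so in particular $c(z/a)\le c(z)/c(a)$; and by the display above
\[
\frac{c(z)}{-\log d_l(z)}=\frac{1/z}{b_0/z+(a_0+1)\log z+O(1)}=\frac{1}{b_0+O(z\log(1/z))}\longrightarrow \frac{1}{b_0},
\]
which is $O(1)$ as $z\to 0$. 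Hence A8(ii) holds with this $c$, and all hypotheses of Theorem \ref{Thm:HscoreConsistencyNLPs} are in force.

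Finally I would invoke Theorem \ref{Thm:HscoreConsistencyNLPs}. Part (i) is immediate, since part (i) of that theorem gives $\tfrac1n\log\tilde B_{kl}^{\mathcal H-\NLP}=\mathbb E_g[H(z;f_l(\cdot;\eta_l^*))]-\mathbb E_g[H(z;f_k(\cdot;\eta_k^*))]+o_p(1)$ verbatim. For part (ii), substitute $c(1/\sqrt n)=\sqrt n$ into \eqref{Equ:logBNLP_case2}, obtaining, with arbitrarily high probability,
\[
\log\tilde B_{kl}^{\mathcal H-\NLP}>\sqrt n\!\left(\frac{d_l-d_k}{2}\,\frac{\log n}{\sqrt n}+o_p(1)+M_d\right),
\]
and then absorb $\tfrac{d_l-d_k}{2}\tfrac{\log n}{\sqrt n}=o(1)\subseteq o_p(1)$ into the $o_p(1)$ term, yielding $\log\tilde B_{kl}^{\mathcal H-\NLP}>\sqrt n\,(o_p(1)+M_d)$ as claimed. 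I expect the only mildly delicate points to be bookkeeping: confirming the half-Gaussian $\pi_l^{\LP}$ genuinely satisfies A1 at the boundary $\nu_0$ (so that Theorem \ref{Thm:HscoreConsistencyNLPs}'s appeal to Theorem \ref{Thm:HBayesFactorConsistency} for $\tilde{\mathcal H}^{\LP}_l$ is legitimate), and checking that the subleading $(a_0+1)\log z$ term in $-\log d_l(z)$ is dominated by $b_0/z$ strongly enough to justify the $O(1)$ claim in A8(ii); both are routine once $-\log d_l(z)\sim b_0/z$ is written out. There is no new probabilistic content beyond Theorem \ref{Thm:HscoreConsistencyNLPs}: the corollary is a verification-and-substitution exercise.
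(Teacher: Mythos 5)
Your proposal is correct and follows essentially the same route as the paper's proof: compute the penalty ratio $d_l$, observe $-\log d_l(z)\sim b_0/z$ as $z\to 0^+$ so that $c(z)=1/z$ verifies A8(ii), invoke Theorem \ref{Thm:HscoreConsistencyNLPs}, and in part (ii) substitute $c(1/\sqrt{n})=\sqrt{n}$ and absorb the $\tfrac{d_l-d_k}{2}\log(n)/\sqrt{n}$ term into $o_p(1)$. Your extra bookkeeping (explicitly checking monotonicity, $c(z/a)\le c(z)/c(a)$, the $O(1)$ ratio, and that the half-Gaussian local prior meets A1) only makes explicit what the paper leaves implicit.
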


\begin{proof}

To prove this we derive the non-local penalty function associated with the prior setup of A9 and calculate its bounding function $c(\cdot)$ associated with A8ii). This allows us to invoke Theorem \ref{Thm:HscoreConsistencyNLPs} to provide the desired asymptotic behaviour.

Firstly, from A9, $\nu \geq \nu_0$ and therefore $d_l(|\nu - \nu_0|) = d_l(\nu - \nu_0)$. Then, the non-local penalty function $d_l(\nu - \nu_0) = \frac{\pi^{\NLP}_l(\nu)}{\pi^{\LP}_l(\nu)}$ given by the prior specification of A9 is
\begin{align}
    d_l(\nu - \nu_0) &= \frac{\pi_l^{\NLP}(\nu)}{\pi_l^{\LP}(\nu)}\nonumber\\
               &\propto \frac{b_0^{a_0}\sqrt{2\pi}\sqrt{s_0^2}}{2\Gamma(a_0)}\left(\frac{1}{\nu - \nu_0}\right)^{a_0 + 1}\exp\left(\frac{(\nu - \nu_0)^2}{2s_0^2} - \frac{b_0}{\nu - \nu_0}\right),\nonumber
\end{align}
and therefore 
\begin{align}
    - \log d_l(\nu - \nu_0) &= - \log \frac{b_0^{a_0}\sqrt{2\pi}\sqrt{s_0^2}}{2\Gamma(a_0)} + (a_0 + 1)\log \left(\nu - \nu_0\right) - \frac{\left(\nu - \nu_0\right)^2}{2s_0^2} + \frac{b_0}{\left(\nu - \nu_0\right)}.\nonumber
\end{align}
Therefore, there exists $\delta_{\nu}>0$ and $M_{\delta_{\nu}} >0$ such that for all $\nu$ s.t. $|\nu - \nu_0| < \delta_{\nu}$
\begin{align}
    - \log d_l(\nu - \nu_0) > M_{\delta_{\nu}}\frac{1}{\nu - \nu_0}, \nonumber
\end{align}
and as a result, $c(z) = \frac{1}{z}$. Now we consider the two cases.

\noindent\textbf{Case 1)} 

Follows directly from the proof of Theorem \ref{Thm:HscoreConsistencyNLPs}, noting that $d_l(z) > 0$ for all $0 < z < \infty$.

\noindent\textbf{Case 2)}

Using Theorem \ref{Thm:HscoreConsistencyNLPs} with $c(z) = \frac{1}{z}$ provides that with arbitrarily high probability there exists $M_d$ such that for sufficiently large $n$
\begin{align}
    \log \tilde{B}_{kl}^{\mathcal{H}-\NLP} &> c(1/\sqrt{n})\left(\frac{d_l - d_k}{2}\frac{\log(n)}{c(1/\sqrt{n})} + \frac{O_p(1)}{c(1/\sqrt{n})} + M_d\right)\nonumber\\
                                          &= \sqrt{n}\left(\frac{d_l - d_k}{2}\frac{\log(n)}{\sqrt{n}} + \frac{O_p(1)}{\sqrt{n}} + M_d\right)\nonumber\\
                                          &= \sqrt{n}(o_p(1) + M_d),\nonumber
\end{align}
as required.

\end{proof}

\begin{remark}[Justification for \eqref{Equ:LaplaceApproxHScoreNLP}]{\label{Rem:JustificationNLPBF}}
Lastly, we provide some justification for the objective function \eqref{Equ:LaplaceApproxHScoreNLP} to conduct model selection using non-local priors. 
Using the arguments of \cite{rossell2017nonlocal}, the integrated \Hscore{} \eqref{Equ:HScore} associated with the non-local prior specification A8 can be rewritten as     
\begin{align}
    \mathcal{H}^{\NLP}_l(y) &= \int\pi^{\NLP}_l(\kappa_l)\pi_l(\theta_l|\kappa_l) \exp \left\{ -\sum_{i=1}^nH(y_i; f_l(\cdot;\theta_l, \kappa_l)) \right\} d\theta_ld\kappa_l\nonumber\\
    &= \mathcal{H}^{\LP}_l(y)\int d_l(||\kappa_l - \kappa_{l\setminus k}||_2) \frac{\pi^{\LP}_l(\kappa_l)\pi_l(\theta_l| \kappa_l) \exp \left\{ -\sum_{i=1}^nH(y_i; f_l(\cdot;\theta_l, \kappa_l)) \right\} }{\mathcal{H}^{\LP}_l(y)}d\theta_ld\kappa_l\nonumber\\
    &= \mathcal{H}^{\LP}_l(y)\int d_l(||\kappa_l - \kappa_{l\setminus k}||_2) \pi^{\LP}_l(\theta_l, \kappa_l|y)d\theta_ld\kappa_l\nonumber
\end{align}
where
\begin{align}
    \pi^{\LP}_l(\theta_l, \kappa_l|y) &\propto \pi_l^{\LP}(\kappa_l)\pi_l(\theta_l| \kappa_l) \exp \left\{ -\sum_{i=1}^nH(y_i; f_l(\cdot;\theta_j, \kappa_j)) \right\},\nonumber\\
    \mathcal{H}^{\LP}_l(y) &= \int\pi_l^{\LP}(\kappa_l)\pi_l(\theta_l| \kappa_l) \exp \left\{ -\sum_{i=1}^nH(y_i; f_l(\cdot;\theta_l, \kappa_l)) \right\} d\theta_ld\kappa_l,\nonumber
\end{align}
As a result we can write the \HBayes{} factor of nested models satisfying A7 and non-local priors according to A8 
as 
\begin{align}
   B_{kl}^{\mathcal{H}-\NLP} = \frac{\mathcal{H}_k(y)}{\mathcal{H}^{\NLP}_l(y)} = \frac{\mathcal{H}_k(y)}{\mathcal{H}^{\LP}_l(y)}\frac{1}{\int d_l(||\kappa_l - \kappa_{l\setminus k}||_2) \pi^{\LP}_l(\theta_l, \kappa_l|y)d\theta_ld\kappa_l}.\nonumber
\end{align}
Now, the terms involved in the construction of $\mathcal{H}_k(y)$ and $\mathcal{H}^{\LP}_l(y)$ satisfy the conditions of Theorem \ref{Thm:HBayesFactorConsistency} (by A8i)), and therefore the behaviour of the Laplace approximations \eqref{Equ:LaplaceApproxHScore} to the \HBayes{} factor,  $\nicefrac{\tilde{\mathcal{H}}_k(y)}{\tilde{\mathcal{H}}^{\LP}_l(y)}$, is detailed in Theorem \ref{Thm:HBayesFactorConsistency}. We further consider a somewhat looser, but convenient, approximation of $\int d_l(||\kappa_l - \kappa_{l\setminus k}||_2) \pi^{\LP}_l(\theta_l, \kappa_l|y)d\theta_ld\kappa_l$ as $d_l(||\tilde{\kappa}_l - \kappa_{l\setminus k}||_2)$ where $\tilde{\kappa}_l \in \tilde{\eta}_l$ are the parameters maximising the \Hposterior{} $\pi^{\LP}_l(\theta_l, \kappa_l|y)$. As a result we consider \eqref{Equ:LaplaceApproxHScoreNLP} to approximate the \HBayes{} factor under non-local priors 

We highlight that calculating the approximation $\tilde{\mathcal{H}}^{\NLP}_l(y)$ uses the maximum a posteriori estimates and evaluates the observed Hessian matrices according to the \Hposterior{} under the local prior, before evaluating the prior density as it appears in the Laplace approximation \eqref{Equ:LaplaceApproxHScore} at the non-local prior, as shown in  \eqref{Equ:LaplaceApproxHScoreNLP}. This is convenient as it allows for a direct extension of Theorem \ref{Thm:HBayesFactorConsistency}. 
\end{remark}

\subsection{Derivation of \hyvarinen scores}
\label{App:derivation_hscore}

Section \ref{App:HscoreGaussianTukeys} provides the \hyvarinen score for the Gaussian log-likelihood and for Tukey's loss.

\subsubsection{The \hyvarinen score of the Gaussian Model and Tukey's Loss}{\label{App:HscoreGaussianTukeys}}

Here we provide the derivations of the \hyvarinen score applied to the Gaussian model \eqref{Equ:HscoreGaussian} and Tukey's loss \eqref{Equ:HscoreTukeysLoss}. Firstly, for the Gaussian model $f_1(y_i; x_i, \theta_1, \kappa_1) = \mathcal{N}(y_i;x_i^T \beta,\sigma^2)$ we have that 
\begin{align}
\log f_1(y_i; x_i, \theta_1, \kappa_1) &= -\frac{1}{2}\log (2\pi) - \frac{1}{2}\log \sigma^2 -\frac{(y_i - x_i^T \beta)^2}{2\sigma^2}\nonumber\\
\frac{\partial }{\partial y_i}\log f_1(y_i;x_i, \theta_1, \kappa_1) &= -\frac{(y_i - x_i^T \beta)}{\sigma^2}\nonumber\\
\frac{\partial^2 }{\partial y_i^2}\log f_1(y_i;x_i, \theta_1, \kappa_1) &= -\frac{1}{\sigma^2},\nonumber
\end{align}
which results in \hyvarinen score 
\begin{align}
    H_{1}(y_i; f(\cdot;x_i, \theta_1)) &= -\frac{2}{\sigma^2} + \frac{(y_i - x_i^T \beta)^2}{\sigma^4}.\nonumber
\end{align}
For Tukey's loss, $\log f_2(y_i; x_i, \theta_2, \kappa_2) = -\ell_2(y_i; x_i, \theta_2, \kappa_2)$ given in  \eqref{Equ:TukeysLoss} and
\begin{align}
    \frac{\partial}{\partial y_i}\ell_2(y_i; x_i, \theta_2, \kappa_2) &= \begin{cases}
    \frac{(y_i - x_i^T \beta)}{\sigma^2}-\frac{2(y_i - x_i^T \beta)^3}{\kappa_2^2\sigma^4}+\frac{(y_i - x_i^T \beta)^5}{\kappa_2^4\sigma^6} &\textrm{ if } |y_i - x_i^T \beta|\leq \kappa_2\sigma\\
    0 &\textrm{ otherwise}
    \end{cases}\nonumber\\
    \frac{\partial^2}{\partial y_i^2}\ell_2(y_i; x_i, \theta_2, \kappa_2) &= \begin{cases}
    \frac{1}{\sigma^2} - \frac{6(y_i - x_i^T \beta)^2}{\kappa_2^2\sigma^4} + \frac{5(y_i - x_i^T \beta)^4}{\kappa_2^4\sigma^6} &\textrm{ if } |y_i - x_i^T \beta|\leq \kappa_2\sigma\\
    0 &\textrm{ otherwise}
    \end{cases},\nonumber
\end{align}
which results in \hyvarinen score 
\begin{align}
    H_{2}(y_i; f(\cdot;x_i, \theta_2, \kappa_2)) &=  \mbox{I}(|y_i - x_i^T \beta|\leq \kappa_2\sigma)\left\lbrace \left(\frac{(y_i - x_i^T \beta)}{\sigma^2}-\frac{2(y_i - x_i^T \beta)^3}{\kappa_2^2\sigma^4}+\frac{(y_i - x_i^T \beta)^5}{\kappa_2^4\sigma^6}\right)^2 \right.\nonumber\\
    &\qquad\qquad\qquad\left. - 2\left(\frac{1}{\sigma^2}-\frac{6(y_i - x_i^T \beta)^2}{\kappa_2^2\sigma^4}+\frac{5(y_i - x_i^T \beta)^4}{\kappa_2^4\sigma^6}\right)\right\rbrace.\nonumber
\end{align}

\subsubsection{The \hyvarinen score of kernel density estimation}{\label{App:HscoreKDE}}


We derive the \hyvarinen score for the kernel density estimation examples implemented in Section \ref{Sec:KDE}. Combining the kernel density estimate improper density \eqref{Equ:KDEPseudoDensity} with the power $w$  \eqref{Equ:KDEpower_w} results in in-sample
\begin{align}
\hat{g}_{w, h}(y_i) =& \left(\frac{1}{n\sqrt{2\pi} h}\sum_{j=1}^n\exp\left(-\frac{(y_i - y_j)^2}{2h^2}\right)\right)^{w},\nonumber
\end{align}
or equivalently defining the kernel density loss function to be the log-density
\begin{align}
\log \hat{g}_{w, h}(y_i) =& w\log\left(\frac{1}{n\sqrt{2\pi} h}\sum_{j=1}^n\exp\left(-\frac{(y_i - y_j)^2}{2h^2}\right)\right)\nonumber\\
=& w\log\left(\frac{1}{n\sqrt{2\pi} h} + \frac{1}{n\sqrt{2\pi} h}\sum_{j\neq i}^n\exp\left(-\frac{(y_i - y_j)^2}{2h^2}\right)\right).\nonumber
\end{align}
The \hyvarinen score is then composed of the first an second derivatives of the log-density, given by 
\begin{align}
\frac{\partial}{\partial y_i} \log \hat{g}_{w,h}(y_i) =& w\frac{-\sum_{j\neq i}^n\frac{(y_i - y_j)}{n\sqrt{2\pi}h^{3}}\exp\left(-\frac{(y_i - y_j)^2}{2h^2}\right)}{\left(\frac{1}{n\sqrt{2\pi} h} + \frac{1}{n\sqrt{2\pi h}}\sum_{j\neq i}^n\exp\left(-\frac{(y_i - y_j)^2}{2h^2}\right)\right)}\nonumber\\
%
\frac{\partial^2}{\partial y_i^2} \log \hat{g}_{w, h}(y_i) =& w\frac{\sum_{j\neq i}^n\frac{(y_i - y_j)^2}{n\sqrt{2\pi}h^{5}}\exp\left(-\frac{(y_i - y_j)^2}{2h^2}\right) - \sum_{j\neq i}^n\frac{1}{n\sqrt{2\pi}h^{3}}\exp\left(-\frac{(y_i - y_j)^2}{2h^2}\right)}{\left(\frac{1}{n\sqrt{2\pi} h} + \frac{1}{n\sqrt{2\pi} h}\sum_{j\neq i}^n\exp\left(-\frac{(y_i - y_j)^2}{2h^2}\right)\right)}\nonumber\\
&\quad - w\frac{\left(\sum_{j\neq i}^n\frac{(y_i - y_j)}{n\sqrt{2\pi}h^{3}}\exp\left(-\frac{(y_i - y_j)^2}{2h^2}\right)\right)^2}{\left(\frac{1}{n\sqrt{2\pi} h} + \frac{1}{n\sqrt{2\pi} h}\sum_{j\neq i}^n\exp\left(-\frac{(y_i - y_j)^2}{2h^2}\right)\right)^2}.\nonumber
\end{align}
As a result, the \hyvarinen score of the kernel density estimate is 
\begin{align}
H(y_i, \log \hat{g}_{w,h}(\cdot))
&= 2w\frac{\sum_{j\neq i}^n\frac{(y_i - y_j)^2}{h^{4}}\exp\left(-\frac{(y_i - y_j)^2}{2h^2}\right) - \sum_{j\neq i}^n\frac{1}{h^{2}}\exp\left(-\frac{(y_i - y_j)^2}{2h^2}\right)}{\left(1 + \sum_{j\neq i}^n\exp\left(-\frac{(y_i - y_j)^2}{2h^2}\right)\right)}\nonumber\\
&\quad -(2w - w^2)\frac{\left(\sum_{j\neq i}^n\frac{(y_i - y_j)}{h^{2}}\exp\left(-\frac{(y_i - y_j)^2}{2h^2}\right)\right)^2}{\left(1 + \sum_{j\neq i}^n\exp\left(-\frac{(y_i - y_j)^2}{2h^2}\right)\right)^2}.
\end{align}

\color{black}
\subsection{The Tsallis score}{\label{App:Tsallis}}

The robust regression example introduced in Section \ref{Sec:TukeysLossIntro} focused on Tukey's loss as this provided a traditional alternative to the Gaussian likelihood that had one parameter, was nested in the Gaussian loss and corresponded to a improper model. A loss function satisfying similar properties is the Tsallis score \citep{tsallis1988possible} (often referred to as the density power divergence \citep{basu1998robust} or $\beta$-divergence) which applied to proper probability model $f(y_i; \theta)$ is given by
\begin{align}
   \ell_{\beta}(y_i, \theta) := - \frac{1}{\beta} f(y_i; \theta)^{\beta} + \frac{1}{\beta + 1}\int  f(z; \theta)^{\beta+1}dz.
\end{align}
As $\beta \rightarrow 0$ then $\ell_{\beta}(y_i, \theta) \rightarrow -\log f(y_i, \theta)$ and therefore $f(y_i; \theta)$ is nested within $\exp\left\{-\ell_{\beta}(y_i, \theta)\right\}$. Generally, $\exp\left\{-\ell_{\beta}(y_i, \theta)\right\}$ is also improper for $\beta > 0$. For $y_i$ such that $f(y_i;\theta)\rightarrow 0$, it can be seen that $\exp\left\{-\ell_{\beta}(y_i, \theta)\right\} \rightarrow \exp\left\{-\frac{1}{\beta + 1}\int  f(z; \theta)^{\beta+1}dz\right\}$ which is a constant in $y_i$ and strictly greater than 0.

Figure \ref{Fig:TsallisTukeysLoss} plots $\ell_{\beta}(y_i, \theta)$ where $f(y_i; \theta) = \mathcal{N}(y_i; \mu, \sigma^2)$ and its corresponding pseudo density. Comparing this with the Tukey's loss in Figure \ref{Fig:TukeysLoss}, we can see that the Tsallis score behaves very similarly. In fact since the first version of this paper appeared, \cite{yonekura2021adaptation} provided methodology analogous to that developed in this paper to set $\beta$ in a data-driven manner.

\begin{figure}
\begin{center}
\includegraphics[trim= {0.0cm 0.0cm 0.0cm 0.0cm}, clip,  
width=0.49\columnwidth]{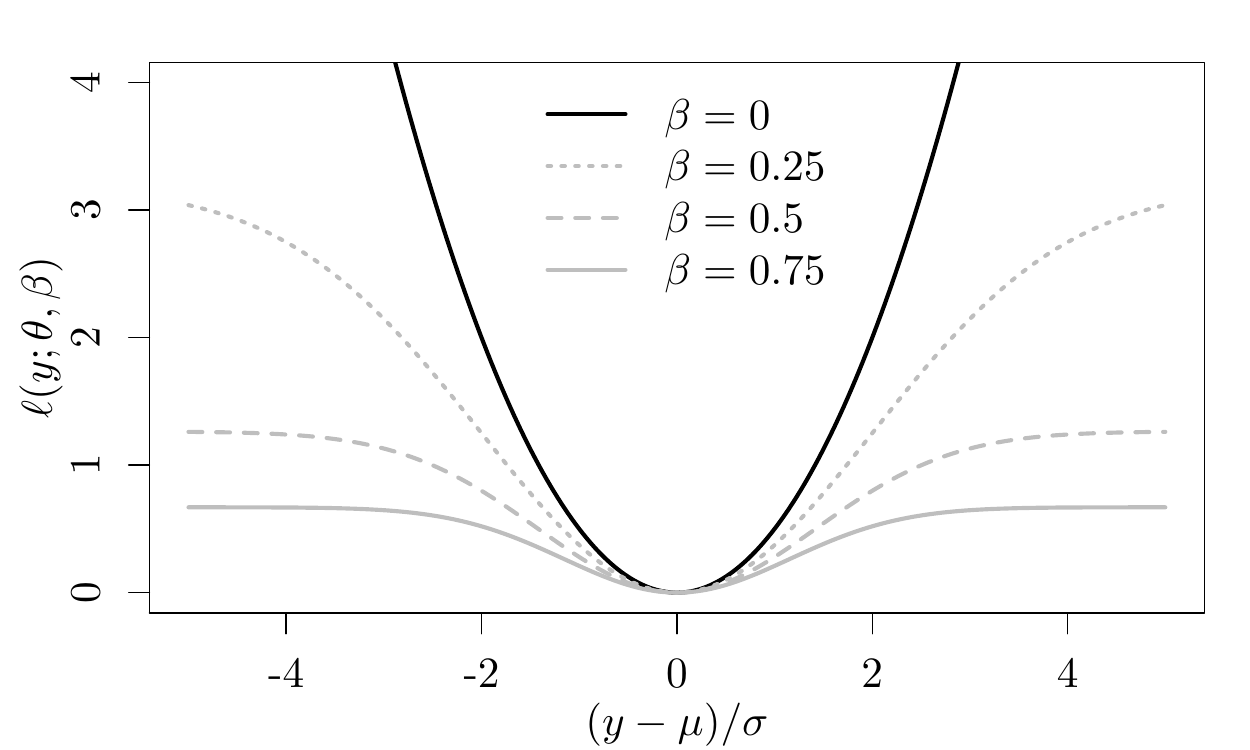}
\includegraphics[trim= {0.0cm 0.0cm 0.0cm 0.0cm}, clip,  
width=0.49\columnwidth]{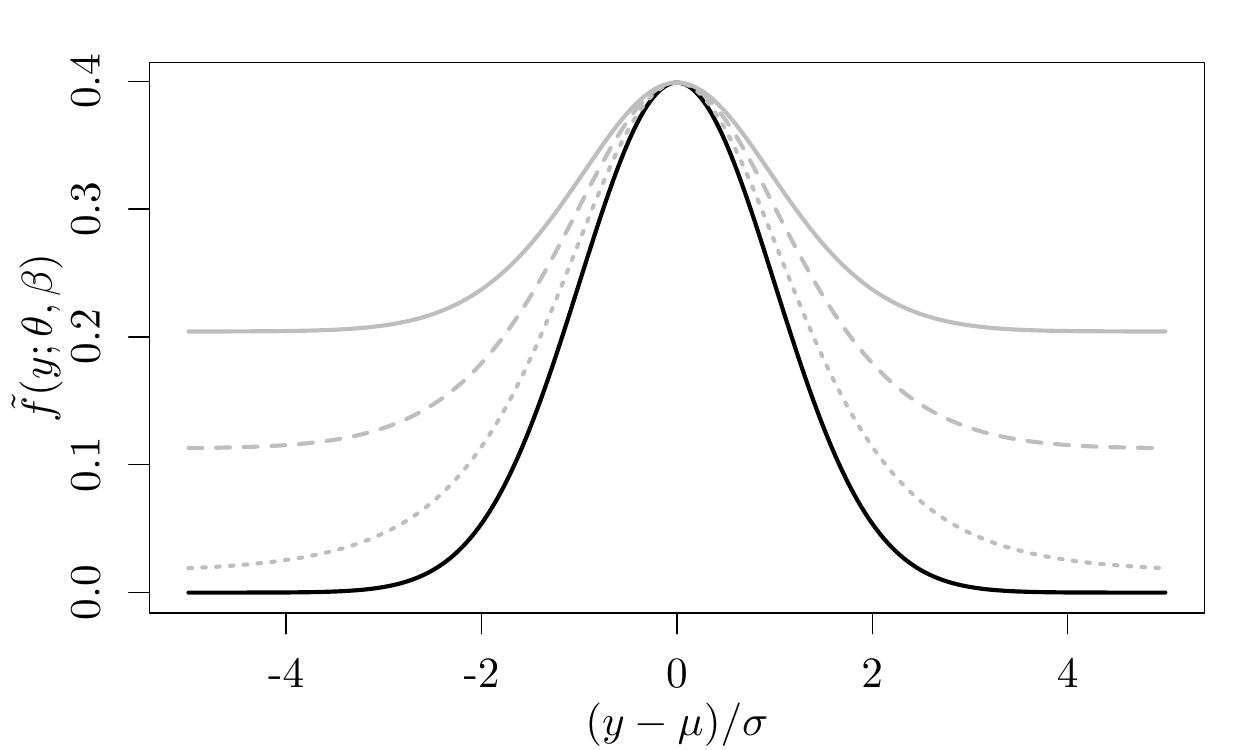}
\caption{Squared-error loss and Tsallis score applied to a Gaussian model (\textbf{left}) and corresponding (improper) densities (\textbf{right}). The improper densities for Tsallis score are scaled to match the mode of the Gaussian density.}
\label{Fig:TsallisTukeysLoss}
\end{center}
\end{figure}

\color{black}
\subsection{Implementation Details}
\label{App:implementarion_details}

This sections provides a thorough description of the experiments conducted in Section \ref{Sec:RobustRegression} and \ref{Sec:KDE} of the main paper.
Section \ref{App:computation} outlines our implementation of the \Hscore{} in Stan.
Section \ref{App:TukeysBreakdown} discusses a restriction placed on Tukey's loss parameters to prevent degenerate parameter estimates based on including a single observation in the fit. 
Given that Tukey's loss is not twice continuously differentiable, Section \ref{App:AbsApprox} discusses a differentiable approximation to Tukey's loss that allows using second order sampling and optimisation methods, as well as satisfying the conditions of Theorem \ref{Thm:HBayesFactorConsistency}.
Finally, Section \ref{App:NLPSpecification} discusses how to set a non-local prior on Tukey's loss cut-off hyper-parameter $\kappa_2$.

Code to reproduce the examples of Sections \ref{Sec:RobustRegression} and \ref{Sec:KDE} can be found in the repository 
\url{https://github.com/jejewson/HyvarinenImproperModels}.

\subsubsection{Computation}
\label{App:computation}

To calculate the Laplace approximations of the \Hscore{} we used the second order LBFGS optimisation routine implemented in the probabilistic programming language Stan \citep{carpenter2016stan}. 
Further, \stan's implementation of the No U-Turn Sampler \citep{hoffman2014no} can further be used to sample from the General Bayesian posteriors. 
The code associated with all of our experiments is available as part of the supplementary material. 

\subsubsection{The breakdown of Tukey's loss}{\label{App:TukeysBreakdown}}


As discussed in Section \ref{Sec:RobustRegression}, we define a restriction of the parameter space of Tukey's loss $(\beta,\sigma,\kappa_2)$ that prevents degenerate solutions where $\kappa_2$ is set to exclude all observations but one. Intuitively, the idea is to avoid $\kappa_2$ being too small. More specifically, we motive the restriction via a related notion of the breakdown point.
The breakdown of an estimator is the number of observations that can be arbitrarily perturbed without causing arbitrary changes to the estimator. The squared loss can be seen to have breakdown 0, while clearly for Tukey's-loss any observation outside the threshold can be perturbed arbitrarily without changing the estimator. 
There exists abundant literature arguing that
the breakdown can be no greater than $1/2$, which intuitively means that an estimator should depend on more than half of the data. 
See \cite{rousseeuw1984robust} and \cite{rousseeuw2005robust} for discussion and thorough investigation of Tukey's loss breakdown in the context of S-estimation. 
The authors showed that for Tukey's breakdown to be less than $1/2$ the condition in \eqref{Equ:BreakdownConstraint} must hold. 
Therefore, we restrict the parameter space to $(\beta, \sigma, \kappa_2)$ satisfying \eqref{Equ:BreakdownConstraint}.
We note that \eqref{Equ:BreakdownConstraint} applies to S-estimation rather than to our \Hscore s however, we found that \eqref{Equ:BreakdownConstraint} provides a simple rule that showed good performance in our examples, hence we recommend it as a default.

\subsubsection{A differentiable approximation to the indicator}{\label{App:AbsApprox}}



In order to satisfy the conditions of Theorem \ref{Thm:HBayesFactorConsistency} as well as facilitate second order optimisation (or sampling) methods, it is necessary to have a log-density that is twice continuously differentiable. While Tukey's loss itself is defined to ensure it that this is the case, the \hyvarinen score applied to Tukey's loss, which is a function of its first and second derivatives is not. This is because the indicator $\mbox{I}(|y_i - x_i^T \beta|\leq \kappa_2\sigma)$ 
is not differentiable at $|y_i - x_i^T \beta| = \kappa_2\sigma$. We ensured the required conditions by the approximating
\begin{align}
    |x| &\approx \sqrt{x^2 + \frac{1}{k_1}} \textrm{ for large } k_1\nonumber\\
    \mbox{I}(x\geq 0) &\approx \sigma_{k_2}(x) = \frac{1}{1 + \exp( - k_2x)} \textrm{ for large } k_2\nonumber
\end{align}
in \eqref{Equ:HscoreTukeysLoss}. 
For our results we set $k_1 = k_2 = 100$.

\subsubsection{Non-Local Prior Specification for Tukey's cut-off hyper-parameter}{\label{App:NLPSpecification}}

To place a non-local prior on $\kappa_2$, it is convenient to consider the reparametrisation  $\nu_2 = \frac{1}{\kappa_2^2}$ (as discussed in Section \ref{Sec:HscoreModelConsistency}, this is also needed to ensure that the Hessian of Tukey's loss is finite as $\kappa_2\rightarrow\infty$ ($\nu_2\rightarrow 0$)). 
Given that the Gaussian model is recovered at $\nu_2 = 0$, any prior setting positive density at $\nu_2=0$ satisfies the definition of being a local prior. In our examples we considered a half-Gaussian prior $\pi_2^{\LP}(\nu_2) \propto \mathcal{N}(\nu_2; 0, s_0^2) I(\nu_2\geq 0)$ where $I()$ is the indicator function and $s_0 = 1$. 
A possible alternative that also assigns non-zero density at $\nu_2=0$ is to consider a gamma prior with shape parameter $<1$, but the associated log-prior does not have a finite Hessian as $\nu_2\rightarrow 0$ and violates Condition A3 of Theorem \ref{Thm:HBayesFactorConsistency}. 

As a non-local prior we considered an inverse-gamma distribution $\pi_2^{\NLP}(\nu_2) = \mathcal{IG}(\nu_2; a_0, b_0)$, which has zero density at $\kappa_2=0$. The values of the prior hyper-parameters $(a_0, b_0)$ can have a considerable affect on the finite sample performance of the \Hscore{} model selection, and therefore setting these in a principled manner is important. Fortunately, $\kappa_2$ has a natural interpretation that allows restricting attention to a range of values that would be reasonable in most applications. 
Recall from Section \ref{Sec:TukeysLossIntro} that the role of $\kappa_2$ is to exclude observations that are more than $\kappa_2$ `standard deviations' $\sigma$ away from the mean $\mu$.
First, if the data are Gaussian then there is $>0.99$ probability of an observation being within $3\sigma$'s of the mean. We thus set $(a_0,b_0)$ such that there is high prior probability that $\kappa_2 \leq 3$.
Second, given that the Gaussian has a substantial amount of central data within 1$\sigma$ from the mean, we set high prior probability that $\kappa_2 > 1$, to ensure that this central component is captured. 
Given these considerations, we set default parameter values $(a_0, b_0) = (4.35, 1.56)$, which ensures a prior probability $P(\kappa_2 \in (1, 3)) = 0.95$. Figure \ref{Fig:Priors} plots these local and non-local priors for $\nu_2$ and the corresponding priors under the original parametrisation, $\kappa_2$. 



\begin{figure}[h!]
\centering
\includegraphics[trim= {0.0cm 0.0cm 0.0cm 0.0cm}, clip,width=0.49\columnwidth]{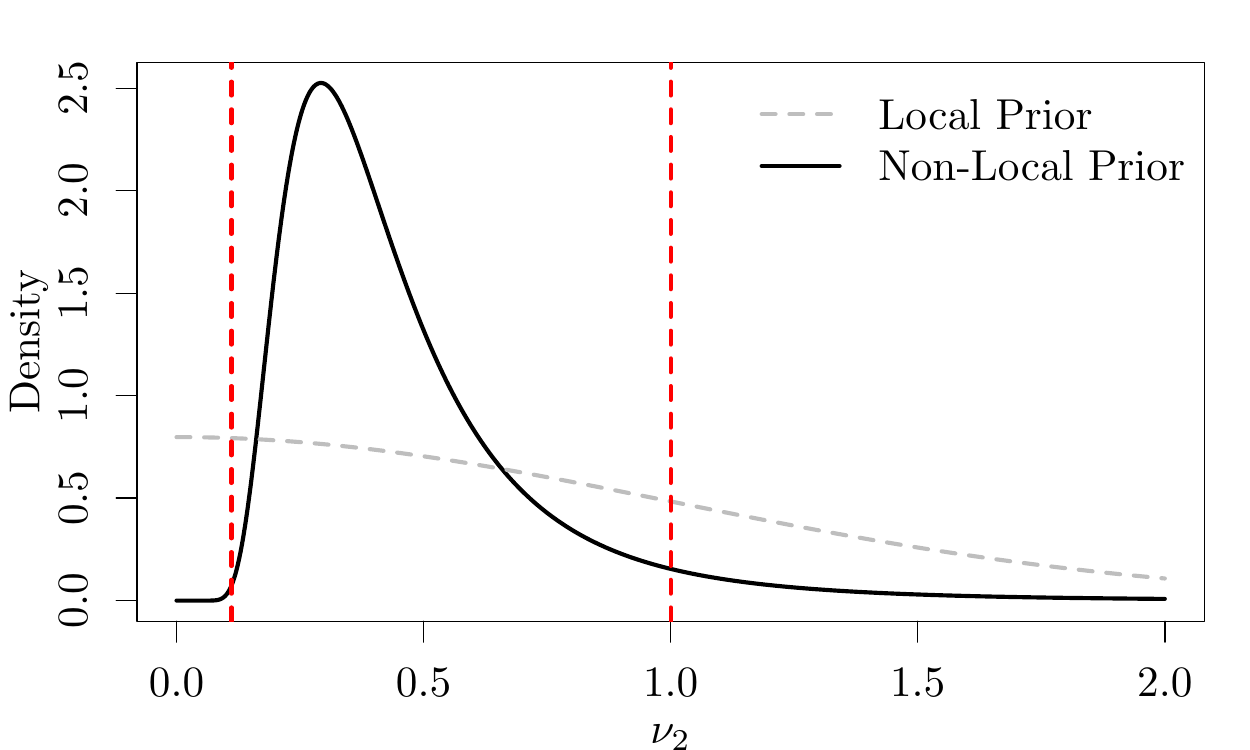}
\includegraphics[trim= {0.0cm 0.0cm 0.0cm 0.0cm}, clip,width=0.49\columnwidth]{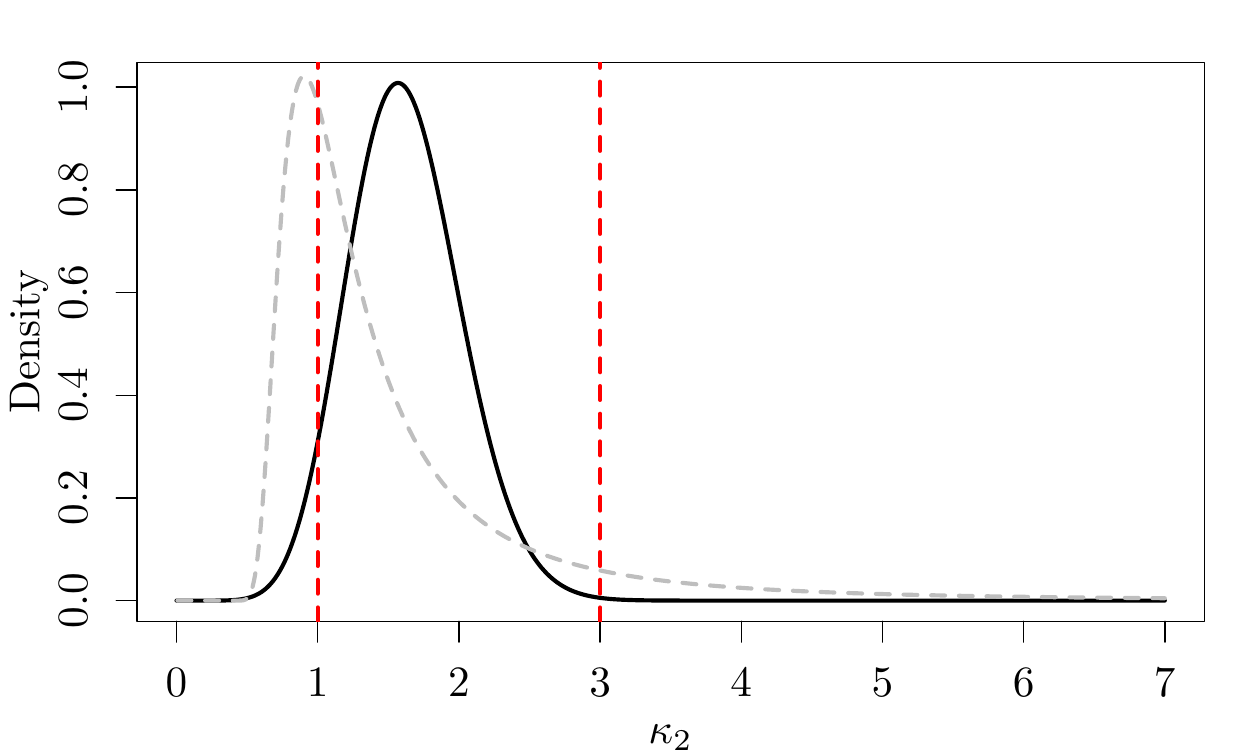}
\caption{Left: Density of the local half-standard Gaussian and non-local $ \mathcal{IG}(\nu_2; 4.35, 1.56)$ priors for $\nu_2$. Right: The corresponding prior density for $\kappa_2 = \nicefrac{1}{\sqrt{\nu_2}}$. The non-local prior sets $P(\kappa_2 \in (1,3))=0.95$.}
\label{Fig:Priors}
\end{figure}

\color{black}
\subsubsection{Asymptotic MSE}\label{App:asympt_MSE}


In Figure \ref{Fig:MarginalHscore_epsContamination} we compare the \Hscore{} for fixed $\kappa_2$ with an asymptotic estimate of the root mean squared error for the location parameter $\hat{\beta}(\kappa_2)$ of $\hat{\theta}_2(\kappa_2)$ minimising Tukey's loss $\ell_2(y; \theta_2, \kappa_2)$. 
\cite{warwick2005choosing} propose a form of mean squared error in situations where the data generating distribution $g(y)$ has a $\epsilon$-contamination form 
\begin{equation}
    g(y) = (1 - \epsilon)f(y; \theta_0) + \epsilon h(y),\nonumber
\end{equation}
like that considered in Section \ref{ssec:robustness_efficiency_tradeoff}. Here $(1-\epsilon)\times 100\%$ of the data comes from the parametric model with generating parameter $\theta_0$ and $\epsilon \times 100\%$ comes from contaminating distribution $h(\cdot)$. 

In this scenario, an asymptotic approximation to the MSE of $\hat{\theta}_2(\kappa_2)$ estimated using $n$ observations from $g(y)$ is given by 
\begin{align}
    \mathbb{E}_g\left[\left(\hat{\theta}_2(\kappa_2) - \theta_0\right)^T\left(\hat{\theta}_2(\kappa_2) - \theta_0\right)\right] = \left(\tilde{\theta}^*_{\kappa_2} - \theta_0\right)^T\left(\tilde{\theta}^*_{\kappa_2} - \theta_0\right) + \frac{1}{n}\textrm{tr}\left\{J^{-1}_{\kappa_2}(\tilde{\theta}^*_{\kappa_2})K_{\kappa_2}(\tilde{\theta}^*_{\kappa_2})J^{-1}_{\kappa_2}(\tilde{\theta}^*_{\kappa_2})\right\},\label{Equ:AsympMSE}
\end{align}
where $\textrm{tr}\left\{\cdot\right\}$ is the trace operation and
\begin{align}
    \tilde{\theta}^*_{\kappa_2} &:= \argmin_{\theta_2}\mathbb{E}_g\left[\ell_2(y; \theta_2, \kappa_2)\right]\nonumber\\
    K_{\kappa_2}(\theta_2) &:= \mathbb{E}_g\left[\nabla_{\theta_2}\ell_2(y_1; \theta_2, \kappa_2)\nabla_{\theta_2}\ell_2(y_1; \theta_2, \kappa_2)^T\right]\nonumber\\
    J_{\kappa_2}(\theta_2) &:= \mathbb{E}_g\left[\nabla^2_{\theta_2}\ell_2(y_1; \theta_2, \kappa_2)\right]\nonumber
\end{align}
In \eqref{Equ:AsympMSE}, the first term is the summed squared bias of the minimiser of Tukey's loss in expectation relative to the data generating parameter $\theta_0$. The second term is the asymptotic sandwich covariance matrix of $\hat{\theta}_2(\kappa_2)$ based on $n$ observations. 

In Figure \ref{Fig:MarginalHscore_epsContamination} we consider only the MSE of the location component $\beta$ of $\theta_2$ and therefore $\beta_0 = 0$ is the mean of the uncontaminated data. Although \eqref{Equ:AsympMSE} has a very different form to the \Hscore{} in \eqref{Equ:HscoreTukeysLoss} of the main paper, both are optimised for similar parameter values.

\subsubsection{Finite Gaussian Mixture Model Implementation}{\label{App:FiniteGaussianMixDetails}}

Section \ref{Sec:KDEGaussianMixtures} implemented Gaussian Mixture Models using the marginal likelihood to select the number of components. Here we provide details of the implementation used. We estimated Gaussian mixture models of the form 
\begin{align}
    f(y; m, \mu, \sigma, J) = \sum_{j=1}^J m_j N(y; \mu_j, \sigma_j)\nonumber
\end{align}
using Normal-Inverse-Gamma-Dirichlet priors 
\begin{align}
    (m_1, \ldots, m_J) &\sim \textup{Dir}(\alpha_1, \ldots, \alpha_j)\nonumber\\
    \sigma_j^2 &\sim \mathcal{IG}\left(\frac{\nu_0}{2}, \frac{S_0}{2}\right), \quad j = 1,\ldots, J\nonumber\\
    \mu_j | \sigma_j &\sim \mathcal{N}(0, \sqrt{\kappa}\sigma_j), \quad j = 1,\ldots, J\nonumber
\end{align}
with $\alpha_j = 1$ $j = 1,\ldots, J$, $\nu_0 = 5$, $S_0 = 0.2$ and $\kappa = 5.68$ following the recommendations of \cite{fuquene2019choosing}. We first estimated the number of components as 
\begin{align}
    \hat{J} = \argmax_{1\leq J \leq10} \int f(y; \mu, \sigma, J)\pi(\mu |\sigma^2)\pi(\sigma^2)\pi(m){}d\mu{} d\sigma^2 dm\nonumber
\end{align}
where the integral is estimated using the posterior probability that one cluster is empty under each possible $J$ \citep{fuquene2019choosing} and is implemented in the \textit{mombf} \citep{rossell2021package} in \R. Then given $\hat{J}$, we found MAP estimates for $\{m_j\}_{j=1}^{\hat{J}}$, $\{\mu_j\}_{j=1}^{\hat{J}}$ and $\{\sigma_j\}_{j=1}^{\hat{J}}$.

\subsubsection{The prior on the KDE bandwidth}{\label{App:KDEPrior}


When using the \Hscore{} to conduct Bayesian Kernel Density Estimation in Section \ref{Sec:KDE} we considered prior distributions of the form $\pi(h^2, w) = \mathcal{IG}(h^2; a_0, b_0)\textup{Exp}(w; \lambda_0)$. The inverse-gamma shape parameter was set to $a_0 = 2$ resulting in $\pi(h^2)$ with infinite variance. Prior parameters $b_0$ and $\lambda_0$ were elicited in order to provide regularisation towards the values that perform well when the data is Gaussian. The elicitation procedure is outlined below. 

Suppose that data were truly Gaussian and consider simulating 
$z \in \mathbb{R}^n$ from a standard Gaussian, i.e. $z_i \sim N(0, 1)$ for $i=1,\ldots,n$. This is consistent with the recommendations of the \textit{dirichletprocess} package to  standardised the data. 
Then the mean integrated squared error (MISE) of \KDE $\hat{g}_{h, w}$ with bandwidth $h$ and tempering parameter $w$ from the generating distribution $N(0, 1)$ is
\begin{equation}
    \textup{MISE}[\hat{g}_{h, w}] := \mathbb{E}_{z}\left[ \int [\hat{g}_{h, w}(y^{\prime})  - N(y^{\prime}; 0, 1) ]^2 dy^{\prime}\right].\nonumber
\end{equation}
where the expectation is over $(z_1, \ldots, z_n) \overset{iid}{\sim} N(0, 1)$.
We then elicit prior parameters $b_0$ and $\lambda_0$, given $a_0 = 2$ such that 
\begin{equation}
    b_0, \lambda_0 := \argmin_{b, \lambda}\mathbb{E}_{\pi(h^2, w)} \textup{MISE}[\hat{g}_{h, w}].\nonumber
\end{equation}
This results in a default $\pi(h^2,w)$ centered on $(h,w)$ values such that, if the data were truly Gaussian, the estimated density would be accurate, subject to the prior being minimally informative (e.g. $h^2$ has infinite prior variance).

In the case without tempering when $w = 1$, we consider only prior $\pi(h^2) = \mathcal{IG}\left(h^2; 2, b_0\right)$ and $b_0 = 0.061$ was elicited. In the case where we also estimated tempering parameter $w$, we elicited $b_0 = 0.024$ and $\lambda_0 = 0.725$.

\color{black}

\subsection{Supplementary results}
\label{App:suppl_results}

\subsubsection{Variable selection for \TGFB and \DLD dataset}{\label{App:SelectedVariables}}

To illustrate our methodology for robust regression in Section \ref{Sec:RobustRegression}, we selected between the Gaussian model and Tukey's loss when regressing the \TGFB and \DLD gene expressions on a subset of the variables available in the full data sets. The procedures for which variables were selected was outlined in Sections \ref{TGFBanalysis} and \ref{DLDanalysis}. To ensure that our results are reproducible, below we indicate the selected covariates and the supplementary material contains code for these variable pre-screening steps. We also provide supplementary figures providing a visual inspection of the Gaussian fit to the residuals of the \TGFB and \DLD examples.

\paragraph{\TGFB:}

For the \TGFB analysis we focused on 7 of the 10172 genes available in the data set that appear in the ‘\TGFB 1 pathway’ according
to the KEGGREST package in R (Tenenbaum, 2016). These were the VIT, PDE4B, ATP8B1, MAGEA11, PDE6C, PDE9A and SEPTIN4 genes.

\paragraph{\DLD:}

For the \DLD analysis we selected the 15 genes with the 5 highest loadings in the first 3 principal components of the original 57 predictors. This procedure selected the following genes C15orf52, BRAT1, CYP26C1, SLC35B4, GRLF1, RXRA, RAB3GAP2, NOTCH2NL, SDC4, TTC22, PTCH2, ECH1, CSF2RA, TP53AIP1, and RRP1B. 

\paragraph{Supplementary figures:}

We provide Figure \ref{Fig:TGFB_DLD_Gaussian_results} to visually inspect the Gaussian approximation to the fitted standardised residuals of the \TGFB and \DLD datasets. The left panels overlay the fitted Gaussian models to the histogram of the residuals, while the right are \textit{Q-Q} Normal residual plots. The top panels show that the \TGFB data is well-approximated by a Gaussian model, whereas the bottom panels show that the \DLD data set exhibits heavier tails. These graphical inspections support the \Hscore{} findings from Section \ref{Sec:RobustRegression} and Figure \ref{Fig:TGFB_DLD_Tukeys_results}.

\begin{figure}[h!]
\begin{center}
    \includegraphics[trim= {0.0cm 0.0cm 0.0cm 0.0cm}, clip,  
    width=0.49\columnwidth]{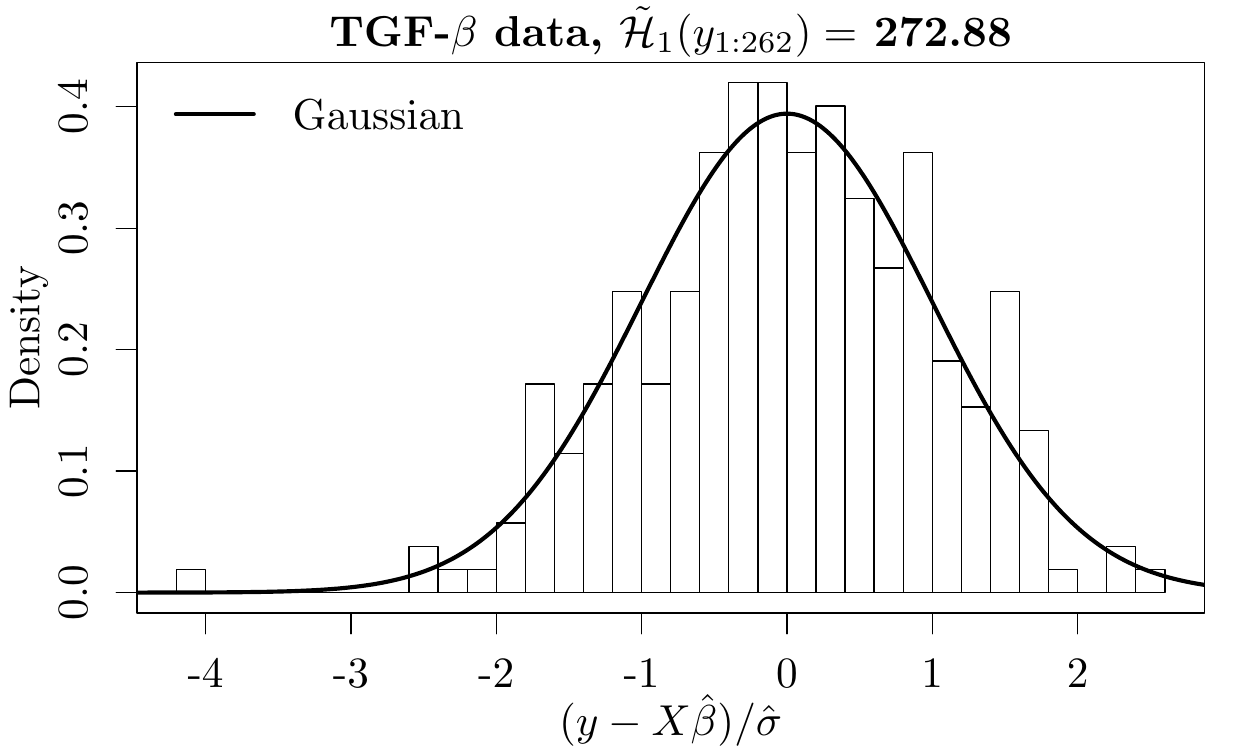}
    \includegraphics[trim= {0.0cm 0.0cm 0.0cm 0.0cm}, clip,  
    width=0.49\columnwidth]{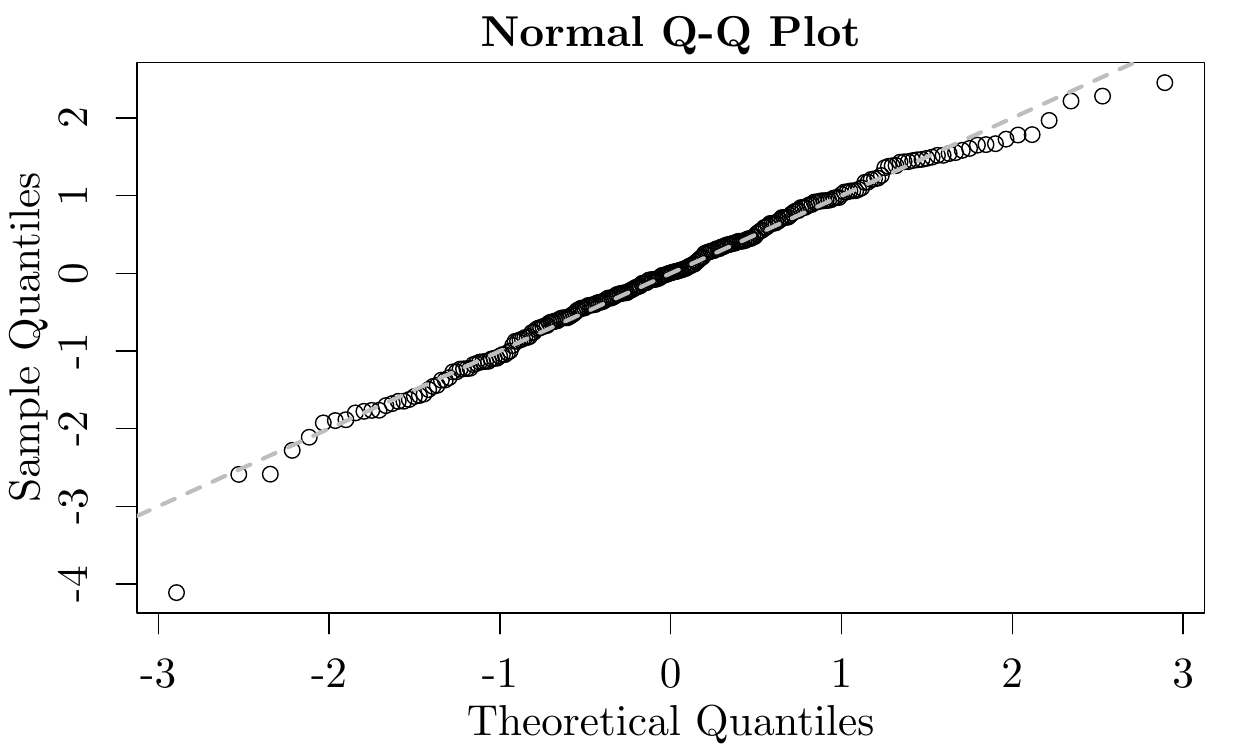}\\
    \includegraphics[trim= {0.0cm 0.0cm 0.0cm 0.0cm}, clip,  
    width=0.49\columnwidth]{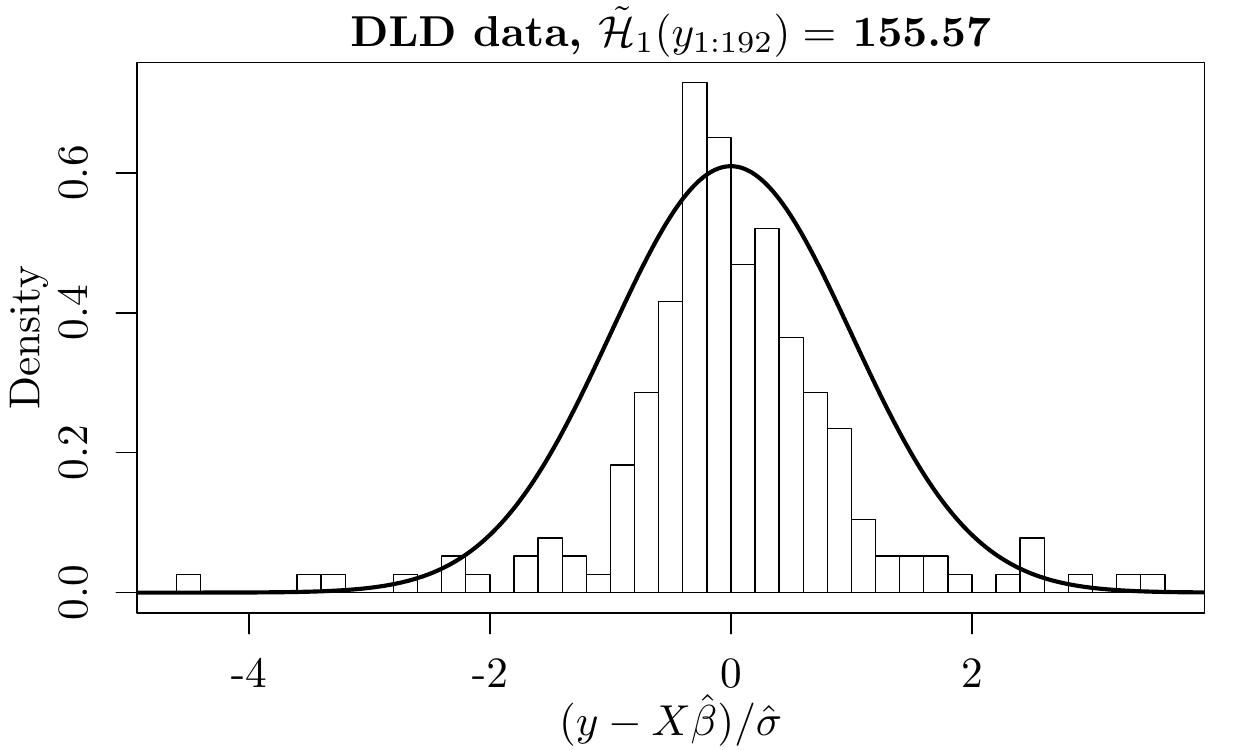}
    \includegraphics[trim= {0.0cm 0.0cm 0.0cm 0.0cm}, clip,  
    width=0.49\columnwidth]{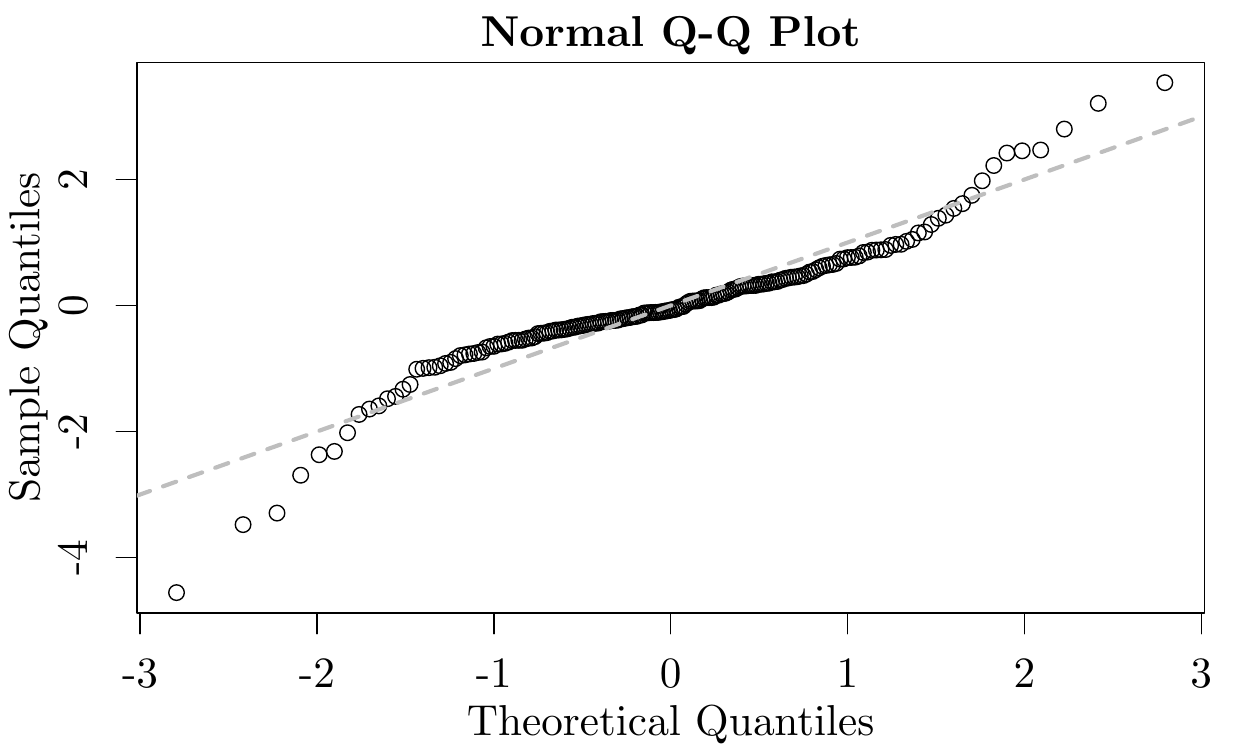}\\
    \caption{\textbf{Top:} \TGFB data, where the integrated \Hscore{} selected the Gaussian model. \textbf{Bottom:} \DLD data, where the integrated \Hscore{} selected Tukey's loss. \textbf{Left:} Gaussian approximations to the residuals. \textbf{Right:} \textit{Q-Q} normal plot of the fitted residuals according to the Gaussian model.}
    \label{Fig:TGFB_DLD_Gaussian_results}
\end{center}
\end{figure}

\subsubsection{Further Gaussian Mixture Examples}{\label{App:KDEGaussianMictures}}

\color{black}

We provide the results of the \Hscore{} bandwidth estimation for \KDE's fixing $w = 1$, in addition to the results presented in Section \ref{Sec:KDEGaussianMixtures}.  
\color{black}
We also provide two further Gaussian mixture data sets from \cite{marron1992exact}
%
\begin{itemize}[leftmargin=*]
    \item\textit{Asymmetric}: 2-components,  $\mu_1 = 0$, $\mu_2 = 1.5$, $\sigma_1 = 1$, $\sigma_2 = 1/3$, $m_1 = 0.75$ and $m_2 = 0.25$.
    \item\textit{Kurtotic}: 2-components,  $\mu_1 = \mu_2 = 0$, $\sigma_1 = 1$ $\sigma_2 = 0.1$, $m_1 = 2/3$ and $m_2 = 1/3$.
\end{itemize}
Figure \ref{Fig:KDEEstimation2} plots the approximation to the underlying $g(y)$ of the methods under consideration in Section \ref{Sec:KDEGaussianMixtures} as the well as the \Hscore{} \KDE with $w = 1$, while Table \ref{Tab:KDEEstimation2} estimates the Fisher's divergence of the estimates (posterior mean where appropriate) to $g(y)$. \new{For the datasets considered in Section \ref{Sec:KDEGaussianMixtures} we see that generally learning the tempering parameter $w$ improved the density estimation relative to $w = 1$, however in most cases $w = 1$ remains competitive with the other methods considered}. For both the extra examples the finite mixture model provides the best estimate according the Fisher's divergence. For the \textit{Kurtotic} data set, the \Hposterior{} \KDE's perform comparably to the \DPMM and better than standard the \KDE implementations. However, for the \textit{Asymmetric} data set the \Hposterior{} \KDE's under-smooth the data and perform considerably worse than the other methods.

\begin{table}[ht]
\centering
\caption{Fisher's divergence between the density estimates and the data-generating Gaussian mixtures in four simulation scenarios. The best performing method in each is highlighted in bold face.}
\begin{tabular}{rrrrrrr}
  \hline
 & \KDE & \KDE (ucv) & \DPMM & Finite Mixture Model & $\mathcal{H}$-KDE ($w = 1$) & $\mathcal{H}$-KDE ($w \neq 1$)\\ 
   \hline
   \textit{Bimodal} & 1.03 & 0.37 & 0.13 & 0.10 & 0.26 & \textbf{0.09} \\ 
   \textit{Claw} & 13.77 & 6.09 & 15.25 & \textbf{2.17} & 3.37 & 2.51 \\ 
   \textit{Trimodal} & 0.26 & \textbf{0.12} & 0.32 & 0.33 & 0.18 & 0.18 \\ 
   \textit{Skewed} & 21.34 & 16.15 & 17.61 & \textbf{6.12} & 10.52 & 9.51 \\ 
   \textit{Asymmetric} & 0.25 & 3.76 & 0.25 & \textbf{0.07} & 48.84 & 2.33 \\ 
   \textit{Kurtotic} & 3.14 & 7.68 & 2.32 & \textbf{0.47} & 2.56 & 2.44 \\ 
   \hline
 \end{tabular}
\label{Tab:KDEEstimation2}
\end{table}

\begin{figure}[H]
\begin{center}
\includegraphics[trim= {0.0cm 0.0cm 0.0cm 0.0cm}, clip,  
width=0.49\columnwidth]{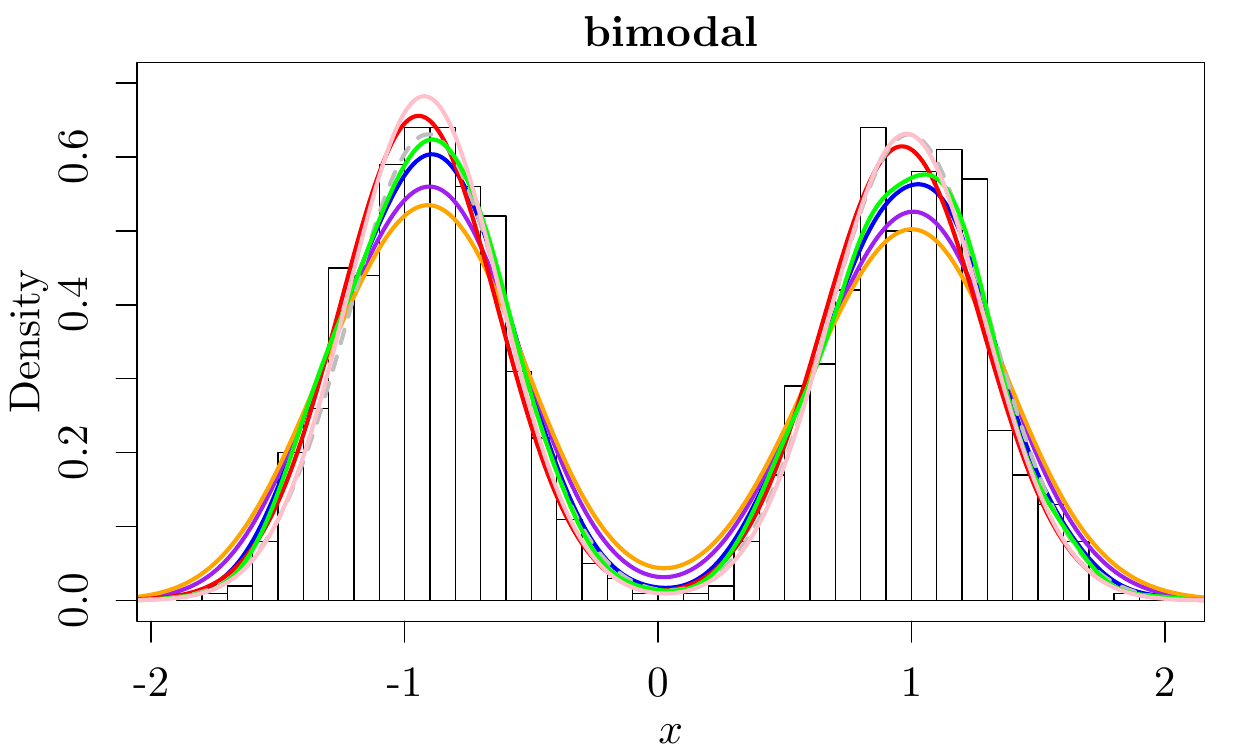}
\includegraphics[trim= {0.0cm 0.0cm 0.0cm 0.0cm}, clip,  
width=0.49\columnwidth]{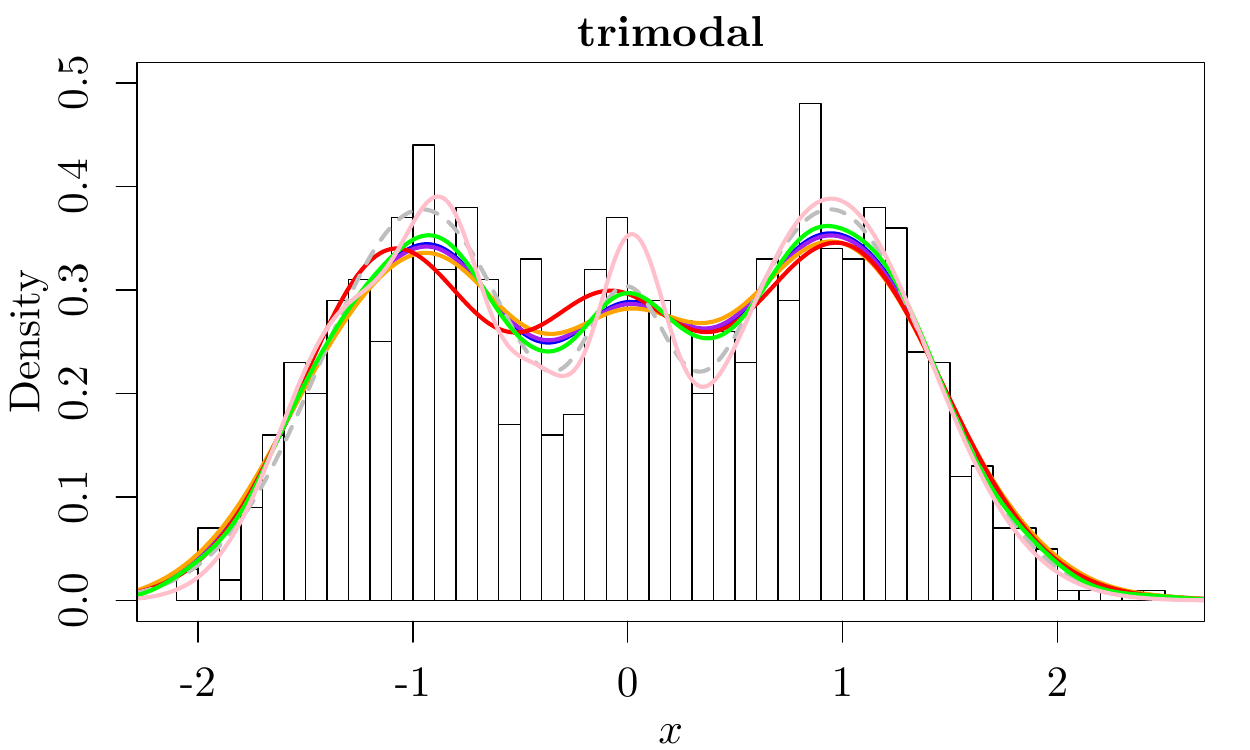}
\includegraphics[trim= {0.0cm 0.0cm 0.0cm 0.0cm}, clip,  
width=0.49\columnwidth]{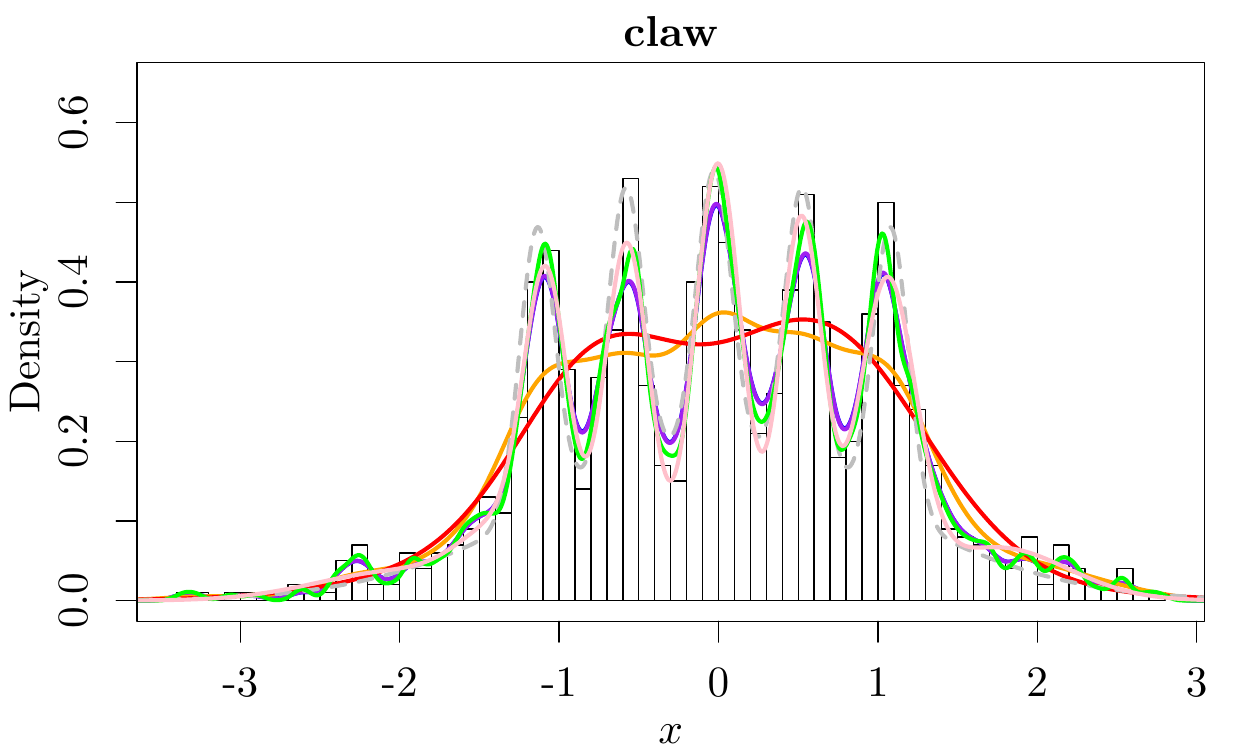}
\includegraphics[trim= {0.0cm 0.0cm 0.0cm 0.0cm}, clip,  
width=0.49\columnwidth]{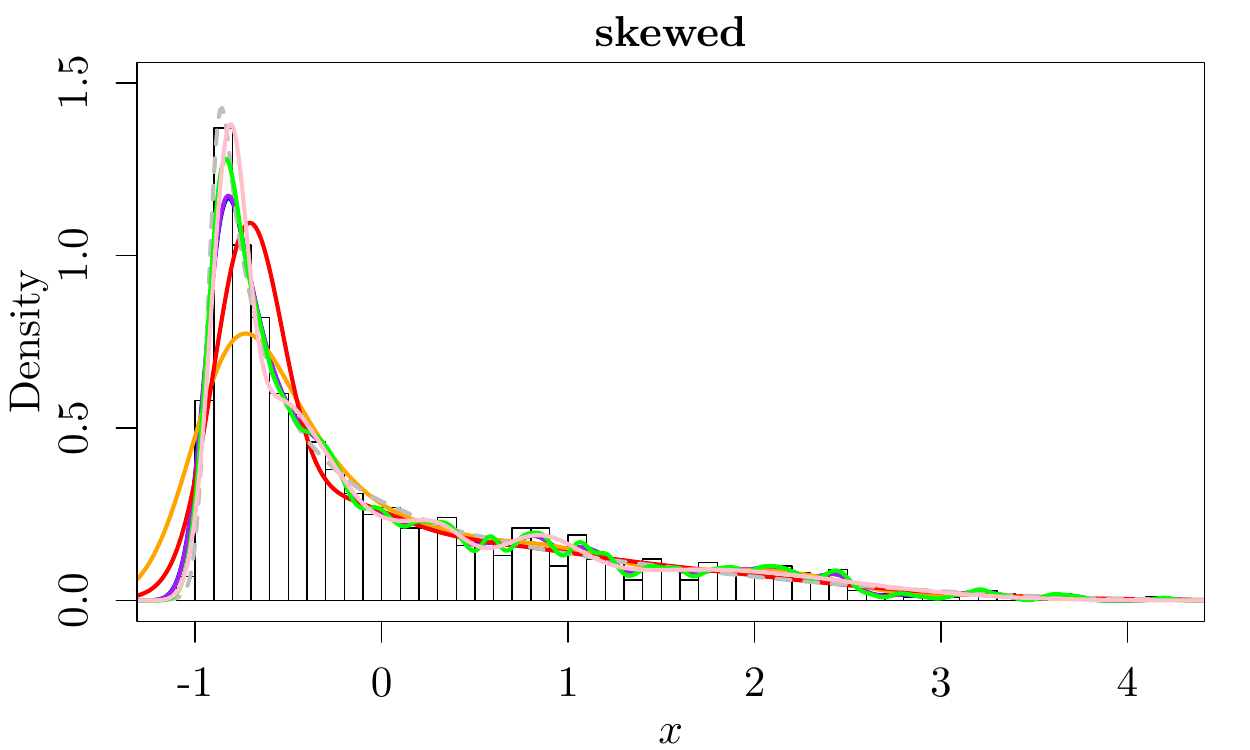}
\includegraphics[trim= {0.0cm 0.0cm 0.0cm 0.0cm}, clip,  
width=0.49\columnwidth]{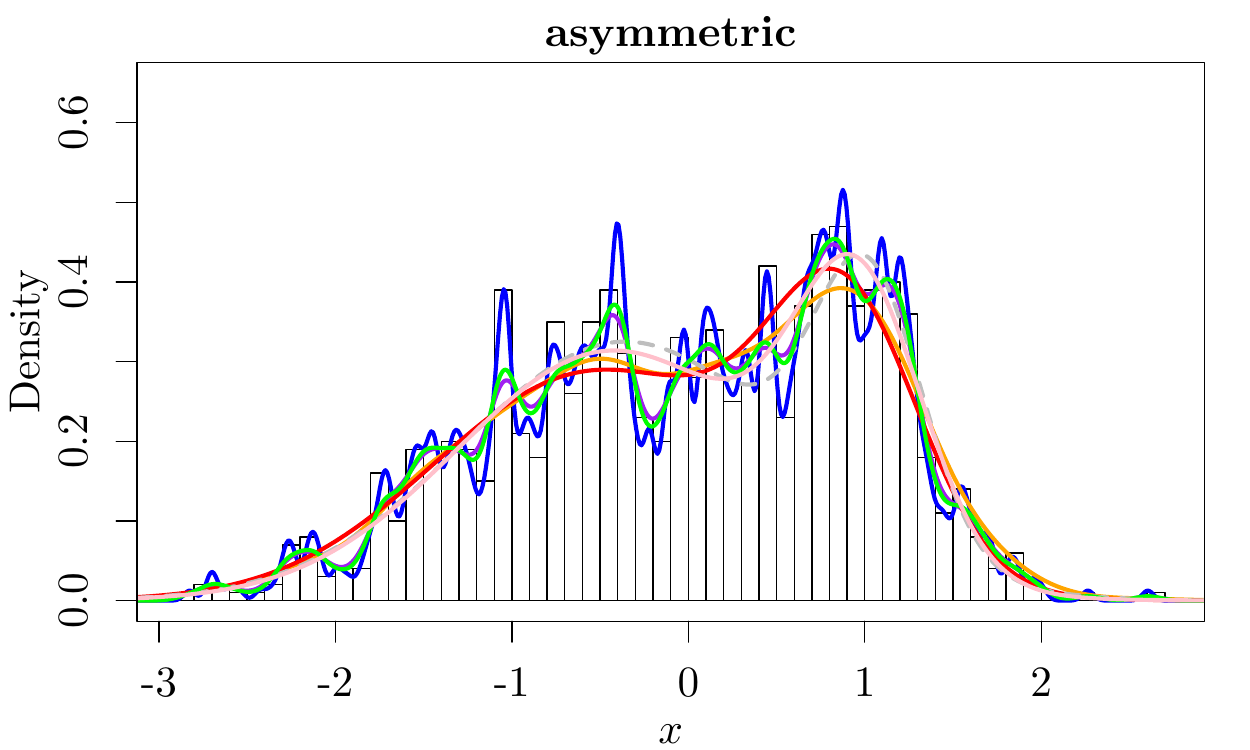}
\includegraphics[trim= {0.0cm 0.0cm 0.0cm 0.0cm}, clip,  
width=0.49\columnwidth]{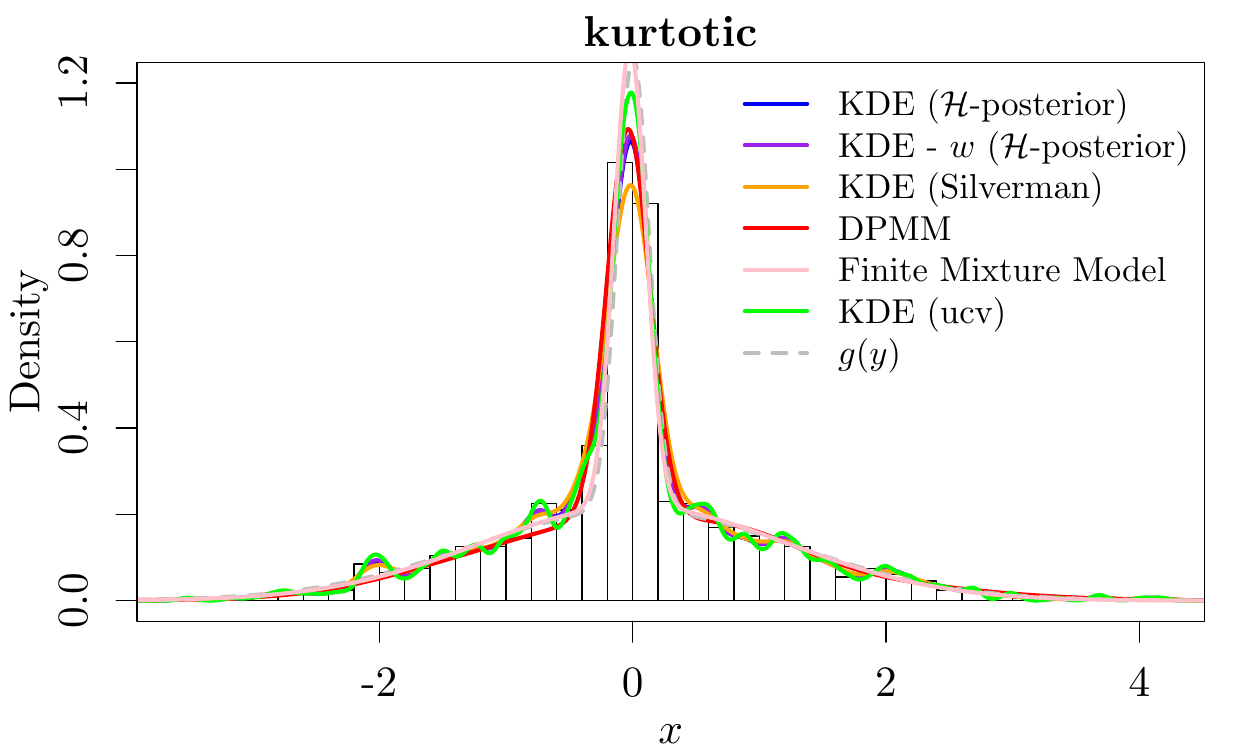}
\caption{Density estimation for Gaussian mixture data. Histograms of observed data, standardised to 0 mean and unit variance, and estimated density by   
\R's density function with the bandwidth rule of \cite{silverman1986density}, unbiased cross validated bandwidth estimation,  
\DPMM, Bayesian mixtures with marginal likelihood selection of components, the \Hposterior{} estimate with no tempering ($w = 1$),  
and with tempering  ($w \neq 1$). For Bayesian methods the density associated with MAP estimates is plotted.
}
\label{Fig:KDEEstimation2}
\end{center}
\end{figure}


\color{black}
\subsubsection{The \Hscore{} inference for multimodal datasets}{\label{App:Hscore_multimodal}}
We provide two extended experiments to show that Fisher's divergence/\hyvarinen scores' inability to learn the mixture weights for an assumed mixture model when the components are well-separated \citep{wenliang2020blindness} is not an issue in our Tukey's loss or kernel density estimation examples, as discussed in Section \ref{Sec:discussion}.

\paragraph{Robust Regression with Tukey's loss:} We extended the experiments conducted in Section \ref{ssec:robustness_efficiency_tradeoff}, estimating Tukey's loss to outlier contaminated data. Here we considered $g(y) = 0.9\mathcal{N}(y; 0, 1) + 0.1\mathcal{N}(y; \mathbf{10}, 3)$ with the outlying mean twice as far away from the uncontaminated mean as it was in Section \ref{ssec:robustness_efficiency_tradeoff}. The resulting $g(y)$ has an area of almost 0 density in between its two modes. The left hand side of Figure \ref{Fig:MarginalHscore_epsContamination_10} plots the $\mathcal{H}_2(y, \kappa_2)$ compared with an asymptotic approximation to the RMSE of $\hat{\beta}(\kappa_2)$ \citep{warwick2005choosing} (see Section \ref{App:asympt_MSE}) and the right hand side of Figure \ref{Fig:MarginalHscore_epsContamination_10} compares the fitted Gaussian and Tukey's loss densities to a histogram of the generated data 

For this example $\mathcal{H}_2(y, \kappa_2)$ is maximised at  $\kappa_2 = 7.5$ and similarly to the results in Section \ref{ssec:robustness_efficiency_tradeoff}, we see that the resulting approximation of Tukey's loss to the data captures its central component and excludes outliers. We further see that values of $\kappa_2$ that achieved high \Hscore's appear to also correspond to values of $\kappa_2$ whose Tukey's loss minimising estimates $\hat{\beta}(\kappa_2)$ have small RMSE. The separation of the modes appears not to have adverse effects on the \Hscore's performance.

\begin{figure}[h]
\begin{center}
\includegraphics[trim= {0.0cm 0.0cm 0.0cm 0.0cm}, clip,  
width=0.49\columnwidth]{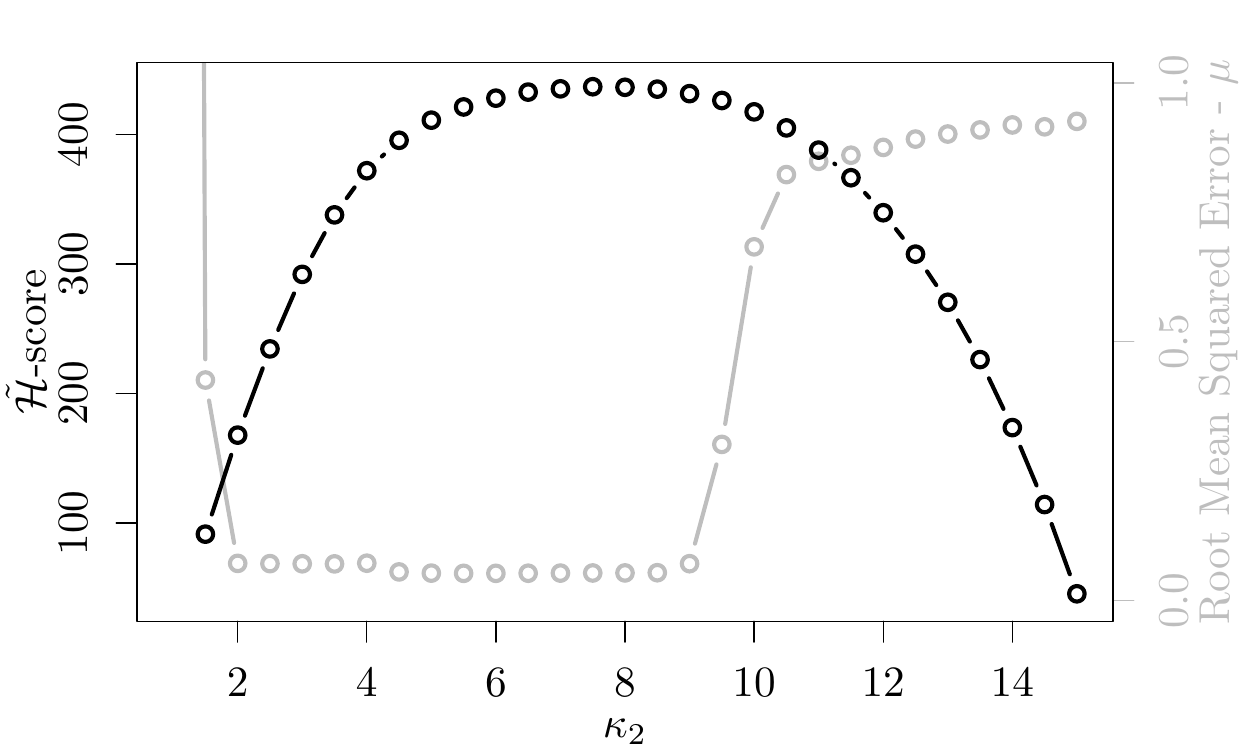}
\includegraphics[trim= {0.0cm 0.0cm 0.0cm 0.0cm}, clip,  
width=0.49\columnwidth]{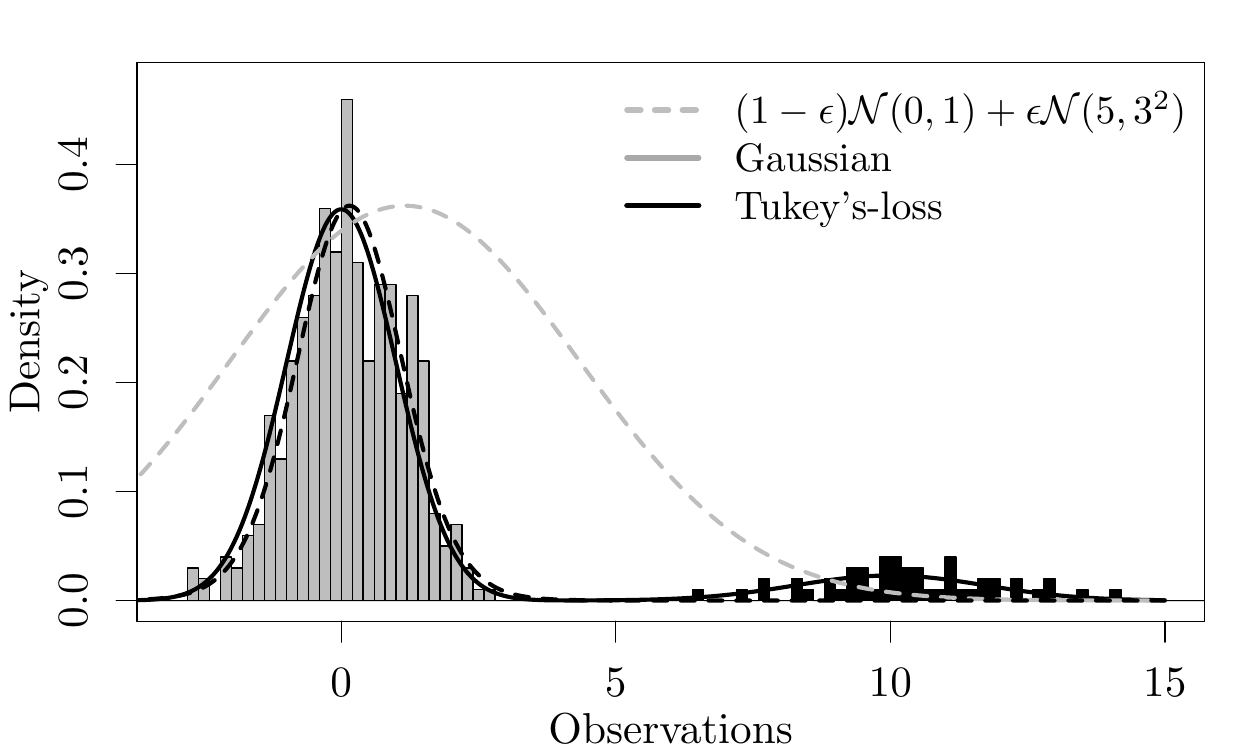}
\caption{\textbf{Left:} The integrated \Hscore{} $\mathcal{H}_2(y; \kappa_2)$ (black)
and asymptotic approximation to the RMSE of $\hat{\beta}(\kappa_2)$ (grey) for varying values of $\kappa_2$. 
\textbf{Right:} Histogram of $n = 500$ observations from $g(y) = 0.9\mathcal{N}(y;0, 1) + 0.1\mathcal{N}(y;\mathbf{10}, 3)$ and fitted densities of Tukey's loss with $\hat{\kappa}_2 = 7.5$ and the Gaussian model (the height of Tukey's loss was set to match the mode of $g(y)$). 
}
\label{Fig:MarginalHscore_epsContamination_10}
\end{center}
\end{figure}

\paragraph{Kernel density estimation} We extend the experiments in Section \ref{Sec:KDEGaussianMixtures}, estimating the bandwidth parameter for kernel density estimation. Here we consider dataset, \textit{Bimodal2}, generated from
\begin{align}
    g(y)= 0.5\mathcal{N}\left(y; -2, \frac{1}{\sqrt{6}}\right) + 0.5 \mathcal{N}\left(y; 2, \frac{1}{\sqrt{6}}\right)
    \nonumber
\end{align}
Importantly, data generated form \textit{Bimodal2} has two disjoint modes with areas of almost 0 density in between. Figure \ref{Fig:KDEEstimation3} plots a histogram of the generated data and the density estimated by the methods considered in Section \ref{Sec:KDEGaussianMixtures}. Table \ref{Tab:KDEEstimation3} presents the Fisher's divergence from each density estimate to the true generating $g(y)$. Similarly to Section \ref{Sec:KDEGaussianMixtures} we see that the \Hscore{} KDE performs comparably with standard methods when the tempering parameter $w$ is estimated alongside the bandwidth.

\begin{figure}[H]
\begin{center}
\includegraphics[trim= {0.0cm 0.0cm 0.0cm 0.0cm}, clip,  
width=0.49\columnwidth]{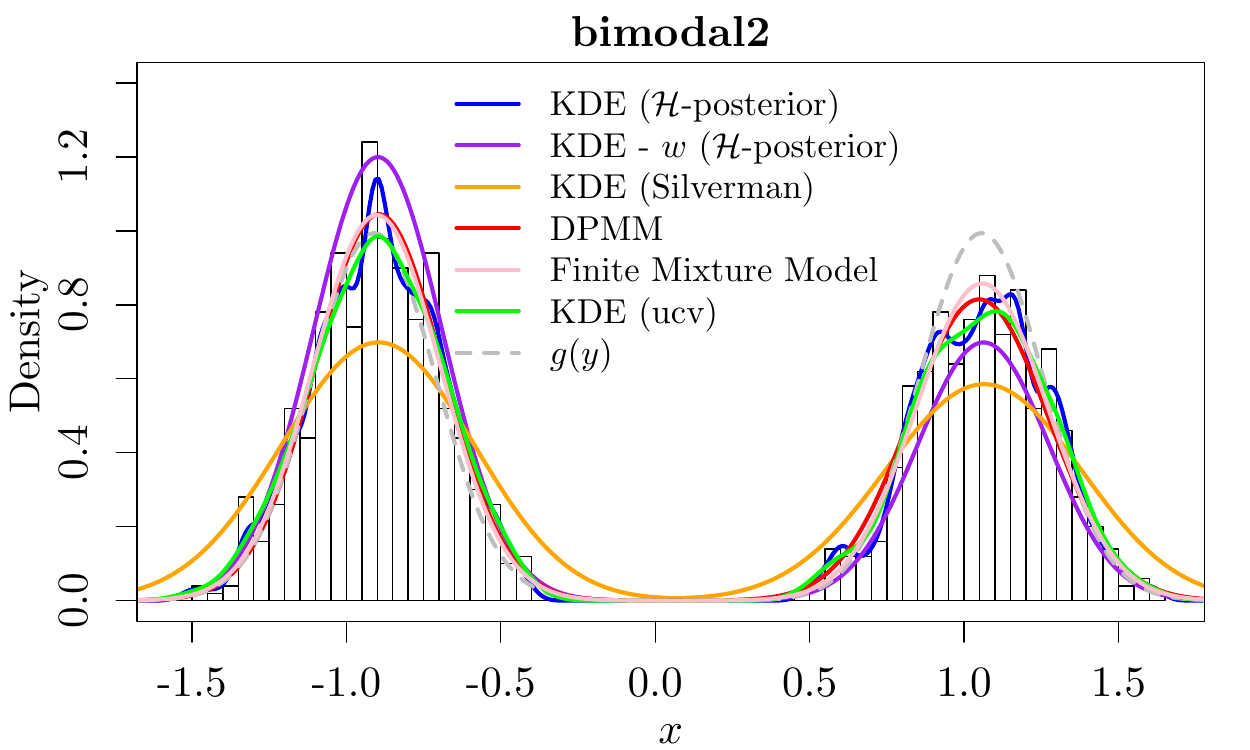}
\caption{Density estimation for Gaussian mixture data. Histograms of observed data, standardised to 0 mean and unit variance, and estimated density by \R's density function with the bandwidth rule of \cite{silverman1986density}, unbiased cross validated bandwidth estimation, \DPMM, Bayesian mixtures with marginal likelihood selection of components, the \Hposterior{} estimate with no tempering ($w = 1$), and with tempering  ($w \neq 1$). For Bayesian methods the density associated with MAP estimates is plotted.
}
\label{Fig:KDEEstimation3}
\end{center}
\end{figure}

\begin{table}[ht]
\centering
\caption{Fisher’s divergence between the density estimates and the data-generating Gaussian mixtures in two simulation scenarios.  The best performing method in each is highlighted in bold face.}
\begin{tabular}{rrrrrrr}
  \hline
 & \KDE & \KDE (ucv) & \DPMM & Finite Mixture Model & $\mathcal{H}$-KDE ($w = 1$) & $\mathcal{H}$-KDE ($w \neq 1$)\\ 
  \hline
   \textit{Bimodal2} & 8.36 & 2.40 & 0.62 & \textbf{0.23} & 29.27 & 0.24 \\ 
    \hline
\end{tabular}
\label{Tab:KDEEstimation3}
\end{table}

\color{black}

\end{document}